\newif\ifnotanonymous 
\tikzset{snake arrow/.style=
{->,
decorate,
decoration={snake,amplitude=.4mm,segment length=2mm,post length=1mm}},
}
\tikzset{
    probbox/.style={
           rectangle,
           rounded corners,
           draw=black, thick,
           inner xsep=10pt,
           minimum width=8em,
           minimum height=2em,
           text centered}
}
\newcommand{\mathsc}[1]{{\normalfont\textsc{#1}}}
\newcommand{\mathname}[1]{{\mathsc{#1}}}
\newcommand{\algoname}[1]{{\texttt{#1}}}
\newcommand{\Isogeny}{\ensuremath{\mathname{Isogeny}}}
\newcommand{\lIsogeny}{\ensuremath{\mathname{$\ell$-IsogenyPath}}}
\newcommand{\EndRing}{\ensuremath{\mathname{EndRing}}}
\newcommand{\MaxOrder}{\ensuremath{\mathname{MaxOrder}}}
\newcommand{\MaxEnd}{\ensuremath{\mathname{MOER}}}
\newcommand{\HomModule}{\ensuremath{\mathname{HomModule}}}
\newcommand{\OneEnd}{\ensuremath{\mathname{OneEnd}}}
\newcommand{\SSp}{\mathrm{SS}_p}
\newcommand{\dTV}{d_\mathrm{TV}}
\newcommand{\Z}{\mathbb{Z}}
\newcommand{\Q}{\mathbb{Q}}
\newcommand{\R}{\mathbb{R}}
\def\C{\mathbb{C}}
\newcommand{\F}{\mathbb{F}}
\DeclareMathOperator {\End}{End}
\DeclareMathOperator {\Hom}{Hom}
\DeclareMathOperator {\Nrd}{Nrd}
\DeclareMathOperator {\Trd}{Trd}
\DeclareMathOperator {\disc}{disc}
\DeclareMathOperator {\poly}{poly}
\DeclareMathOperator {\Aut}{Aut}
\DeclareMathOperator {\GL}{GL}
\DeclareMathOperator {\polylog}{polylog}
\spnewtheorem*{remark_star}{Remark}{\itshape}{\normalfont}
\begin{document}

\title{Unconditional foundations for\\ supersingular isogeny-based cryptography}
\titlerunning{Unconditional foundations for isogeny-based cryptography}

\ifnotanonymous
\author{Arthur Herlédan Le Merdy \orcidID{0009-0007-6116-6863}  \and \\  Benjamin Wesolowski \orcidID{0000-0003-1249-6077}}
\authorrunning{A. Herlédan Le Merdy and B. Wesolowski}

\institute{ENS de Lyon, CNRS, UMPA, UMR 5669, Lyon, France}
\fi

\index{Herlédan Le Merdy, Arthur}
\index{Wesolowski, Benjamin}
\maketitle

\begin{abstract}

In this paper, we prove that the supersingular isogeny problem ($\Isogeny$), endomorphism ring problem ($\EndRing$) and maximal order problem ($\MaxOrder$) are equivalent under probabilistic polynomial time reductions, unconditionally. 

Isogeny-based cryptography is founded on the presumed hardness of these problems, and their interconnection is at the heart of the design and analysis of cryptosystems like the SQIsign digital signature scheme.
Previously known reductions relied on unproven assumptions such as the generalized Riemann hypothesis.
In this work, we present unconditional reductions, and extend this network of equivalences to the problem of computing the lattice of all isogenies between two supersingular elliptic curves (\HomModule).

For cryptographic applications, one requires computational problems to be hard \emph{on average} for random instances. It is well-known that if \Isogeny\ is hard (in the worst case), then it is hard for random instances. We extend this result by proving that if any of the above-mentionned classical problems is hard in the worst case, then all of them are hard on average. In particular, if there exist hard instances of \Isogeny, then all of \Isogeny, \EndRing, \MaxOrder\ and \HomModule\ are hard on average.

\end{abstract}

\keywords{Isogeny-based cryptography \and Cryptanalysis \and Endomorphism ring \and Isogeny path \and Supersingular elliptic curve} 

\begin{section}{Introduction}

\noindent
Isogeny-based cryptography, a branch of post-quantum cryptography, rests on the presumed hardness of a few interconnected computational problems: variations around the supersingular \emph{isogeny problem} ($\Isogeny$) or the \emph{endomorphism ring problem} ($\EndRing$). A collection of ``fundamental problems'' has grown with our understanding of the field and with the needs of new cryptosystems. Some, like $\OneEnd$ are well-suited for security proofs, their hardness serving as a lower bound on the security of cryptographic schemes. Others, like $\EndRing$, are better suited for attacks, thereby serving as upper bounds on the security. And some, like $\MaxOrder$, reframe these problems in a radically different language, deepening our understanding of the field.

Connecting these problems, proving computational reductions, or even equivalences, has thus become a central and fruitful line of research.
Most notably, the equivalence between the isogeny problem and the endomorphism ring problem~\cite{EC:EHLMP18,FOCS:Wesolowski21} motivated the design of \emph{SQIsign}~\cite{AC:DKLPW20}, today the most compact post-quantum digital signature scheme.

While all these fundamental problems are considered to be equivalent, only few of the computational reductions linking them are fully, unconditionally proven. Almost all previously-known results rely on an unproven assumption: the generalized Riemann hypothesis (GRH).
In this paper, we prove that all the aforementioned problems, and more, are in fact equivalent under classical, probabilistic polynomial time reductions, unconditionally.

\begin{subsection}{Contribution}
The main results of this paper are Theorem~\ref{theo:everything-is-equivalent} and Theorem~\ref{theo:worst-case-to-average-case} below. Theorem~\ref{theo:everything-is-equivalent}, summarized in Figure~\ref{fig_generalcase}, establishes the \emph{unconditional} equivalence of fundamental problems of isogeny-based cryptography.
Theorem~\ref{theo:worst-case-to-average-case} establishes their average-case hardness: if any of them is hard in the worst case, then all of them are hard on average, for uniformly random instances.

Let us start by informally introducing the computational problems at hand. Formal definitions are provided in Section~\ref{subsec:problem definitions}. The central objects of interest are so-called \emph{supersingular elliptic curves}. \emph{Isogenies} are morphisms between elliptic curves. \emph{Endomorphisms} of an elliptic curve $E$ are isogenies from $E$ to itself; they form a ring, written $\End(E)$.
\begin{itemize}
\item \OneEnd\ (the One Endomorphism problem): Given a supersingular elliptic curve $E$, find an endomorphism that is not a scalar multiplication, i.e., an element of $\End(E) \setminus \Z$. 
\item \EndRing\ (the Endomorphism Ring problem): Given a supersingular elliptic curve $E$, compute a basis of the endomorphism ring $\End(E)$.
\item \Isogeny\ (the Isogeny problem): Given two supersingular elliptic curves $E$ and $E'$, find an isogeny from $E$ to $E'$.
\item \lIsogeny\ (the $\ell$-Isogeny Path problem): Given two supersingular elliptic curves $E$ and $E'$, and a prime number $\ell$, find an isogeny $E \to E'$ of degree a power of $\ell$ (i.e., an $\ell$-isogeny path from $E$ to $E'$).
\item \HomModule\ (the Homomorphism Module problem): Given two supersingular elliptic curves $E$ and $E'$, compute a basis of the lattice $\Hom(E, E')$ of all isogenies from $E$ to $E'$. This problem has received very little attention so far. It appears to have never been formally introduced in the literature, yet has implicitly played a role.
\item \MaxOrder\ (the Maximal Order problem): Given a supersingular elliptic curve $E$, find some ``abstract ring'' $\mathcal O$ isomorphic to $\End(E)$. More precisely, $\End(E)$ is known to be isomorphic to a \emph{maximal order} $\mathcal O$ in a quaternion algebra $B_{p,\infty}$.
The $\MaxOrder$ problem asks to find an order in $B_{p,\infty}$ isomorphic to $\End(E)$.
To resolve an ambiguity in previous literature (see \Cref{sec:maxorder}), we formalize two variants: \MaxOrder, where the solver is free to choose his own model for $B_{p,\infty}$, and $\MaxOrder_\mathcal Q$, where the solution has to be in a model specified by an algorithm $\mathcal Q$.
\item \MaxEnd\ (the Maximal Order and Endomorphism Ring problem): Given a supersingular elliptic curve $E$, compute a basis of the endomorphism ring $\End(E)$, together with an isomorphism with an order in the quaternion algebra $B_{p,\infty}$.

\end{itemize}

\begin{figure}[h]
\centering

\begin{tikzpicture}

\node [probbox] (lIsogeny) at (5,-2.5) {\lIsogeny};
\node [probbox] (Isogeny) at (1,-2.5) {\Isogeny};
\node [probbox] (EndRing) at (1,0) {\EndRing};
\node [probbox] (OneEnd) at (-3,0) {\OneEnd};
\node [probbox] (MOER) at (5,0) {\MaxEnd};
\node [probbox] (MaxOrder) at (1,2.5) {\MaxOrder};
\node [probbox] (HomModule) at (-3,-2.5) {\HomModule};
\node [probbox] (MaxOrderQ) at (5,2.5) {$\MaxOrder_\mathcal{Q}$};

\draw [->,>=latex,very thin] (Isogeny) -- (lIsogeny);
\draw [->,>=latex,very thick] (HomModule) to[bend right = 10] node[below,midway] {\Cref{prop:HomModule-to-Isogeny}} (Isogeny);
\draw [->,>=latex,very thin] (Isogeny) to[bend right = 10] (HomModule);
\draw [->,>=latex, very thin] (Isogeny) to[bend right = 20] node[right,midway] {\Cref{prop:Isogeny-to-MOER}} (MOER) ;

\draw [->,>=latex,very thin] (OneEnd) to[bend right = 20] node[midway,left]{\cite{EC:PagWes24}} (Isogeny);
\draw [->,>=latex,very thin] (OneEnd) to[bend left = 10] (EndRing) {};
\draw [->,>=latex,very thick] (EndRing) to[bend left = 10] node[midway,below]{\cite{EC:PagWes24}} (OneEnd);
\draw [<-,>=latex, very thick] (MaxOrder) to[bend right = 20] node[left,midway] {\Cref{prop:OneEndtoMaxOrder}} (OneEnd);

\draw [->,>=latex, very thin] (EndRing) to[bend left = 10] (MOER);
\draw [->,>=latex, very thin] (MOER) to[bend left = 10] node[below,midway] {\Cref{prop:MOER-to-EndRing}} (EndRing);

\draw [->,>=latex, very thin] (MaxOrder) to[bend right = 10] (MOER);
\draw [->,>=latex, very thin] (MaxOrder) to[bend left = 10] (MaxOrderQ);
\draw [->,>=latex, very thin] (MaxOrderQ) to[bend left = 10] node[below,midway] {\Cref{prop:maxorder-equiv-maxorderQ}} (MaxOrder);

\draw [->, thick, {snake arrow}] (EndRing) to (OneEnd);

\end{tikzpicture}

\caption{ Summary of the relations between fundamental isogeny-based problems.
\label{fig_generalcase}
All arrows are unconditional classical polynomial time reductions.
Thin arrows have a $O(1)$ query-complexity, and thick arrows have a $\polylog(p)$ query-complexity.
Reductions with no reference are trivial, and all others are proved in the associated reference.
Reductions involving $\MaxOrder_\mathcal Q$ require oracle access to $\mathcal Q$.
The snake arrow corresponds to a quantum reduction with $O(1)$ query-complexity; this result was proven in~\cite[Theorem 5]{C:ABDPW25}.}
\end{figure}
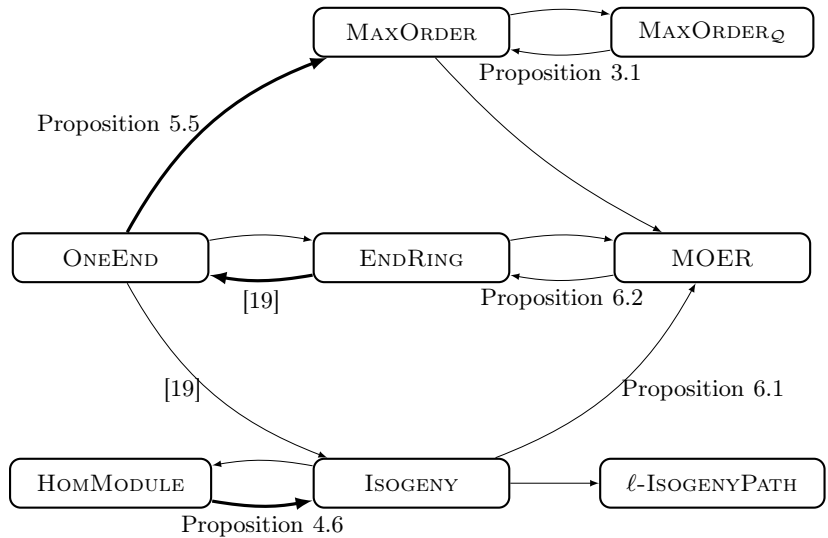

\begin{theorem}
\label{theo:everything-is-equivalent}
The problems \Isogeny, \EndRing, \MaxOrder, $\MaxOrder_\mathcal{Q}$, \HomModule, \OneEnd\ and \MaxEnd,  are all equivalent under probabilistic polynomial time reductions.
Reductions involving $\MaxOrder_\mathcal{Q}$ require oracle access to $\mathcal{Q}$.
\end{theorem}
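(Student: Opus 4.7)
This theorem asserts exactly that the directed graph of reductions drawn in Figure~\ref{fig_generalcase} is strongly connected, so the plan is to exhibit, for every ordered pair of problems on the list, a chain of arrows that composes into a classical probabilistic polynomial time reduction. I would structure this as a single cycle through four ``core'' problems and then attach each of the three remaining problems by a pair of short reductions. Chaining a constant number of probabilistic polynomial time reductions stays in probabilistic polynomial time, so no complexity bookkeeping is needed beyond checking that the length of each composed chain is $O(1)$.

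For the central cycle, I take
\[
  \OneEnd \;\longrightarrow\; \Isogeny \;\longrightarrow\; \MaxEnd \;\longrightarrow\; \EndRing \;\longrightarrow\; \OneEnd,
\]
where the two arrows adjacent to $\OneEnd$ are supplied by~\cite{EC:PagWes24}, $\Isogeny \to \MaxEnd$ is Proposition~\ref{prop:Isogeny-to-MOER}, and $\MaxEnd \to \EndRing$ is Proposition~\ref{prop:MOER-to-EndRing}. This already yields the equivalence of $\OneEnd$, $\Isogeny$, $\EndRing$ and $\MaxEnd$. I would then attach the three remaining nodes. For $\HomModule$, Proposition~\ref{prop:HomModule-to-Isogeny} gives $\HomModule \to \Isogeny$, while the converse $\Isogeny \to \HomModule$ is trivial (a basis of $\Hom(E,E')$ in particular contains an isogeny). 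For $\MaxOrder$, Proposition~\ref{prop:OneEndtoMaxOrder} gives $\OneEnd \to \MaxOrder$, and the return direction is obtained by chaining the trivial $\MaxOrder \to \MaxEnd$ (forgetting the ambient quaternion model) with $\MaxEnd \to \EndRing \to \OneEnd$ from the central cycle. Finally, for $\MaxOrder_\mathcal{Q}$, the direction $\MaxOrder \to \MaxOrder_\mathcal{Q}$ is trivial (the solver of $\MaxOrder$ may simply query $\mathcal{Q}$ and adopt the resulting model), and the converse is Proposition~\ref{prop:maxorder-equiv-maxorderQ}.

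The theorem itself is therefore essentially a bookkeeping corollary of the propositions referenced in Figure~\ref{fig_generalcase}, and the genuine difficulty lies upstream. The main obstacle will be to establish Propositions~\ref{prop:Isogeny-to-MOER}, \ref{prop:OneEndtoMaxOrder} and~\ref{prop:HomModule-to-Isogeny} \emph{unconditionally}: earlier versions of these reductions in the literature rely on GRH to guarantee the existence or findability of smooth auxiliary isogenies, of primes with prescribed splitting behaviour, or of elements of prescribed norm in quaternion orders. Replacing each such GRH-conditional subroutine by an unconditional alternative, presumably via the uniform-walk, factorised-sampling and special-curve machinery foreshadowed in the preamble, is where essentially all of the technical effort must be spent; once those ingredients are in place, the assembly above is immediate.
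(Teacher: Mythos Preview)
Your overall plan is correct and matches the paper's approach: the theorem is indeed assembled by following the arrows of Figure~\ref{fig_generalcase}, and your choice of central cycle $\OneEnd \to \Isogeny \to \MaxEnd \to \EndRing \to \OneEnd$ with the three remaining problems attached is exactly how the paper organises the argument.

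Two small slips are worth flagging. First, in your justification of the trivial reduction $\MaxOrder \to \MaxEnd$, the parenthetical is backwards: given a $\MaxEnd$ oracle one solves $\MaxOrder$ by \emph{keeping} the quaternion model and \emph{discarding} the endomorphism basis and the explicit isomorphism, not the other way round. The reduction is still trivial and in the right direction, so this does not break anything. Second, your closing speculation about ``uniform-walk, factorised-sampling and special-curve machinery'' is off target: those macros from the preamble are never used, and the actual unconditional content of Propositions~\ref{prop:OneEndtoMaxOrder}, \ref{prop:Isogeny-to-MOER} and~\ref{prop:HomModule-to-Isogeny} rests instead on a local isogeny/ideal correspondence built from neighbouring $\MaxOrder$ queries, on the unconditional \algoname{IdealToIsogeny} of Proposition~\ref{prop:ideal-to-isogeny}, and on the algebraic identity of Proposition~\ref{prop:two-endrings-plus-inseparable-isogeny}. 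Since that paragraph is commentary rather than proof, it does not affect the validity of your argument for Theorem~\ref{theo:everything-is-equivalent}.
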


\subsubsection*{Map of the proof of \protect{Theorem~\ref{theo:everything-is-equivalent}}}
The strategy consists in proving the computational reductions exhibited in Figure~\ref{fig_generalcase}.
Each arrow represents a computational reduction, and comes with a pointer to the proof.
Note that our unconditional reductions are substantially different from the existing conditional reductions. To eliminate any reliance on GRH, we avoid all arguments that rely on the ``good'' distribution of numbers represented by quadratic forms. This forbids us from using the powerful tools of KLPT-type algorithms~\cite{kohel_quaternion_2014}. In particular, we need to construct different ``paths'' in the network of reductions, and develop new types of arguments. We prove the reductions in the following order.
\begin{itemize}
\item The novel distinction between the two computational problems $\MaxOrder$ and $\MaxOrder_{\mathcal Q}$ is discussed in Section~\ref{sec:maxorder}. Their equivalence, proved in Proposition~\ref{prop:maxorder-equiv-maxorderQ}, hinges on a recent result \cite[Proposition 4.1]{csahok_explicit_2022} to compute isomorphisms between quaternion algebras, when some maximal orders are known in each.
\item The \HomModule\ problem is the object of Section~\ref{sec:hommod}, where we prove that it reduces to \Isogeny. Given two curves $E_1$ and $E_2$, the strategy is the following. First, we exploit the reduction from $\EndRing$ to $\Isogeny$ proved in \cite{EC:PagWes24} to compute bases of $\End(E_1)$ and $\End(E_2)$. Then, we solve \Isogeny\ again to find some $\varphi : E_1 \to E_2$. Through algebraic arguments, we prove that one can extract a basis of $\Hom(E_1,E_2)$ from the data of $\End(E_1)$, $\End(E_2)$, and $\varphi$.
\item The reduction from $\OneEnd$ to $\MaxOrder$ is the object of Section~\ref{sec:oneend to maxorder}. Navigating between a problem which deals with endomorphisms (like \OneEnd) and another which deals with purely quaternionic data (like \MaxOrder) typically requires to connect instances to some special elliptic curve $E_0$ for which both $\End(E_0)$ and its embedding in the quaternions are already known. This curve $E_0$ provides an ``endomorphism/quaternion'' dictionary. Without GRH, there is no guarantee that a special curve $E_0$ can be found. We thus need to develop a new strategy. To reduce $\OneEnd$ (say on some input $E$) to $\MaxOrder$, we solve $\MaxOrder$ on $E$ and on a few ``close neighbours'' of $E$. Doing so, we construct a ``local'' correspondence between neighbours of $E$ and quaternionic orders, and we prove that from enough such ``local'' information, we can reconstruct a full ``endomorphism/quaternion'' dictionary.
\item The reduction from $\Isogeny$ to $\MaxEnd$ is the object of Proposition~\ref{prop:Isogeny-to-MOER}. This reduction hinges on recent advances in isogeny-based cryptography facilitating the conversion of ideals in quaternionic orders into the corresponding isogenies~\cite{EPRINT:PagRob23}.
\item Finally, the reduction from $\MaxEnd$ to $\EndRing$ is the object of Proposition~\ref{prop:MOER-to-EndRing}. Of all the reductions, this one resembles the most closely an existing reduction: the reduction from $\MaxOrder$ to $\EndRing$ in~\cite{EC:EHLMP18,FOCS:Wesolowski21}. However, these former reductions required GRH to provably avoid hard factorisations. Instead, we show that no factorisation is needed if we are free to choose our own model $(\frac{a,b}{\mathbb{Q}})$ for the quaternion algebra. The parameters $a$ and $b$ are possibly hard to factor, but it does not matter: the result \cite[Proposition 4.1]{csahok_explicit_2022} allows one to convert the solution to more standard models without factoring.
\end{itemize}

\begin{remark_star}
	The different reduction paths presented in Figure~\ref{fig_generalcase} are not unique but are sufficient to establish the unconditional equivalence between the various problems.
	There are probably alternative paths improving the query complexity of certain reductions.
	In particular, we acknowledge one of our reviewers for suggesting a possible reduction from \HomModule\ to \MaxEnd\ with $O(1)$ query-complexity.
\end{remark_star}

\subsubsection*{Discussion of \protect{Theorem~\ref{theo:everything-is-equivalent}} and comparison with previous work.}
The former state of the art is summarized in Figure~\ref{figure:stateoftheart}.
We make the following observations.
\begin{itemize}
\item The \HomModule\ problem is absent from Figure~\ref{figure:stateoftheart}: its relation to other problems was never studied before.
\item The \MaxEnd\ problem is also absent, yet it is folklore that, assuming GRH, the reductions of~\cite{FOCS:Wesolowski21} also extend to \MaxEnd.
\item Reflecting previous litterature, Figure~\ref{figure:stateoftheart} makes no distinction between the problems $\MaxOrder$ and $\MaxOrder_{\mathcal Q}$. Indeed, this distinction is only useful if one refuses to believe in GRH (see Section~\ref{sec:maxorder}).
\item There remains one reduction which is only known conditionally on GRH: the reduction from $\lIsogeny$ to $\EndRing$ (or to any other problem in our list). Indeed, by definition, $\lIsogeny$ asks to find isogenies with degree of a prescribed form. The study of isogenies of prescribed degree closely relates to the study of integers represented by certain quadratic forms. GRH has a consequential impact on the distribution of integers represented by quadratic forms, and currently known unconditional results seem insufficient for the study of $\lIsogeny$.\footnote{Note that an attempt at replacing the GRH assumption with a factoring oracle is presented in~\cite{Mamah24}. A mistake in the proof has been reported, but if it can be fixed, it would link $\lIsogeny$ to the other problems under unconditional polynomial time \emph{quantum} reductions.}
\end{itemize}

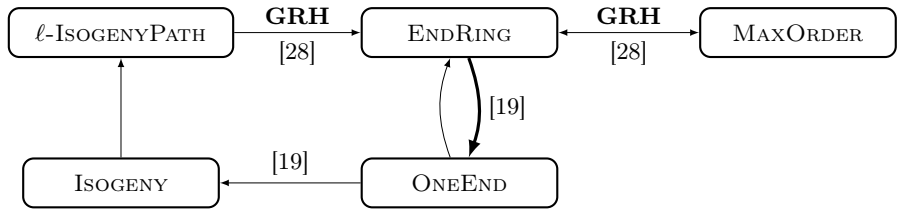
\begin{figure}[h]
	\begin{tikzpicture}
	\node [probbox] (Isogeny) at (0.5,7) {\Isogeny};
	\node [probbox] (lIsogeny) at (0.5,9) {\lIsogeny};
	\draw [->,>=latex,very thin] (Isogeny) -- (lIsogeny);
	
	\node [probbox] (EndRing) at (5,9) {\EndRing};

	\node [probbox] (MaxOrder) at (9.5,9) {\MaxOrder};
	\draw [<->,>=latex,very thin] (EndRing) -- (MaxOrder) node[above,midway, sloped]  {\textbf{GRH}} node[below,midway,sloped] {\cite{FOCS:Wesolowski21}};

	\node [probbox] (OneEnd) at (5,7) {\OneEnd};
	\draw [->,>=latex,very thin] (lIsogeny) --  (EndRing) node[above,midway, sloped] {\textbf{GRH}} node[below,midway,sloped] {\cite{FOCS:Wesolowski21}};
	\draw [->,>=latex, very thin] (OneEnd.west) -- node[midway,above,sloped]{\cite{EC:PagWes24}} (Isogeny.east);
	\draw [->,>=latex, very thin] (OneEnd) to[bend left = 20] (EndRing);
	\draw [<-,>=latex, very thick] (OneEnd) to[bend right = 20] node[midway,right]{\cite{EC:PagWes24}} (EndRing) ;

	\end{tikzpicture}
	\caption{\label{figure:stateoftheart}Former state of the art of (conditional) reductions between foundational problems of isogeny-based cryptography. All arrows are classical polynomial time reductions.
Thin arrows have a $O(1)$ query-complexity, and the thick arrow has a $\polylog(p)$ query-complexity.
Reductions with no reference are trivial, and all others are proved in the associated reference.
The \textbf{GRH} label signifies that a reduction assumes the Generalized Riemann Hypothesis.}
\end{figure}

\Cref{theo:everything-is-equivalent} implies that if hard instances exist for any one of the listed problems, then hard instances must exist for all of them. However, the security of isogeny-based schemes typically relies on the presumed hardness of \emph{random instances} of these problems.
These random instances often follow a ``natural'' distribution called the \emph{stationary distribution} (see \Cref{def:stationary} --- note that it is statistically indistinguishable from the uniform distribution). 
This distribution emerged from the use of random walks as early as the Charles--Goren--Lauter hash function~\cite{JC:ChaLauGor09}, and up to the latest advances on the SQIsign digital signature scheme~\cite{EC:DLRW24,AC:BDDLMP24}.
We are thus interested in the hardness of the \emph{average case} of the fundamental problems (see \Cref{def:averagecase}), with respect to the stationary (or uniform) distribution.

The following \Cref{theo:worst-case-to-average-case} says that there are \emph{worst-case to average-case} reductions between all of these problems. In particular, if there exists even a single hard instance for any of the listed problems, then all of the problems are hard on average --- a powerful statement for security analysis.

\begin{theorem}
\label{theo:worst-case-to-average-case}
For any pair of problems $(P,Q)$ chosen from the problems  \MaxEnd, \EndRing, \Isogeny, \OneEnd, \MaxOrder, $\MaxOrder_\mathcal{Q}$, \HomModule, and \lIsogeny\ there exists a probablistic polynomial time worst-case to average-case reduction from $P$ to $Q$.
All reductions hold unconditionally, with the two following exceptions which require the generalized Riemann hypothesis:

\begin{itemize}
	\item if $P = \lIsogeny$, or
	\item if $Q \in \{\MaxOrder, \MaxOrder_\mathcal{Q}\}$ and $p \equiv 1 \bmod 8$.
\end{itemize}
Reductions involving $\MaxOrder_\mathcal{Q}$ require oracle access to $\mathcal{Q}$.
\end{theorem}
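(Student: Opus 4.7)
The plan is to reduce the theorem to a single self-reducibility result for \Isogeny\ and then propagate it to all other problems via Theorem~\ref{theo:everything-is-equivalent}. First I would establish the self-reducibility of \Isogeny: worst-case \Isogeny\ reduces to average-case \Isogeny\ (for the stationary distribution on pairs). The proof is classical and uses random walks in the supersingular isogeny graph of some small prime degree $\ell$. Given a worst-case input $(E_1,E_2)$, perform walks $\varphi_i : E_i \to E_i'$ of length $O(\log p)$; by the spectral gap of the graph, each $E_i'$ is within total variation distance $1/\poly(p)$ of the stationary distribution on $\SSp$, and hence $(E_1',E_2')$ is close to the product distribution. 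Querying the average-case oracle on $(E_1',E_2')$ and composing the returned $\psi$ with the walks yields an isogeny $E_1 \to E_2$; standard amplification boosts the success probability.

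Next, for any pair $(P,Q)$ outside the exceptions I would compose three reductions: (i) worst-case $P$ to worst-case \Isogeny, obtained directly from Theorem~\ref{theo:everything-is-equivalent}; (ii) worst-case \Isogeny\ to average-case \Isogeny, by the self-reducibility above; (iii) average-case \Isogeny\ to average-case $Q$, by invoking the reductions of Theorem~\ref{theo:everything-is-equivalent} while checking that each transforms a stationary-distributed input into stationary-distributed oracle queries. Most of the reductions in Figure~\ref{fig_generalcase} are ``input preserving'' (the oracle is called on the same curve(s) as the input, e.g.\ $\EndRing \to \MaxOrder$, $\MaxEnd \leftrightarrow \EndRing$), so the distribution is preserved for free. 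For the few reductions that introduce fresh curves (bridging single-curve and pair-of-curves problems), I would interleave additional random walks so that the queries fed to the $Q$-oracle are themselves stationary-distributed.

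For the exceptional cases: when $P = \lIsogeny$, step (i) fails because \lIsogeny\ is \emph{not} in the unconditional equivalence class of Theorem~\ref{theo:everything-is-equivalent} (the reduction $\lIsogeny \to \Isogeny$ being known only under GRH, as discussed after Theorem~\ref{theo:everything-is-equivalent}), and GRH is therefore needed to bring the worst case of \lIsogeny\ back into the chain. When $Q \in \{\MaxOrder, \MaxOrder_\mathcal{Q}\}$ and $p \equiv 1 \pmod 8$, it is step (iii) that breaks: the only reductions we have ending at $\MaxOrder$ go through Proposition~\ref{prop:OneEndtoMaxOrder}, which calls $\MaxOrder$ on a structured family of close neighbours of its input curve; in this congruence class the existence of suitable neighbours with stationary-distributed labels cannot be certified without GRH, and one needs GRH to either randomise the queries or circumvent the neighbour construction.

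The main obstacle is step (iii). Checking input-preservation for each reduction of Theorem~\ref{theo:everything-is-equivalent} is routine, but the reductions that mix single-curve and pair problems require a careful insertion of random walks with matched parameters, and bookkeeping to ensure that the resulting composite reduction remains probabilistic polynomial time and that the induced query distribution stays $1/\poly(p)$-close to stationary. The $p \equiv 1 \pmod 8$ exception is precisely the place where this bookkeeping fails uniformly, and isolating it cleanly is the heart of the proof.
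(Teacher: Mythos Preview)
Your overall architecture (reduce worst-case $P$ to a central hub, self-reduce the hub, then fan out to average-case $Q$) is the same as the paper's, but the paper uses $\OneEnd$ as the hub rather than $\Isogeny$, and---crucially---it does \emph{not} attempt your step~(iii). Instead of checking that the worst-case reductions of Theorem~\ref{theo:everything-is-equivalent} preserve the stationary distribution, the paper proves three \emph{direct} reductions from worst-case $\OneEnd$ to the average case of three separate hubs ($\Isogeny$, $\OneEnd$, $\MaxOrder$), and then fans out using only the trivial ``same input'' reductions among average-case problems (e.g.\ $\Isogeny$ AC $\to$ $\HomModule$ AC, $\OneEnd$ AC $\to$ $\EndRing$ AC). This sidesteps the bookkeeping you identify as the main obstacle.

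The real gap in your plan is step~(iii) when $Q\in\{\MaxOrder,\MaxOrder_{\mathcal Q}\}$. The only reduction in Theorem~\ref{theo:everything-is-equivalent} that \emph{ends} at $\MaxOrder$ is Proposition~\ref{prop:OneEndtoMaxOrder}, which calls the $\MaxOrder$ oracle on the input curve $E$ \emph{and on every $\ell$-neighbour of $E$} for several small primes $\ell$. These neighbour queries are deterministic functions of $E$; you cannot ``interleave additional random walks'' to make them stationary, because the correctness of the reduction depends precisely on the queried curves being the $\ell$-neighbours of $E$. So your proposed fix does not apply here. The paper handles this case with an entirely different reduction (Proposition~\ref{prop:MOERWCtoMaxOrderAC}): one constructs a special curve $E_0$ with a known $\MaxEnd$ solution, random-walks from the input $E$ to a stationary $E'$, calls $\MaxOrder$ once on $E'$, and then transports everything back through $E_0$. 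Your diagnosis of the GRH exception is therefore off: GRH is needed not to ``certify neighbours with stationary-distributed labels'' (the neighbour issue is equally bad for all $p$), but because constructing the auxiliary curve $E_0$ is only known unconditionally when $p\not\equiv 1\bmod 8$.
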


\subsubsection*{Map of the proof of \protect{Theorem~\ref{theo:worst-case-to-average-case}}}
The strategy, carried out in Section~\ref{sec:worst-case-to-average-case-reduction}, consists in proving that the worst-case $\OneEnd$ problem reduces to the average case of any other problem in the list.
The precise network of reductions is summarized in Figure~\ref{fig_wc2ac}, page~\pageref{fig_wc2ac}.
We conclude from the fact, established in Theorem~\ref{theo:everything-is-equivalent}, that the worst case of any problem in the list reduces to a worst-case $\OneEnd$ instance (except for $P = \lIsogeny$, which relies on the conditional reduction of~\cite{FOCS:Wesolowski21}).

\subsubsection*{Discussion of \protect{Theorem~\ref{theo:worst-case-to-average-case}} and comparison with previous work.}
Some of the worst-case to average-case reductions between the problems of interest are already folklore.
It is well known, for instance, that random walks in $\ell$-isogeny graphs can be used to re-randomize an instance of the $\lIsogeny$, leading naturally to a self-reduction. This straightforward approach extends to other cases, but is not sufficient to obtain the full network of reductions proved in Theorem~\ref{theo:worst-case-to-average-case}. For instance, the reduction from the worst-case $\OneEnd$ problem to the average case $\Isogeny$, $\HomModule$ or $\lIsogeny$ problems is obtained by modifying a worst-case reduction proposed in~\cite{EC:PagWes24}. The reduction from the worst-case $\OneEnd$ problem to the average case $\MaxOrder$ or $\MaxOrder_\mathcal Q$ problems relies on recent advances facilitating the conversion between ideals and isogenies and the division of isogenies.
\end{subsection}

\ifnotanonymous
\begin{subsection}{Acknowledgements}
The authors would like to thank Travis Morrison for fruitful discussions, and for bringing our attention to \cite[Proposition 4.1]{csahok_explicit_2022}, which simplified and improved some of the results of this paper.
The authors were supported by the Agence Nationale de la Recherche under grants  ANR-22-PETQ-0008 (PQ-TLS) and ANR-22-PNCQ-0002 (HQI), and the European Research Council under grant No. 101116169 (AGATHA CRYPTY).
\end{subsection}
\fi

\end{section}

\begin{section}{Preliminaries}\label{sec:prelim}


\begin{subsection}{Notation}
We write $\Z$ for the ring of integers and $\Q$ for the field of rational numbers. For any prime power $q$, we write $\F_q$ for the finite field with $q$ elements. For any set $S$, we denote by $\#S$ its cardinality. For any field $K$, we write $\overline K$ for its algebraic closure. We write $f = O(g)$ for the classic big O notation, and use the soft O notation $\tilde O(g) = \log(g)^{O(1)} \cdot O(g)$. We also write $\poly(f_1,\dots,f_n) = (f_1 + \dots + f_n)^{O(1)}$. The function $\log$ is in base $2$. For any ring $R$, we write $R^\times$ for its group of invertible elements, and  $M_2(R)$ for the ring of $2\times 2$ matrices with coefficients in $R$.
\end{subsection}

\begin{subsection}{Quaternion algebras}
See \cite{voight_quaternion_2021} for a detailed reference on quaternion algebras.
For any $a,b\in \Q^\times$, 
the \emph{quaternion algebra} 
$B = (\frac{a,b}{\Q})$ is a ring generated by a 
$\Q$-basis $(1,i,j,k)$ satisfying the multiplication rules
$$i^2 = a, j^2 = b, k = ij = -ji.$$
In this article, we only consider quaternion algebras of this form.
For any prime $p$, we say that $B$ is \emph{ramified at $p$} if $B \otimes \Q_p \not\cong M_2(\Q_p)$. It is \emph{ramified at $\infty$} if $B \otimes \R \not\cong M_2(\R)$.
We write $B_{p,\infty}$ for a quaternion algebra (unique up to isomorphism) ramified at $p$ and $\infty$ (and unramified at all other primes).

\begin{lemma}[\cite{pizer_algorithm_1980}]\label{lem:quaternionpq}
Let $p > 2$ be a prime. Then, $B_{p,\infty} \cong \left(\frac{-q,-p}{\Q}\right)$, where
$$q = \begin{cases}
1&\text{if } p \equiv 3 \bmod 4,\\
2&\text{if } p \equiv 5 \bmod 8,\\
q_p&\text{if } p \equiv 1 \bmod 8,\\
\end{cases}$$
where $q_p$ is any prime such that $q_p \equiv 3 \bmod 4$ and $\left(\frac p {q_p}\right) = -1$.
\end{lemma}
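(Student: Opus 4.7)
The plan is to exploit the classical fact that a quaternion algebra over $\mathbb{Q}$ is determined up to isomorphism by its set of ramified places, which must be a finite set of even cardinality. Since $B_{p,\infty}$ is defined as the (unique) quaternion algebra ramified exactly at $\{p,\infty\}$, it suffices to verify that the algebra $B := \left(\frac{-q,-p}{\Q}\right)$ has ramification locus $\{p,\infty\}$ in each of the three cases. For any place $v$ of $\mathbb{Q}$, ramification of $B$ at $v$ is equivalent to the Hilbert symbol satisfying $(-q,-p)_v = -1$, so the proof reduces to a Hilbert-symbol computation at each place.

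The place at infinity is immediate and case-independent: both $-q$ and $-p$ are negative reals, so $(-q,-p)_\infty = -1$ in all three cases, giving ramification at $\infty$. For finite primes $\ell$ not dividing $2pq$, both entries are $\ell$-adic units and $\ell$ is odd, so the Hilbert symbol $(-q,-p)_\ell = 1$ by the standard explicit formula. Hence the only nontrivial places to check are $\ell \in \{2,p\}$ (and $\ell = q_p$ in the third case).

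Next I would handle the three cases in turn:
\begin{itemize}
\item If $p \equiv 3 \bmod 4$, then $q = 1$ and $B = \left(\frac{-1,-p}{\mathbb{Q}}\right)$. At $\ell = p$, one computes $(-1,-p)_p = \left(\frac{-1}{p}\right) = -1$ since $p \equiv 3 \bmod 4$, which yields ramification at $p$. At $\ell = 2$, using $-p \equiv 1 \bmod 4$, the explicit formula for the $2$-adic Hilbert symbol gives $(-1,-p)_2 = +1$.
\item If $p \equiv 5 \bmod 8$, then $q = 2$ and $B = \left(\frac{-2,-p}{\mathbb{Q}}\right)$. At $\ell = p$, we have $(-2,-p)_p = \left(\frac{-2}{p}\right)$, which equals $-1$ because $p \equiv 1 \bmod 4$ gives $\left(\frac{-1}{p}\right) = +1$ while $p \equiv 5 \bmod 8$ gives $\left(\frac{2}{p}\right) = -1$. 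At $\ell = 2$, a direct computation with the $2$-adic Hilbert-symbol formula (using $p \equiv 5 \bmod 8$) yields $+1$.
\item If $p \equiv 1 \bmod 8$, then $q = q_p$ is an odd prime with $q_p \equiv 3 \bmod 4$ and $\left(\frac{p}{q_p}\right) = -1$. At $\ell = p$, using $\left(\frac{-1}{p}\right) = +1$ (as $p \equiv 1 \bmod 4$) together with quadratic reciprocity $\left(\frac{q_p}{p}\right) = \left(\frac{p}{q_p}\right) = -1$, one gets $(-q_p,-p)_p = \left(\frac{-q_p}{p}\right) = -1$, giving ramification at $p$. The two remaining computations, at $\ell = 2$ and $\ell = q_p$, must each yield $+1$; at $q_p$ this follows from $\left(\frac{-p}{q_p}\right) = +1$ (using $q_p \equiv 3 \bmod 4$ so $\left(\frac{-1}{q_p}\right) = -1$, combined with $\left(\frac{p}{q_p}\right) = -1$), and at $2$ it follows from the explicit $2$-adic formula together with $p \equiv 1 \bmod 8$.
\end{itemize}

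In each case, we have shown that $B$ is ramified at $\infty$ and $p$ and unramified at all other places, hence $B \cong B_{p,\infty}$. The main obstacle is purely bookkeeping at the prime $2$: one must combine the congruence of $p$ modulo $8$ with the explicit (but delicate) formula for the dyadic Hilbert symbol, and in the third case verify consistency via the product formula $\prod_v (-q_p,-p)_v = 1$, which can serve as a useful sanity check that the three case-analyses are internally coherent.
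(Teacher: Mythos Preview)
Your argument is correct and is precisely the standard route to this result: determine the ramification set of $\left(\frac{-q,-p}{\Q}\right)$ by computing Hilbert symbols at each place, and invoke the classification of quaternion algebras over $\Q$ by their ramification loci. The casework at $\infty$, $p$, $2$, and (in the third case) $q_p$ is handled correctly; in particular, the use of quadratic reciprocity and the dyadic Hilbert-symbol formula is accurate in each branch.

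Note, however, that the paper does not supply its own proof of this lemma: it is stated with a citation to Pizer~\cite{pizer_algorithm_1980} and used as a black box. So there is no ``paper's proof'' to compare against; what you have written is essentially the proof one finds in standard references (e.g., Voight's book), and it is the natural justification behind the cited statement.
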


Note that if GRH is true, when $p \equiv 1 \bmod 8$, the smallest suitable $q_p$ satisfies $q_p = O((\log p)^2)$ (this follows from~\cite{lagarias_effective_1977}, see also~\cite[Proposition~1]{EC:EHLMP18}).\\

Let $B$ be a quaternion algebra.
The canonical involution $$\overline{a+ib+jc+kd} \longmapsto a-ib-jc-kd$$ induces the \emph{reduced trace} $\Trd(\alpha) =  \alpha + \overline\alpha$ and the \emph{reduced norm} $\Nrd(\alpha) =  \alpha\overline\alpha$.
The reduced norm is a  quadratic form on $B$, with the associated bilinear form
$$\langle \alpha ,\beta \rangle = \frac{1}{2}\Trd(\alpha\overline \beta) = \frac{1}{2}\left(\Nrd(\alpha + \beta) - \Nrd(\alpha) - \Nrd{\beta}\right).$$
When $B$ is ramified at $\infty$, the reduced norm is positive definite.

A \emph{quadratic space} is a $\Q$-vector space $V$ of finite dimension $d$ together with a (positive definite) quadratic form $f : V \to \Q$.
A \emph{lattice} in $V$ is a subgroup $\Lambda \subset V$ of rank $d$ such that $V = \Q \Lambda$.
Given a $\Z$-basis $(b_i)_{i=1}^d$ of a lattice $\Lambda$, its \emph{Gram matrix} is $G = (\langle b_i ,b_j \rangle)_{i,j=1}^d$. The \emph{volume} of the lattice is $\mathrm{Vol}(\Lambda) = \sqrt{|\det(G)|}$, where $G$ is the Gram matrix of any basis of $\Lambda$.  The \emph{discriminant} of $\Lambda$ is $\disc(\Lambda) = 16 \mathrm{Vol}(\Lambda)^2$.

An \emph{order} in a quaternion algebra $B$ is a subring $\mathcal O \subset B$ that is also a lattice. 
It is a \emph{maximal order} if it is not contained in any other order. 
Maximal orders in $B_{p,\infty}$ have $\mathrm{Vol}(\mathcal O) = p/4$.

The following proposition shows that given maximal orders, one can efficiently compute an isomorphism between different models of $B_{p,\infty}$.
\begin{proposition}{\cite[Proposition 4.1]{csahok_explicit_2022}}
\label{prop:quaternion-isomorphism}
Let $A,B$ be quaternion algebras isomorphic to $B_{p,\infty}$, each given by their basis and multiplication table.
Given $\mathcal{O}_A$ a maximal order in $A$ and $\mathcal{O}_B$ a maximal order in $B$, one can compute an isomorphism between $A$ and $B$ in polynomial time.
\end{proposition}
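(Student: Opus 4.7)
The plan is to reduce finding an isomorphism $A \to B$ to explicitly splitting the central simple algebra $T := A \otimes_{\mathbb{Q}} B^{\mathrm{op}}$. Since the Brauer class of any quaternion algebra is its own inverse and both $A, B$ have ramification set exactly $\{p, \infty\}$, the algebra $T$ is split at every place, hence isomorphic to $M_4(\mathbb{Q})$. A primitive rank-one idempotent $e \in T$ then produces a minimal left ideal $Te$, which is a simple $A$-$B$-bimodule of $\mathbb{Q}$-dimension $4$. By the double centraliser theorem, the image of $A$ in $\End_{\mathbb{Q}}(Te)$ is the full commutant of the image of $B$. Identifying $Te$ with $B$ as a right $B$-module then lets us read off an algebra isomorphism $\phi : A \to B$ sending each $a \in A$ to the unique element of $B$ whose left multiplication on $B$ induces the same map on $Te$.

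The key computational step is producing an explicit $\mathbb{Q}$-algebra isomorphism $\psi : T \to M_4(\mathbb{Q})$. First I would form the order $\Lambda := \mathcal{O}_A \otimes_{\mathbb{Z}} \mathcal{O}_B^{\mathrm{op}}$ in $T$, whose $\mathbb{Z}$-basis is immediate from the input. Then I would enlarge $\Lambda$ to a maximal $\mathbb{Z}$-order $\mathcal{M} \subset T$ by local maximisation at its ``bad'' primes; since $\disc(\mathcal{O}_A) = \disc(\mathcal{O}_B) = p^2$, the discriminant of $\Lambda$ is built from $p$ and a small explicit constant, so the primes requiring maximalisation are available upfront without any nontrivial factorisation. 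As every maximal order of $M_4(\mathbb{Q})$ is $\GL_4(\mathbb{Q})$-conjugate to $M_4(\mathbb{Z})$, I would apply LLL to $\mathcal{M}$ equipped with the reduced-norm form (which on $M_4(\mathbb{Q})$ is the determinant) to locate a nontrivial zero divisor, extract from it a rank-one idempotent $\bar e$, and recover $\psi$ from the Peirce decomposition. Setting $e := \psi^{-1}(\bar e)$, the recipe of the previous paragraph then produces $\phi$ by pure linear algebra: for each basis element $a \in A$, find the unique $\phi(a) \in B$ such that $a \otimes 1$ and $1 \otimes \phi(a)^{\mathrm{op}}$ act identically on $Te$.

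The main obstacle I anticipate is the explicit splitting step. For a general degree-$4$ central simple algebra over $\mathbb{Q}$, extracting a zero divisor from an order can reduce to integer factorisation, which would preclude unconditional polynomial time. The feature of the input that saves us is that $\mathcal{O}_A$ and $\mathcal{O}_B$ are genuinely maximal, so $\disc(\Lambda)$ is built entirely from the known prime $p$ and the bad primes are known upfront; combined with Minkowski-type bounds on short vectors in $\mathcal{M}$ under the reduced-norm form, this makes the zero-divisor search polynomial time and unconditional.
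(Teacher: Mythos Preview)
The paper does not prove this proposition; it is quoted as \cite[Proposition~4.1]{csahok_explicit_2022} and used as a black box throughout. So there is no ``paper's own proof'' to compare against.

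Your overall strategy --- pass to the split algebra $T = A \otimes_{\mathbb{Q}} B^{\mathrm{op}} \cong M_4(\mathbb{Q})$, build a maximal order in $T$ starting from $\Lambda = \mathcal{O}_A \otimes_{\mathbb{Z}} \mathcal{O}_B^{\mathrm{op}}$, extract a primitive idempotent, and read off the isomorphism from the resulting simple $A$--$B$ bimodule --- is the classical route and is essentially what the cited reference does. Your key observation, that maximality of $\mathcal{O}_A$ and $\mathcal{O}_B$ forces $\disc(\Lambda)$ to be a known power of $p$ so that enlarging $\Lambda$ to a maximal order of $T$ requires no integer factorisation, is precisely the point that makes the result unconditional and is the reason the paper invokes this proposition rather than older isomorphism algorithms.

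One imprecision worth flagging: you propose to ``apply LLL to $\mathcal{M}$ equipped with the reduced-norm form (which on $M_4(\mathbb{Q})$ is the determinant)''. The reduced norm on $M_4(\mathbb{Q})$ is the determinant, a degree-$4$ form, not a quadratic form, so LLL does not apply to it directly, and a shortest vector for any natural positive-definite quadratic form on $\mathcal{M}$ need not be a zero divisor. The actual mechanism for splitting a maximal order $\mathcal{M} \subset M_n(\mathbb{Q})$ is different: one exploits that every maximal order is $\End_{\mathbb{Z}}(L)$ for some full lattice $L \subset \mathbb{Q}^n$, and recovers $L$ (hence a complete system of primitive idempotents) by linear algebra over $\mathbb{Z}$ and the residue fields at the bad primes. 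This is a detail of execution rather than a structural gap, but as literally written your splitting step would not run.
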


For any lattice $\Lambda \subset B$, its \emph{left order} and \emph{right order} are the orders
$$\mathcal O_L(\Lambda) = \{\alpha \in B \mid \alpha \Lambda \subseteq \Lambda\},\ \ \ \text{and } \mathcal O_R(\Lambda) = \{\alpha \in B \mid  \Lambda\alpha \subseteq \Lambda\}.$$
If $\mathcal O$ is a maximal order, and $I$ is a left ideal in $\mathcal O$, then $\mathcal O_L(I) = I$, and $\mathcal O_R(I)$ is another maximal order. The \emph{connecting ideal} of two maximal order $\mathcal O_1$ and $\mathcal O_2$ is the lattice
$$I(\mathcal O_1,\mathcal O_2) = \{\alpha \in B \mid \alpha \mathcal O_2 \overline \alpha \subseteq [\mathcal O_2 : \mathcal O_1 \cap \mathcal O_2]\mathcal O_1\}.$$
It is a left $\mathcal O_1$-ideal and a right $\mathcal O_2$-ideal.

Let $I$ be a left ideal in a maximal order $\mathcal O$. Its \emph{reduced norm} is $\Nrd(I) = \gcd(\Nrd(\alpha) \mid \alpha \in I) = \sqrt{\#(\mathcal O/I)}$. Its \emph{normalized quadratic form} is the integral quadratic form
$$q_I : I \longrightarrow \Z : \alpha \longmapsto \frac{\Nrd(\alpha)}{\Nrd(I)}.$$
In $B_{p,\infty}$, the volume of $I$ with respect to this quadratic form is $p/4$.
\end{subsection}

\begin{subsection}{Elliptic curves}
See \cite{silverman_arithmetic_2009} for a detailed reference on elliptic curves.
An \emph{elliptic curve} is an abelian variety of dimension $1$.
Given a field $k$ of characteristic $p>3$, an elliptic curve $E$ can be described by a short Weierstrass equation $y^2 = x^3 + ax + b$ for $a,b \in k$ with $4a^3 + 27b^2 \neq 0$. The \emph{$k$-rational points} of $E$ is the set $E(k)$ of pairs $(x,y) \in k^2$ satisfying the curve equation, together with a point $\infty_E$ `at infinity'. They form an abelian group, written additively, where $\infty_E$ is the neutral element.

Given two elliptic curves $E_1$ and $E_2$ defined over $k$, an isogeny $\varphi : E_1 \to E_2$ is a non-constant rational map which is also a group homomorphism from $E_1(\overline k)$ to $E_2(\overline k)$. The \emph{kernel} $\ker(\varphi)$ is a finite subgroup of $E_1(\overline k)$.
The \emph{degree} $\deg(\varphi)$ is its degree as a rational map. An \emph{isomorphism} is an isogeny of degree $1$.

For any integer $m$, the multiplication-by-$m$ map $[m] : E \to E$ is an isogeny. For any isogeny $\varphi : E \to E'$, its \emph{dual} is the unique isogeny $\hat \varphi : E' \to E$ such that $\hat \varphi \circ \varphi = [\deg(\varphi)]$.

The isogeny $\varphi$ is \emph{separable} if $\deg(\varphi) = \#\ker(\varphi)$, and \emph{inseparable} otherwise.
The isogeny is \emph{purely inseparable} if $\ker(\varphi)$ is trivial (note that an isogeny is an isomorphism if and only if it is both separable and purely inseparable).
For any finite subgroup $G \subset E(\overline k)$, there exists a separable isogeny $\varphi : E \to E/G$ with $\ker(\varphi) = G$ (and $\varphi$ is unique up to an isomorphism of the target).

If $k$ has characteristic $p>0$, the map $\phi^E_{p^n} : E \to E^{(p^n)} : (x,y) \mapsto (x^{p^n},y^{p^n})$ is the \emph{$p^n$-Frobenius isogeny}. For any isogeny $\varphi : E\to E'$, there is a maximal integer $n$ such that $\varphi$ factors as $\varphi = \psi \circ \phi^E_{p^n}$. Then, $\psi$ is separable. The isogenies $\phi^E_{p^n}$ are \emph{purely inseparable}.

If $\varphi$ is separable and $\deg(\varphi) = \ell$ is prime, we say that $\varphi$ is an $\ell$-isogeny.\\

An \emph{endomorphism} of $E$ is an isogeny $E \to E$. Endomorphisms, together with the zero morphism, form the \emph{endomorphism ring} $\End(E)$. The curve $E$ is \emph{supersingular} when $\End(E)$ is a lattice of rank $4$. Then, $\End(E) \otimes \Q$ is isomorphic to the quaternion algebra $B_{p,\infty}$, and $\End(E)$ is a maximal order. The degree map $\deg : \End(E) \to \Z$ coincides with the reduced norm of the algebra.

We write $\SSp$ for the set of $\overline{\F}_p$-isomorphism classes of supersingular elliptic curves over $\overline{\F}_p$.
We have $\#\SSp = \lfloor p/12\rfloor + \varepsilon$, with $\varepsilon \in \{0,1,2\}$, and all supersingular elliptic curves have a model over $\F_{p^2}$.

The Deuring correspondence highlights a deep connection between supersingular elliptic curves and maximal orders in the quaternion algebra $B_{p,\infty}$. Indeed, for any supersingular $E/\F_{p^2}$, we have that $\End(E)$ is a maximal order in the algebra $\End(E) \otimes \Q \simeq B_{p,\infty}$. For any isogeny $\varphi : E \to E'$, we have a left ideal 
\[I_\varphi = \Hom(E',E)\circ \varphi \subseteq \End(E),\] 
and the right order $\mathcal O_R(I_\varphi)$ is isomorphic to $\End(E')$. Any non-zero left ideal $I$ in $\End(E)$ is of this form, and we write $\varphi_I : E \to E_I$ for the corresponding isogeny.
We thus have a bijection $\varphi \mapsto I_\varphi$ (with inverse $I \mapsto \varphi_I$) between
\begin{itemize}
\item Isogenies from $E$ (up to isomorphism of the target), and
\item Left ideals in $\End(E)$.
\end{itemize}
Furthermore, for any $\varphi : E\to E'$, the lattice $I_\varphi$ (with its normalized quadratic form $q_{I_\varphi}$) is isomorphic to $\Hom(E,E')$ (with the quadratic form $\deg$).
\end{subsection}

\begin{subsection}{Isogeny algorithms}
There are several ways to encode an isogeny. In computational questions involving isogenies, we typically do not care how an isogeny is encoded, at long as it is an \emph{efficient representation}: an encoding that allows to store and evaluate the isogeny $\varphi$ in polynomial time in $\log(\deg(\varphi))$.

\begin{definition}[\protect{Efficient representation \cite[Definition~1.3]{wesolowski_random_2024}}]\label{def:efficient-representation}
Let $\mathcal A$ be a polynomial time algorithm. It is an \emph{efficient isogeny evaluator} if for any $D \in \{0,1\}^*$ such that $\mathcal A(\mathtt{validity}, D)$ outputs~$\top$,
there exists an isogeny $\varphi: E \to E'$ (defined over some finite field $\F_q$) such that:
\begin{enumerate}
\item on input $(\mathtt{curves}, D)$, $\mathcal A$ returns $(E,E')$,
\item on input $(\mathtt{degree}, D)$, $\mathcal A$ returns $\deg(\varphi)$,
\item on input $(\mathtt{eval},D,P)$ with $P\in E(\F_{q^k})$, $\mathcal A$ returns $\varphi(P)$.
\end{enumerate}
If furthermore $D$ is of polynomial size in $\log(\deg \varphi)$ and $\log q$, then $D$ is an \emph{efficient representation} of $\varphi$ (with respect to $\mathcal A$).
\end{definition}

The break of SIDH \cite{EC:CasDec23,EC:MMPPW23,EC:Robert23} had a major consequence on the computation of isogenies: they can be interpolated. More precisely, one can compute an efficient representation of an isogeny given only its image on a sufficiently large subgroup of its domain. In this paper, we will use the following simplified version of this result.

\begin{proposition}[{\algoname{IsogenyInterpolation} \cite[Theorem 5.19]{EPRINT:Robert24c}}]
\label{prop:hd-rep}
Let $\varphi : E \to E'$ be an $n$-isogeny between supersingular elliptic curves defined over $\mathbb{F}_{p^2}$.
Let $N > n$ be an integer coprime to $pn$ with prime power decomposition $\prod_{i = 1}^r \ell_r^{e_r}$.
Let $(P_1,Q_1,\dots,P_r,Q_r)$ be a set of generators of the $N$-torsion $E[N]$ such that $(P_i,Q_i)$ is a basis of $E[\ell_i^{e_i}]$, for $i = 1,\dots,r$.

Then, given $(n,P_1,Q_1,\dots,P_r,Q_r,\varphi(P_1),\varphi(Q_1),\dots,\varphi(P_r),\varphi(Q_r))$, one can compute an efficient representation of $\varphi$ in polynomial time in the length of the input and in the largest prime factor of $N$. 
\end{proposition}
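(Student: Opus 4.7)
The plan is to apply Kani's lemma to embed the one-dimensional isogeny $\varphi$ into a higher-dimensional isogeny $F$ whose kernel sits inside the $N$-torsion of the source, compute $F$ as a chain of low-degree higher-dimensional isogenies, and then recover an efficient representation of $\varphi$ by an inclusion into one factor of the source of $F$ and a projection onto one factor of the target. Concretely, set $d = N - n > 0$; by Lagrange's four-square theorem one writes $d = a_1^2+a_2^2+a_3^2+a_4^2$ in probabilistic polynomial time, and Kani's construction then assembles an $N$-isogeny $F : E^k \to A$ (with $k \in \{2,4,8\}$, depending on the exact block pattern used) whose matrix entries are $\varphi$, $\hat\varphi$, and the scalar maps $[a_i]$; in particular $\ker F$ is a maximal isotropic subgroup of $(E^k)[N]$ for the Weil pairing.

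Since each $[a_i]$ acts on $E[N]$ in an obvious way, and since the action of $\varphi$ (and hence of $\hat\varphi$, using $\varphi\hat\varphi = [n]$) on $E[N]$ is exactly what the hypothesis provides through the images $\varphi(P_i), \varphi(Q_i)$, the kernel $\ker F$ can be written down explicitly as the graph of a computable $\Z/N\Z$-linear map on $(E^k)[N]$. This step is pure linear algebra on each $\ell_i^{e_i}$-component of the torsion, and produces generators of $\ker F$ as points defined over a small extension of $\F_{p^2}$.

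Decomposing $N = \prod_{i=1}^r \ell_i^{e_i}$, one then computes $F$ as a chain of $\sum_i e_i$ elementary $\ell_i$-isogenies in dimension $k$, each costing $\poly(\ell_i, \log p)$ time with theta coordinates; the overall cost is polynomial in the input length and linear in the largest prime factor of $N$, matching the announced complexity. An efficient representation of $\varphi$ is then obtained by composing $F$ (evaluated via its chain decomposition) with the inclusion $E \hookrightarrow E^k$ into the ``$\varphi$-column'' and the projection $A \twoheadrightarrow E'$ onto the ``$\varphi$-row'' of the Kani matrix; validity of the resulting data $D$ can be checked by verifying the Kani kernel relations on the torsion.

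The technical heart of the argument is the higher-dimensional isogeny arithmetic: one must verify that every intermediate abelian variety along the chain remains principally polarised with a workable theta model, and bound uniformly the cost of a single $\ell$-step in dimension $k$. This is precisely what the SIDH-attack line of work establishes; once this machinery is invoked as a black box, the rest of the argument reduces to explicit kernel bookkeeping and standard torsion linear algebra, and the main obstacle is really this black-box higher-dimensional isogeny computation rather than anything specific to $\varphi$ itself.
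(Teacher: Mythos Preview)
The paper does not give its own proof of this proposition: it is stated as a citation of \cite[Theorem~5.19]{EPRINT:Robert24c} and used as a black box. So there is no ``paper's proof'' to compare against; your sketch is, in effect, a summary of the argument in the cited reference, and at that level it is essentially correct: Kani's lemma, Lagrange four-squares to reach dimension~$8$ unconditionally, explicit kernel from the given torsion images, decomposition of the resulting $N$-isogeny into a chain of $\ell_i$-isogenies computed in theta coordinates, and recovery of $\varphi$ by inclusion/projection.

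One imprecision worth fixing: the Kani isogeny does not have source $E^k$. In dimension~$2$ (when $N-n$ is a square $a^2$) the relevant matrix
\[
\begin{pmatrix} [a] & -\hat\varphi \\ \varphi & [a] \end{pmatrix}
\]
is an $N$-isogeny $E \times E' \to E \times E'$, and in dimension~$8$ one gets $F : E^4 \times (E')^4 \to E^4 \times (E')^4$. The copies of $E'$ in the source are essential: this is precisely why you need the action of $\hat\varphi$ on $E'[N]$ (which you correctly note is recoverable from $\varphi\hat\varphi = [n]$ and the given data) to write down $\ker F$. With the domain corrected, your kernel bookkeeping and the rest of the outline go through, and the honest admission that the hard part is the black-box higher-dimensional isogeny arithmetic is exactly right.
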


Using this interpolation result, it was then proved that there is an efficient algorithm to divide isogenies.
This application was first presented in \cite{EPRINT:Robert22c} is some particular case, and later generalised in \cite{hlm_supersingular_2025}.

\begin{proposition}[{\algoname{IsogenyDivision}, \cite{EPRINT:Robert22c} and \cite[Theorem 3]{hlm_supersingular_2025}}]
\label{prop:isogeny-division}
Given an isogeny $\varphi: E_1 \rightarrow E_2$, where $E_1$ and $E_2$ are supersingular elliptic curves defined over $\mathbb{F}_{p^2}$, and an integer $n < \deg \varphi$, one can return an efficient representation of $\varphi/n$ if it is a well-defined isogeny, and return \texttt{False} otherwise, in time polynomial in $\log p$ and $\log \deg \varphi$.
\end{proposition}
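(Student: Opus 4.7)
The plan is to reduce the computation of $\varphi/n$ to a single call to \algoname{IsogenyInterpolation} (\Cref{prop:hd-rep}). First I would perform cheap sanity checks: compute $\deg\varphi$ and verify that $n^2 \mid \deg\varphi$ (otherwise $\varphi/n$ cannot be an isogeny, so return \texttt{False}); then pick a basis $(P,Q)$ of $E_1[n]$ over a small extension of $\F_{p^2}$ and check that $\varphi(P) = \varphi(Q) = \infty_{E_2}$. Together these are equivalent to $E_1[n] \subseteq \ker\varphi$, which is exactly the criterion for $\psi := \varphi/n$ to exist as an isogeny $E_1 \to E_2$ of degree $m := \deg\varphi/n^2$. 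If the torsion check fails, return \texttt{False}.

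Assuming existence, the key trick is that although $\psi$ is not known directly, I can evaluate it on any point $R$ whose order is coprime to $n$: letting $u$ be the inverse of $n$ modulo the order of $R$, the point $R' := [u]R$ satisfies $[n]R' = R$, hence $\psi(R) = \psi([n]R') = \varphi(R')$. Thus every evaluation of $\psi$ on torsion coprime to $n$ reduces to one scalar multiplication followed by one evaluation of $\varphi$, both of which are polynomial-time.

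To invoke \Cref{prop:hd-rep}, I now select an integer $N$ coprime to $pn$ with $N > m$, whose prime-power decomposition $\prod_i \ell_i^{e_i}$ consists only of small primes and such that each $E_1[\ell_i^{e_i}]$ is defined over $\F_{p^{2k_i}}$ with $k_i = \polylog(p)$. Such an $N$ is built greedily from the smallest admissible primes, using the fact that for supersingular curves $\#E_1(\F_{p^{2k}}) = (p^k \mp 1)^2$, which makes the $\ell$-divisibility of the torsion controllable through the order of $p$ modulo $\ell$. Once $N$ is fixed, I compute a basis of $E_1[N]$, translate each $\ell_i^{e_i}$-component by the multiplier $n^{-1} \bmod \ell_i^{e_i}$ to obtain preimages under $[n]$, apply $\varphi$ to each translated basis point, and feed the resulting pairs to \Cref{prop:hd-rep} to recover an efficient representation of $\psi$.

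The main obstacle is guaranteeing, in polynomial time and without invoking GRH, the existence of such a smooth, $pn$-coprime $N$ of the correct size together with torsion accessible over a small extension. The required unconditional smoothness and torsion-accessibility estimates are precisely what is carried out in detail in the cited works \cite{EPRINT:Robert22c,EPRINT:MerWes23}, which is why the present paper quotes the result rather than re-proving it.
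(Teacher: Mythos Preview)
The paper does not give its own proof of this proposition: it is stated as a black-box citation of \cite{EPRINT:Robert22c} and \cite[Theorem~3]{EPRINT:MerWes23}. Your sketch is essentially the argument carried out in those references, and the core idea --- evaluate $\psi=\varphi/n$ on $N$-torsion coprime to $n$ via $\psi(R)=\varphi([n^{-1}\bmod N]R)$, then call \algoname{IsogenyInterpolation} --- is exactly right.

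There is, however, a genuine gap in your second sanity check. You propose to test $E_1[n]\subseteq\ker\varphi$ by picking a basis $(P,Q)$ of $E_1[n]$ ``over a small extension of $\F_{p^2}$'' and evaluating $\varphi$ there. But $n$ is an arbitrary integer bounded only by $\deg\varphi$; it may have a large prime factor $\ell$, in which case $E_1[\ell]$ is only defined over an extension of degree comparable to the multiplicative order of $p$ modulo $\ell$, which can be as large as $\ell-1$. So computing a basis of $E_1[n]$ is \emph{not} polynomial in $\log p$ and $\log\deg\varphi$ in general. This is precisely the same obstruction you correctly flag later for $N$, except that for $N$ you are free to choose smooth $N$, whereas $n$ is imposed.

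The fix is to drop the explicit $E_1[n]$ check entirely. After verifying $n^2\mid\deg\varphi$, go straight to the interpolation step, but take $N$ a bit larger than you wrote (e.g.\ $N>\deg\varphi$, still coprime to $pn$ and built from small primes as you describe). If \algoname{IsogenyInterpolation} returns some $\psi$ of degree $m=\deg\varphi/n^2$, then $[n]\psi$ and $\varphi$ agree on $E_1[N]$ by construction and have the same degree; since $N^2>4\deg\varphi$, their difference must vanish, so $\psi=\varphi/n$ genuinely. Conversely, if no degree-$m$ isogeny matches the data, interpolation fails and you return \texttt{False}. This way the existence test is folded into the interpolation itself, and you never need to touch $E_1[n]$.
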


Finally, we will use a recent unconditional polynomial-time algorithm to translate $\mathcal{O}$-left ideals to corresponding isogenies, where $\mathcal{O}$ is a maximal order in a quaternion algebra isomorphic to $B_{p,\infty}$.
It is a direct generalisation of the \textsc{Clapoti} algorithm \cite{EPRINT:PagRob23} for computing class group action on oriented elliptic curves in polynomial time, even if the norm of the acting ideal is not smooth.

Translating \textsc{Clapoti} into a general \textsc{IdealToIsogeny} algorithm has been done in prior work, such as \cite{AC:BDDLMP24}. 
Notice that, for the sake of efficiency, the authors of \cite{AC:BDDLMP24} chose to use isogenies in dimension 2, which required assuming heuristics. In contrast, to obtain a rigorous polynomial-time algorithm, we consider the most direct generalisation of \textsc{Clapoti}, using abelian varieties of dimension 8.

\begin{proposition}[\algoname{IdealToIsogeny} \cite{EPRINT:PagRob23}]
\label{prop:ideal-to-isogeny}
Let $E$ be a supersingular elliptic curve defined over $\mathbb{F}_{q}$ such that an isomorphism $\varepsilon: \End(E) \simeq \mathcal{O}$ is known, where $\mathcal{O}$ is a maximal order in some quaternion algebra isomorphic to $B_{p,\infty}$.
Given a left $\mathcal{O}$-ideal $I$, one can compute an efficient representation of the isogeny $\varphi_I : E \to E/E[I]$ in polynomial time in the length of the input. 
\end{proposition}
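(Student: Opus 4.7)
The plan is to adapt the \textsc{Clapoti} algorithm of \cite{EPRINT:PagRob23} from the oriented class group action setting to general left $\mathcal O$-ideals in the supersingular setting, using the most direct dimension-8 generalisation so as to avoid any heuristic lattice argument. Using $\varepsilon$, I would first translate the input: $I$ is known as a $\Z$-basis of endomorphisms in $\End(E)$ of reduced norm $\Nrd(I) = \deg \varphi_I$, and the goal is to compute an efficient representation of $\varphi_I: E \to E/E[I]$.

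The central step is to find a smooth auxiliary integer $N$ of polynomial size (coprime to $p\Nrd(I)$, with only polynomial-size prime factors) together with eight elements $\alpha_1, \ldots, \alpha_8 \in I$ satisfying
\[
\sum_{i=1}^8 \Nrd(\alpha_i) \;=\; N \cdot \Nrd(I);
\]
equivalently, a vector of norm exactly $N$ in the rank-$32$ lattice $I^{\oplus 8}$ equipped with the direct sum $q_I^{\oplus 8}$ of the normalised quadratic forms of $I$ (each of volume $p/4$). The successive minima of $I^{\oplus 8}$ are controlled by Minkowski's theorem, and in dimension $32$ the form $q_I^{\oplus 8}$ has enough universality to represent any sufficiently large integer unconditionally, via LLL followed by enumeration in a reduced basis. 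This is precisely where working in dimension $8$ (rather than in dimensions $2$ or $4$, as in the heuristic variant of \cite{AC:BDDLMP24}) makes the whole argument unconditional: in low dimension, representability of integers by a given quaternary form depends on GRH-style statements, whereas in rank $32$ purely classical lattice arguments suffice.

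Given such an $8$-tuple, a higher-dimensional analogue of Kani's lemma produces a matrix isogeny $\Phi : E^8 \to A$, where $A$ is a principally polarised abelian variety of dimension $8$, of degree a power of $N$, and whose kernel encodes $E[I]$ in an explicit factor. Since $\varepsilon$ is known and each $\varepsilon^{-1}(\alpha_i)$ is expressed in a $\Z$-basis of $\End(E)$, I can evaluate these endomorphisms on the $N$-torsion $E[N]$ in polynomial time (exploiting the smoothness of $N$), and hence evaluate $\Phi$ on $E^8[N]$. Applying the dimension-$8$ analogue of \textsc{IsogenyInterpolation} (the natural generalisation of \Cref{prop:hd-rep} to $g=8$) recovers an efficient representation of $\Phi$, from which projection onto the appropriate factor yields an efficient representation of $\varphi_I$.

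The main obstacle is the rigorous execution of the second step without any GRH assumption. Existence of a representation of $N \cdot \Nrd(I)$ as a sum of eight reduced norms in $I$ is classical in spirit, in the vein of higher-dimensional Lagrange-type results, but turning existence into a polynomial-time algorithm with explicit polynomial bounds on $N$ and on the bit-sizes of the $\alpha_i$ requires a quantitative universality statement for the rank-$32$ form $q_I^{\oplus 8}$, together with careful lattice reduction. A secondary difficulty is formulating the dimension-$8$ Kani embedding and the extraction of $\varphi_I$ from $\Phi$ with the correct polarisation data; this is the technical step that distinguishes the unconditional dimension-$8$ construction from the faster but heuristic dimension-$2$ variant used in \cite{AC:BDDLMP24}.
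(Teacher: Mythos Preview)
The paper does not actually prove this proposition: it is stated as a cited result from \cite{EPRINT:PagRob23}, accompanied only by a short explanatory paragraph noting that the unconditional algorithm is the direct dimension-$8$ generalisation of \textsc{Clapoti} (as opposed to the heuristic dimension-$2$ variant of \cite{AC:BDDLMP24}). At that level of description, your plan agrees with the paper.

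Your concrete central step, however, misidentifies the mechanism. Finding eight arbitrary $\alpha_1,\dots,\alpha_8\in I$ with $\sum_i q_I(\alpha_i)=N$ does not by itself give a Kani embedding: writing $\alpha_i=\psi_i\circ\varphi_I$ with $\psi_i\in\Hom(E_I,E)$, the column $\Psi=(\psi_i)_i:E_I\to E^8$ does satisfy $\hat\Psi\Psi=[N]$, but $\Psi\hat\Psi$ has off-diagonal blocks $\psi_i\hat\psi_j\neq 0$, and there is no evident way to complete $\Psi$ to an $N$-isogeny of abelian $8$-folds without extra structure on the $\alpha_i$. More importantly, the reason dimension $8$ is unconditional is \emph{not} some universality property of the rank-$32$ form $q_I^{\oplus 8}$ (whose minimum is of order $\sqrt p$, so it is certainly not universal); it is Lagrange's four-square theorem. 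The actual \textsc{Clapoti} recipe finds \emph{two} short elements $\beta_1,\beta_2\in I$ of coprime values $d_k=q_I(\beta_k)$ (LLL on the rank-$4$ lattice $I$, no GRH needed), then positive integers $u,v$ with $ud_1+vd_2=2^e$, and writes each of $u,v$ as a sum of four squares. The resulting eight elements $a_k\beta_1$, $b_k\beta_2$ carry the block structure that makes the $8$-dimensional Kani matrix close up. Your surrounding steps (evaluating the $\varepsilon^{-1}(\alpha_i)$ on smooth torsion, then interpolating) are fine, but the heart of the argument is this two-elements-plus-four-squares route, not generic rank-$32$ lattice enumeration.
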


\end{subsection}

\begin{subsection}{Computational problems}
\label{subsec:problem definitions}
We now formally define the computational problems of interest.
Every isogeny in the input or output of these problems is considered to be in efficient representation~\ref{def:efficient-representation}.
The \emph{supersingular isogeny problem} can then be simply expressed as follows.

\begin{problem}[\Isogeny]
Given $E$ and $E'$ two supersingular elliptic curves defined over $\mathbb{F}_{p^2}$, compute an isogeny $\varphi: E \rightarrow E'$.
\end{problem}

Instead of asking for a single isogeny, one can ask for the collection of all isogenies from $E$ to $E'$. As this collection is a lattice, the following problem asks to find a basis of this lattice.

\begin{problem}[\HomModule]
Given two supersingular elliptic curves $E$ and $E'$ defined over $\mathbb{F}_{p^2}$, find four isogenies generating $\Hom(E,E')$ as a $\mathbb{Z}$-module.
\end{problem}

It is often convenient to find an isogeny of a particular form.
An \emph{$\ell$-isogeny path} (of length $n$) from $E$ to $E'$ is a sequence of $\ell$-isogenies $\varphi_{i} : E_i \to E_{i+1}$ such that $E_0 = E$ and $E_n = E'$. Such a path provides an efficient representation of the degree $\ell^n$ composition $E \to E'$. The following version of the isogeny problem appeared as early as \cite{JC:ChaLauGor09}, when no general method was known for the efficient representation of arbitrary isogenies.

\begin{problem}[\lIsogeny]
Given $E$ and $E'$ two supersingular elliptic curves defined over $\mathbb{F}_{p^2}$, compute an $\ell$-isogeny path from $E$ to $E'$.
\end{problem}

It soon appeared that the problem of finding isogenies is closely related to the problem of finding endomorphisms. Again, there are several ways to formalize it. The first asks to find a basis of the endomorphism ring.

\begin{problem}[\EndRing]
Given a supersingular elliptic curve $E$ defined over $\mathbb{F}_{p^2}$, find four endomorphisms generating $\End(E)$ as a $\mathbb{Z}$-module.
\end{problem}

Similarly to isogenies, we could ask to find a single endomorphism instead of the whole ring. But extra care is required: some endomorphisms are always easy to find: any scalar multiplication $[m]$ on $E$ is an endomorphism, forming the subring $\Z \subset \End(E)$.

\begin{problem}[\OneEnd]
Given a supersingular elliptic curve $E$ defined over $\mathbb{F}_{p^2}$, find an endomorphism in $\End(E) \setminus \mathbb{Z}$.
\end{problem}

Now, \emph{computing the endomorphism ring of $E$} could be interpreted in a different way. Instead of finding actual endomorphisms of $E$ as in \EndRing, one could ask for the abstract structure of $\End(E)$. Indeed, this ring is always isomorphic to a maximal order in $B_{p,\infty}$, and determining which is the following problem.

\begin{problem}[\MaxOrder]
Given a supersingular elliptic curve $E$ defined over $\mathbb{F}_{p^2}$, find a quaternion algebra $B = (\frac{-a,-b}{\mathbb{Q}}) \simeq B_{p,\infty}$, together with four quaternions in $B$ generating a maximal order isomorphic to $\End(E)$. 
\end{problem}

Note that in previous work such as~\cite{EC:EHLMP18,FOCS:Wesolowski21}, \MaxOrder\  does not require finding a model $(\frac{-a,-b}{\mathbb{Q}})$ for $B_{p,\infty}$. Instead, they make the implicit assumption that a fixed model of the form $B_{p,\infty} = (\frac{-p,-q}{\mathbb{Q}})$ is used. Since there is no ``canonical'' model of $B_{p,\infty}$, we include the choice of a model in the definition of the problem. We discuss this subtlety in further detail in \Cref{sec:maxorder}, where we prove that it does not actually matter: it is equivalent to the following problem where a model for $B_{p,\infty}$ of the same form as~\cite{EC:EHLMP18,FOCS:Wesolowski21} is provided.

\begin{problem}[$\MaxOrder_\mathcal Q$]
Let $\mathcal Q$ be an algorithm which for any prime number $p$, outputs a prime $q = \mathcal Q(p)$ such that $B_{p,\infty} \simeq (\frac{-p,-q}{\mathbb{Q}})$. The problem $\MaxOrder_\mathcal Q$ is the following.
Given a supersingular elliptic curve $E$ defined over $\mathbb{F}_{p^2}$, find four quaternions in $(\frac{-p,-\mathcal Q(p)}{\mathbb{Q}})$ generating a maximal order isomorphic to $\End(E)$. 
\end{problem}

As the \EndRing\ problem asks to find actual endomorphisms generating $\End(E)$, and the \MaxOrder\ problem asks to find the ``quaternionic'' structure of $\End(E)$, one can combine these two tasks as in the following problem.

\begin{problem}[\MaxEnd]
Given a supersingular elliptic curve $E$ defined over $\mathbb{F}_{p^2}$, find four endomorphisms $(\alpha_i)_{i=1}^4$ generating $\End(E)$ as a $\mathbb{Z}$-module, a quaternion algebra $B = (\frac{-a,-b}{\mathbb{Q}}) \simeq B_{p,\infty}$, and four quaternions $(\beta_i)_{i=1}^4$ in $B$ such that 
\[\End(E) \otimes \Q \longrightarrow B : \alpha_i \longmapsto \beta_i\]
is an isomorphism.
\end{problem}

\end{subsection}

\begin{subsection}{Random walks}
\label{sec:random-walk}
We will consider random processes in the set of supersingular elliptic curves. Let $\C^{\SSp}$ be the set of functions $\SSp \to \C$. We consider two natural distances on $\C^{\SSp}$. First,
$$\dTV(f,g) = \frac{1}{2}\|f-g\|_1 = \frac{1}{2}\sum_{E \in \SSp}|f(E) - g(E)|.$$
When $f$ and $g$ are distributions on $\SSp$, this is known as the \emph{total variation distance}.
Second, we have the scalar product
\[\langle f, g\rangle = \sum_{E \in \SSp}f(E)\overline {g(E)}\#\mathrm{Aut}(E),\]
inducing the norm $\|f\| = \langle f, f\rangle^{1/2}$.
Note that by the Cauchy-Schwarz inequality and Eichler's formula, for any $f,g$, we have
\[\dTV(f,g)  \leq \frac{\|f-g\|}{2}\left(\sum_{E \in \SSp}\frac{1}{\#\mathrm{Aut}(E)}\right)^{1/2}  = \frac{\|f-g\|}{2} \left(\frac{p-1}{24}\right)^{1/2}.\]

\label{subsection:l-isogeny-graph}
Given an elliptic curve $E$, a prime $\ell$ and an integer $k$, a random $\ell$-walk from $E$ of length $k$ is a random sequence $(\varphi_0,\dots,\varphi_{k-1})$ sampled as follows:
\begin{enumerate}
\item Let $E_0 = E$,
\item For each $i$, let $G_i$ be a uniformly random subgroup of order $\ell$ in $E_i$, and let $E_{i+1} = E_{i}/G_i$
\item For each $i$, let $\varphi_i : E_i \to E_{i+1}$ be the quotient isogeny.
\end{enumerate}
The curve $E$ is the \emph{source} of the walk, and the codomain of $\varphi_{k-1}$ is called the \emph{target} of the walk.
Let $N$ be a positive integer with prime factorization 
$N = \prod_{i=1}^t\ell_i^{k_i}$. A random $N$-walk from $E$ is a sequence $(w_i)_{i=1}^t$ where 
\begin{enumerate}
\item The source of $w_1$ is $E$,
\item Each $w_i$ is a random $\ell_i$-walk of length $k_i$, and
\item For each $i$, the target of $w_i$ is the source of $w_{i+1}$.
\end{enumerate}
The \emph{target} of the walk is the target of $w_t$. Note that the walk itself depends on an order of the prime factors of $N$, but the distribution of the target does not.

The following proposition states that random walks rapidly converge to the so-called \emph{stationary distribution}.

\begin{definition}\label{def:stationary}
The \emph{stationary distribution} on $\SSp$ is the probability distribution defined by $\mu(E) = \frac{24}{(p-1)\#\Aut(E)}$.
\end{definition}

\begin{remark}\label{rem:indistinguishable}
For any $p>3$, the quantity $\#\Aut(E)$ is equal to $2$ for all curves $E$ with two exceptions: if $j(E) = 1728$, then $\#\Aut(E) = 4$, and if $j(E) = 0$, then $\#\Aut(E) = 6$.
Therefore, the total variation distance between the uniform distribution and the stationary distribution on $\SSp$ is $O(1/p)$.
In particular, the two distributions are statistically and computationally indistinguishable.
\end{remark}

\begin{proposition}
\label{prop:random-walk}
Let $N$ be a positive integer with prime factorization 
$N = \prod_{i=1}^t\ell_i^{k_i}$.
Let $E$ be a supersingular elliptic curve over $\F_{p^2}$ sampled from a distribution $f$ on $\SSp$.
Let $W_N(f)$ be the distribution of the target of a random $N$-walk from $E$. Then,
\[\|W_N(f) - \mu\| \leq \|f-\mu\|\cdot\prod_{i=1}^t\left(\frac{2\sqrt{\ell_i}}{\ell_i+1}\right)^{k_i},\]
where $\mu$ is the stationary distribution.
\end{proposition}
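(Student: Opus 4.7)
The plan is to recognize this as a standard spectral gap statement on the supersingular isogeny graph, pushed through the key input that the $\ell$-isogeny graph is Ramanujan (Pizer's theorem).

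First I would introduce the transition operator $T_\ell : \C^{\SSp} \to \C^{\SSp}$ that describes a single $\ell$-walk step at the level of distributions: $(T_\ell f)(E') = \sum_{E \in \SSp} p_\ell(E \to E') f(E)$, where $p_\ell(E \to E')$ is the probability that one random $\ell$-step starting at $E$ lands in the isomorphism class $E'$. By independence of successive steps (and, in the multi-prime case, the already-noted fact that the distribution of the target does not depend on the ordering of the prime factors), we get $W_N(f) = \prod_{i=1}^t T_{\ell_i}^{k_i} f$. The statement will then reduce to controlling $\|T_\ell g\|$ for $g$ in the orthogonal complement of $\mu$.

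The three spectral properties I need are: (i) $T_\ell \mu = \mu$, i.e.\ $\mu$ is a fixed point; (ii) $T_\ell$ is self-adjoint with respect to the weighted inner product $\langle \cdot, \cdot \rangle$; and (iii) every eigenvalue of $T_\ell$ restricted to $\{g : \langle g, \mu\rangle = 0\}$ is bounded in absolute value by $\frac{2\sqrt{\ell}}{\ell+1}$. Point (ii) follows from Deuring's reciprocity for $\ell$-isogenies: $\#\Aut(E)\, p_\ell(E \to E') = \#\Aut(E')\, p_\ell(E' \to E)$, which, after substitution into the definition of $\langle T_\ell f, g\rangle$, exactly exchanges the roles of $f$ and $g$. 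Point (iii) is the Ramanujan bound and is the main imported input: Pizer identifies $(\ell+1)T_\ell$ with a Hecke operator acting on a space of weight-$2$ modular forms for $\Gamma_0(p)$, so Deligne's bound gives eigenvalues of $(\ell+1)T_\ell$ of absolute value at most $2\sqrt{\ell}$ on the complement of the trivial eigenspace.

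With these in place, the proof is immediate. Setting $g = f - \mu$, since $f$ and $\mu$ are both probability distributions we have $\sum_E g(E) = 0$. Because $\mu(E)\#\Aut(E) = 24/(p-1)$ is a constant over $E \in \SSp$, this yields $\langle g, \mu\rangle = \frac{24}{p-1}\sum_E g(E) = 0$, placing $g$ in the invariant orthogonal complement. Then by (i) one has $W_N(f) - \mu = W_N(g) = \prod_i T_{\ell_i}^{k_i} g$, and iterating the operator-norm bound from (ii)+(iii) yields
\[\|W_N(f)-\mu\| \leq \|g\| \prod_{i=1}^t \left(\frac{2\sqrt{\ell_i}}{\ell_i+1}\right)^{k_i},\]
which is the announced inequality. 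The main non-trivial ingredient is (iii), but it is classical and would be cited rather than reproved; the rest of the argument is bookkeeping around Deuring reciprocity and the fact that $\mu \cdot \#\Aut$ is constant.
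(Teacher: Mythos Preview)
Your proof is correct and follows essentially the same approach as the paper: introduce the $\ell$-walk operator, invoke Pizer's Ramanujan bound for the spectral gap on the orthogonal complement of $\mu$, and then compose the single-prime estimates to handle general $N$. The only difference is that the paper outsources the spectral properties (your points (i)--(iii)) to \cite[Appendix~A.1 and Theorem~3.10]{EC:PagWes24}, whereas you sketch them directly via Deuring reciprocity and the Hecke interpretation; the composition step is identical.
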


\begin{proof}
This is a folklore consequence of Pizer's proof that the supersingular $\ell$-isogeny graph is Ramanujan~\cite{pizer_algorithm_1980}. However, previous literature only details the case where $N$ is a prime power, so let us show that it extends to the general case.
Following \cite[Appendix A.1]{EC:PagWes24}, let $W_\ell = B_\ell/(\ell+1)$ be the $\ell$-walk operator in $X$: for any distribution $f$ on $\SSp$, we have that $W_\ell^k(f)$ is the distribution of the target of a random $\ell$-walk of length $k$. From \cite[Appendix A.1]{EC:PagWes24} and \cite[Theorem 3.10]{EC:PagWes24}, we have $\|W_\ell^k(f) - \mu\| \leq \frac{2\sqrt{\ell}}{\ell+1}\|f-\mu\|$. We deduce that
\[\|W_N(f) - \mu\| = \|(W_{\ell_1}^{k_1}\circ \dots \circ W_{\ell_t}^{k_t})(f) - \mu\| \leq \|f-\mu\|\cdot\prod_{i=1}^t\left(\frac{2\sqrt{\ell_i}}{\ell_i+1}\right)^{k_i},\]
as claimed.
\qed\end{proof}

\begin{remark_star}
	A similar result can be found in \cite[Theorem 1]{AC:GalPetSil17}. However, their limit distribution is incorrect due to the presence of elliptic curves with non-trivial automorphism groups.
\end{remark_star}

In our applications, we only need the following corollary, where we introduce a useful notation $\tau(p,\varepsilon)$ for the subsequent proofs.

\begin{corollary}
\label{cor_randomwalk}
Let $E$ be a random supersingular elliptic curve defined over $\mathbb{F}_{p^2}$, for some distribution $f$, and let $\varepsilon > 0$.
There exists a bound $\tau(p,\varepsilon) = O(\log(p) - \log(\varepsilon))$ such that for any $N > 2^{\tau(p,\varepsilon)}$, the output distribution of a random $N$-walk is at total variation distance at most $\varepsilon$ to the stationary distribution.
\end{corollary}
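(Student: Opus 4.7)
The plan is to combine Proposition~\ref{prop:random-walk} with the inequality $\dTV(f, g) \leq \frac{1}{2}\|f-g\|\sqrt{(p-1)/24}$ already established at the start of Section~\ref{sec:random-walk}. This immediately yields
\[\dTV(W_N(f), \mu) \leq \frac{\|f - \mu\|}{2}\sqrt{\frac{p-1}{24}} \cdot \prod_{i=1}^t \left(\frac{2\sqrt{\ell_i}}{\ell_i + 1}\right)^{k_i}.\]
Two quantities on the right need to be controlled: the product over primes (in terms of $N$ alone, since the corollary makes no assumption on the factorisation of $N$) and the norm $\|f-\mu\|$ (uniformly in the input distribution $f$).

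The main technical step, and the only place that needs genuine care, is to turn the product into a negative power of $N$. I would show that there is an absolute constant $c > 0$ such that for every prime $\ell$ and every integer $k \geq 0$,
\[\left(\frac{2\sqrt{\ell}}{\ell+1}\right)^k \leq \ell^{-k/c}.\]
This is equivalent to the per-prime inequality $2\ell^{1/2 + 1/c} \leq \ell + 1$ for every prime $\ell \geq 2$. A short calculus check shows that for $c > 2$ the function $\ell \mapsto 2\ell^{1/2+1/c}/(\ell+1)$ is decreasing on $[2, \infty)$, so it suffices to verify the inequality at $\ell = 2$, where it reduces to $2^{3/2 + 1/c} \leq 3$. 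Since $\log_2 3 > 3/2$, any $c \geq 1/(\log_2 3 - 3/2) \approx 11.77$ works, so e.g.\ $c = 12$ is admissible. Multiplying these per-prime bounds over the factorisation $N = \prod_i \ell_i^{k_i}$ yields
\[\prod_{i=1}^t \left(\frac{2\sqrt{\ell_i}}{\ell_i+1}\right)^{k_i} \leq \prod_{i=1}^t \ell_i^{-k_i/c} = N^{-1/c},\]
a bound that depends only on $N$.

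To control $\|f-\mu\|$, I would simply observe that $f$ and $\mu$ are probability distributions on $\SSp$ valued in $[0,1]$ and that $\#\Aut(E) \leq 6$ for every $E$. Then $(f(E)-\mu(E))^2 \leq |f(E) - \mu(E)|$ and consequently
\[\|f-\mu\|^2 = \sum_{E \in \SSp} (f(E)-\mu(E))^2 \#\Aut(E) \leq 6\sum_{E \in \SSp} |f(E)-\mu(E)| \leq 12,\]
so $\|f-\mu\| \leq 2\sqrt{3}$. Substituting everything gives $\dTV(W_N(f), \mu) \leq C_0 \sqrt{p}\, N^{-1/c}$ for an absolute constant $C_0$. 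Requiring this to be at most $\varepsilon$ forces $\log_2 N \geq c\log_2(C_0 \sqrt p/\varepsilon) = O(\log p - \log \varepsilon)$, so any $\tau(p,\varepsilon)$ of this form is admissible and the bound $\tau(p,\varepsilon) = O(\log p - \log \varepsilon)$ claimed in the statement follows.
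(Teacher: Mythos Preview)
Your argument is correct and follows essentially the same route as the paper: apply Proposition~\ref{prop:random-walk}, bound the prime product by a negative power of $N$ (both you and the paper reduce this to the worst case $\ell=2$, which gives exactly the threshold $1/(\log_2 3 - 3/2)\approx 11.77$), bound $\|f-\mu\|$ uniformly, convert to total variation via Cauchy--Schwarz, and solve for $\log N$. Two small remarks: your monotonicity claim holds for $c>6$ rather than $c>2$ (one needs $\alpha/(1-\alpha)<2$ with $\alpha=1/2+1/c$), which is harmless since you use $c=12$; and where the paper cites \cite[Theorem~7, Item~5]{EC:BCCDFL23} for $\|f-\mu\|\leq\sqrt{3}$, your elementary bound $\|f-\mu\|\leq 2\sqrt{3}$ is a nice self-contained substitute that costs only a constant.
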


\begin{proof}
By \cite[Theorem~7, Item~5]{EC:BCCDFL23}, we have $\|f-\mu\|\leq\sqrt{3}$.
Let $\lambda = \log(3/(2\sqrt{2}))$.
From \Cref{prop:random-walk}, we have
\[\|W_N(f)-\mu\| \leq \|f-\mu\|\cdot \prod_{i=1}^t\left(\frac{2\sqrt{\ell_i}}{\ell_i+1}\right)^{k_i}\leq \sqrt{3}\cdot\left(\frac{2\sqrt{2}}{3}\right)^{\log(N)} = \sqrt{3}\cdot2^{-\lambda\log(N)}.\]
Now,
\[\dTV(W_N(f),\mu) \leq \frac{\|W_N(f)-g\|}{2} \left(\frac{p-1}{24}\right)^{1/2} \leq 2^{-\lambda\log(N)} \left(\frac{p-1}{24}\right)^{1/2}.\]
The latter quantity is smaller than $\varepsilon$ if and only if
\[\log(N) \geq \frac{1}{2\lambda} \left(\log(p-1) - \log(24) - 2\log(\varepsilon)\right) = O(\log(p) - \log(\varepsilon)),\]
which proves the result.
\qed\end{proof}

\end{subsection}

Thanks to the concepts introduced in this section, we can now properly define \emph{average-case} problems.

\begin{definition}\label{def:averagecase}
Let $P$ be a problem from the list \lIsogeny, \Isogeny, \EndRing, \OneEnd, \MaxEnd, \MaxOrder, $\MaxOrder_\mathcal{Q}$ and \HomModule.
The input of the problem $P$ consists of one or two supersingular elliptic curves defined over $\F_{p^2}$.
We define the \emph{average-case} for $P$ as the case where the input curves are drawn from the stationary distribution on $\SSp$.
\end{definition}

\end{section}

\begin{section}{The Maximal Order problem}\label{sec:maxorder}

\noindent In this section, we discuss the \MaxOrder\ problem, and a subtlety in its definition when one does not assume GRH.
The classical definition makes the implicit assumption that a reference quaternion algebra $B_{p,\infty}$ is provided. However, there is no ``canonical'' model of $B_{p,\infty}$. When $p \equiv 3 \bmod 4$ (respectively $p \equiv 5 \bmod 8$), one can argue that the algebra $(\frac{-p,-1}{\mathbb{Q}})$ (respectively $(\frac{-p,-2}{\mathbb{Q}})$) is a natural model for $B_{p,\infty}$. However, when $p \equiv 1 \bmod 8$, there is no uniform value of $q$ for which $B_{p,\infty} \simeq (\frac{-p,-q}{\mathbb{Q}})$. 
In order to fix an algebra for each $p$, previous works fix a procedure $\mathcal Q$ such that on input $p$, the output $\mathcal Q(p)$ is a prime satisfying $B_{p,\infty} \simeq (\frac{-p,-\mathcal Q(p)}{\mathbb{Q}})$. This $\mathcal Q(p)$ is typically set to be the smallest prime number with the requested property. While a convenient choice, it is somewhat arbitrary. Furthermore, without GRH, there is no guarantee for this value to be small, nor easy to find.\\

As this model $B_{p,\infty} = (\frac{-a,-b}{\mathbb{Q}})$ \emph{might} be hard to compute (without GRH, when $p \equiv 1 \bmod 8$), the original definition of \MaxOrder\ becomes ambiguous: are $a$ and $b$ provided, or are they to be computed?
We settle for a definition of \MaxOrder\ where $a$ and $b$ are left to be found.
Let us show that the impact of this choice is minimal: it is equivalent to the variant $\MaxOrder_\mathcal Q$ where the algebra is imposed to be of the classical form $(\frac{-p,-\mathcal Q(p)}{\mathbb{Q}})$, for any procedure $\mathcal Q$ which returns a suitable prime.
The key is Proposition~\ref{prop:quaternion-isomorphism}, which allows one to translate solutions accross different models of $B_{p,\infty}$, so the choice of a particular model does not matter.

\begin{proposition}[$\MaxOrder_\mathcal{Q} \text{ is equivalent to } \MaxOrder$]\label{prop:maxorder-equiv-maxorderQ}
Given oracle access to $\mathcal{Q}$, the two problems $\MaxOrder$ and $\MaxOrder_\mathcal Q$ are equivalent under probabilistic polynomial time reductions. The reductions make a single query to each oracle.
\end{proposition}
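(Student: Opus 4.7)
The plan is to prove each direction of the equivalence separately, using \Cref{prop:quaternion-isomorphism} to translate maximal orders between models of $B_{p,\infty}$. In both directions, the reduction will make a single call to the problem oracle and a single call to the oracle $\mathcal Q$.

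The easier direction is the reduction from $\MaxOrder$ to $\MaxOrder_{\mathcal Q}$. On input $E/\F_{p^2}$, I would first query $\mathcal Q$ to obtain $q = \mathcal Q(p)$, and then call the $\MaxOrder_{\mathcal Q}$ oracle on $E$ to get four quaternions of $(\tfrac{-p,-q}{\Q})$ generating a maximal order $\mathcal O$ isomorphic to $\End(E)$. Returning the algebra $(\tfrac{-p,-q}{\Q})$ together with $\mathcal O$ is a valid $\MaxOrder$ solution.

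For the harder direction, reducing $\MaxOrder_{\mathcal Q}$ to $\MaxOrder$, I would proceed as follows. First, call $\mathcal Q$ to fix the target algebra $B = (\tfrac{-p,-q}{\Q})$ with $q = \mathcal Q(p)$. Next, write down explicitly, in polynomial time from $p$ and $q$, a reference maximal order $\mathcal O_B \subset B$: this is classical, following Pizer's constructions \cite{pizer_algorithm_1980} (the standard explicit orders $\Z\langle 1,i,(i+j)/2,(1+k)/2\rangle$ and their analogues depending on $p \bmod 8$). Then query the $\MaxOrder$ oracle on $E$, receiving a model $A = (\tfrac{-a,-b}{\Q}) \simeq B_{p,\infty}$ together with a maximal order $\mathcal O_A \subset A$ such that $\mathcal O_A \simeq \End(E)$. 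Feed $\mathcal O_A$ and $\mathcal O_B$ into \Cref{prop:quaternion-isomorphism} to obtain an explicit isomorphism $\iota : A \to B$ in polynomial time. The image $\iota(\mathcal O_A)$ is a maximal order in the prescribed algebra $B$, and since $\iota$ is an algebra isomorphism, $\iota(\mathcal O_A) \simeq \mathcal O_A \simeq \End(E)$. Outputting a $\Z$-basis of $\iota(\mathcal O_A)$ expressed in the canonical basis $(1,i,j,k)$ of $B$ solves $\MaxOrder_{\mathcal Q}$.

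The main (mild) obstacle is the production of the reference order $\mathcal O_B$: without GRH we have no bound on $q$, but $q$ is supplied by the oracle $\mathcal Q$, and the explicit formulas for maximal orders in $(\tfrac{-p,-q}{\Q})$ depend only on residue classes of $p$ and $q$ modulo small integers, so they can always be written down in $\poly(\log p, \log q)$ time from the oracle's output. Everything else is immediate: \Cref{prop:quaternion-isomorphism} handles the potentially delicate task of translating between the possibly awkward model $A$ returned by the $\MaxOrder$ oracle and the prescribed model $B$, which is precisely the reason the distinction between $\MaxOrder$ and $\MaxOrder_{\mathcal Q}$ becomes immaterial.
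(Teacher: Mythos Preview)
Your proposal is correct and follows essentially the same strategy as the paper: both directions are handled exactly as you describe, with \Cref{prop:quaternion-isomorphism} doing the real work of transporting the maximal order between models. The only difference is in how the reference maximal order $\mathcal O_B \subset (\tfrac{-p,-q}{\Q})$ is produced: you invoke Pizer's explicit formulas, whereas the paper takes the standard order $\Lambda_0 = \Z\langle 1,i,j,k\rangle$ of discriminant $pq$ and applies Voight's general algorithm \cite[Theorem~7.14]{voight_identifying_2013} to extend it to a maximal order, which is efficient because the factorisation $pq$ is known. The paper's route is slightly more robust (it works uniformly for any prime $q$ returned by $\mathcal Q$, without case analysis on residue classes), but the two arguments are otherwise identical.
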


\begin{proof}
Let $E/\F_{p^2}$ be a supersingular elliptic curve. 
We first prove that the \MaxOrder\ problem reduces to the $\MaxOrder_\mathcal Q$ problem. Indeed, if $\mathcal O \subset (\frac{-p,-\mathcal Q(p)}{\mathbb{Q}})$ is a solution of $\MaxOrder_\mathcal Q$, then $(p,\mathcal Q(p),\mathcal O)$ is a solution of \MaxOrder.

We now prove that $\MaxOrder_\mathcal Q$ reduces to \MaxOrder. Let $\mathcal O \subseteq (\frac{-a,-b}{\mathbb{Q}})$ be a solution of $\MaxOrder$, and let $q = \mathcal Q(p)$. 
Let $\Lambda_0$ be the (non-maximal) order spanned by the canonical basis $(1,i,j,k)$ of $(\frac{-p,-q}{\mathbb{Q}})$.
	Thanks to the orthogonality of this basis and since the discriminant of any order $\mathcal{O} = (\alpha_1,\dots,\alpha_4)$ is given by $\disc(\mathcal{O}) = \sqrt{|\det((\langle \alpha_i,\alpha_j \rangle)_{i,j})|}$, the discriminant of $\Lambda_0$ is $pq$. 
	The factorisation of $\disc(\Lambda_0)$ being known, one can construct a maximal order $\mathcal{O}_0 \supseteq \Lambda_0$ in polynomial time with~\cite[Theorem~7.14]{voight_identifying_2013}.

From Proposition~\ref{prop:quaternion-isomorphism}, one can compute an order $\mathcal O'$ in $(\frac{-p,-q}{\mathbb{Q}})$ isomorphic to $\mathcal O$. Then, $\mathcal O'$ is a solution of $\MaxOrder_\mathcal Q$.
\qed\end{proof}

\end{section}

\begin{section}{The Homomorphism Module problem}\label{sec:hommod}

The $\HomModule$ problem has not been formally studied in the previous literature, yet it naturally appears in isogeny-based cryptography.
For instance, the homomorphism module between the commitment curve and the challenge curve in SQISign \cite{AC:DKLPW20} is the space of all possible responses during an identification.
While it is clear that $\Isogeny$ reduces to $\HomModule$, we prove in this section that both are actually equivalent.

The relation between \HomModule\ and \Isogeny\ is reminiscent of the relations between \EndRing\ and \OneEnd.
The latter equivalence has been proved in \cite{EC:PagWes24}.
In this same paper, the authors also proved that \OneEnd\ reduces to \Isogeny, both of these reductions being unconditional.
Therefore, there is a probabilistic polynomial time algorithm solving \EndRing\ given an \Isogeny\ oracle.
In order to take advantage of this fact to solve \HomModule\, we prove that knowing an isogeny between two elliptic curves and their respective endomorphism rings, one can compute efficiently a basis of the homomorphism module between the said curves, \Cref{prop:two-endrings-plus-inseparable-isogeny} and \Cref{prop:two-endrings-plus-isogeny}.
This leads to the main result of the section, \Cref{prop:HomModule-to-Isogeny}, proving that \HomModule\ reduces to \Isogeny.

\begin{lemma}\label{lemma-isogenies-coprime}
Let $\varphi_i : E \to E_i$, for $i \in \{1,2\}$, be separable isogenies such that $\ker \varphi_1 \cap \ker \varphi_2 = 0$. Then,
$$\Hom(E_1,E)\varphi_1 + \Hom(E_2,E)\varphi_2 = \End(E).$$
\end{lemma}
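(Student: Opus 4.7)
The plan is to read the left-hand side as a single left ideal $I \subseteq \End(E)$ via the Deuring correspondence recalled in the preliminaries, and then to show that its reduced norm equals $1$, which forces $I = \End(E)$. Concretely, I would set $J_i := \Hom(E_i,E) \circ \varphi_i = I_{\varphi_i}$; each $J_i$ is a left ideal of $\End(E)$ of reduced norm $\deg \varphi_i$, and the goal is to show that $I := J_1 + J_2$ is all of $\End(E)$.

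The argument then proceeds in three short steps. First, a norm bound: because $J_i \subseteq I$ are left ideals in the maximal order $\End(E)$, one has $\Nrd(I) \mid \Nrd(J_i) = \deg \varphi_i$, hence $\Nrd(I) \mid \gcd(\deg \varphi_1, \deg \varphi_2)$. Second, supersingularity enters: since $E$ has no $\overline{\F}_p$-point of order $p$, every separable isogeny out of $E$ has degree coprime to $p$; applied to the $\varphi_i$ this makes $\Nrd(I)$ coprime to $p$, which in turn forces the Deuring isogeny $\varphi_I$ associated to $I$ to be separable, and hence $\ker \varphi_I = E[I]$ set-theoretically. Third, a kernel chase: since $J_i \subseteq I$ one has $E[I] \subseteq E[J_i]$, and separability of $\varphi_i$ gives $E[J_i] = \ker \varphi_i$; therefore $E[I] \subseteq \ker \varphi_1 \cap \ker \varphi_2 = \{0\}$ by hypothesis. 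Thus $\varphi_I$ has trivial kernel, is an isomorphism, $\Nrd(I) = 1$, and $I = \End(E)$.

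The main subtlety to flag is the characteristic-$p$ pitfall: a left ideal with trivial point-set kernel need not be the full order, as the ideal generated by the Frobenius demonstrates (it has norm $p$ yet no nonzero $\overline{\F}_p$-point in its kernel). The argument above avoids this trap only because the hypothesis is phrased in terms of \emph{separable} $\varphi_i$: this propagates through the norm bound to guarantee $p \nmid \Nrd(I)$, after which the point-set kernel $E[I]$ faithfully records the isogeny $\varphi_I$ and the rest follows cleanly from the Deuring dictionary. Aside from this point, the proof is a routine translation between lattice data in $B_{p,\infty}$ and geometric data about kernels of isogenies.
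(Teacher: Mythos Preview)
Your argument is correct, and it takes a genuinely different route from the paper. The paper never looks at the sum $I=J_1+J_2$ directly; instead it introduces a third separable isogeny $\varphi_3$ with $\ker\varphi_3=\ker\varphi_1+\ker\varphi_2$, identifies $J_1\cap J_2$ with the ideal $I_3$ attached to $\varphi_3$, and then finishes by a pure index count: $|J_1/(J_1\cap J_2)|=|\End(E)/J_2|$ together with the second isomorphism theorem forces $(J_1+J_2)/J_2=\End(E)/J_2$. Your approach is the ``dual'' one: you stay on the geometric side, observe that $\Nrd(I)\mid\gcd(\deg\varphi_1,\deg\varphi_2)$ is prime to $p$, and conclude via $E[I]\subseteq\ker\varphi_1\cap\ker\varphi_2=0$. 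What your route buys is conceptual directness and a clear explanation of where separability is actually used (namely, to rule out the Frobenius-ideal pathology you flag); what the paper's route buys is that, once the identity $J_1\cap J_2=I_3$ and the norm relation $\Nrd(I_3)=\Nrd(J_1)\Nrd(J_2)$ are in hand, the remainder is a two-line group-theoretic counting argument that never needs to invoke $\varphi_I$ or $E[I]$ at all.
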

\begin{proof}
Let $\varphi_3 : E \to E_3$ be a separable isogeny with $\ker \varphi_3 = \ker \varphi_1 + \ker \varphi_2$. Since $\ker \varphi_1 \cap \ker \varphi_2 = 0$, we have $|\ker \varphi_3| = |\ker \varphi_1||\ker \varphi_2|$.
Each 
$$I_i = \Hom(E_i,E)\varphi_i = \{\alpha \in \End(E) \mid \ker \varphi_i \subseteq \ker \alpha \} $$
is a left $\End(E)$-ideal of reduced norm $\Nrd(I_i) = |\ker\varphi_i|$.
We have
\[I_1 \cap I_2 = \{\alpha \in \End(E) \mid (\ker \varphi_1 + \ker \varphi_2) \subseteq \ker \alpha \} = I_3.\]
We have, 
\begin{align*}
|I_1/(I_1 \cap I_2)|^{1/2} &= \frac{\Nrd(I_3)}{\Nrd(I_1)} = \frac{|\ker\varphi_3|}{|\ker\varphi_1|} = \frac{|\ker\varphi_1||\ker\varphi_2|}{|\ker\varphi_1|} \\
&= \Nrd(I_2) = |\End(E)/I_2|^{1/2}.
\end{align*}
By the \emph{second isomorphism theorem},
\[I_1/(I_1 \cap I_2) \cong (I_1+I_2)/I_2 \subseteq \End(E)/I_2,\]
and since the leftmost and rightmost quotients have the same cardinality, we deduce $(I_1+I_2)/I_2 = \End(E)/I_2$, hence $I_1+I_2 = \End(E)$.
\qed\end{proof}

\begin{lemma}\label{lemma:two-isogenies-generate-all-isogenies}
Let $\varphi_i : E \to E_i$, for $i \in \{1,2\}$, be separable isogenies such that $\ker \varphi_1 \cap \ker \varphi_2 = 0$. Then,
for any elliptic curve $E'$,
$$\Hom(E_1,E')\varphi_1 + \Hom(E_2,E')\varphi_2 = \Hom(E,E').$$
\end{lemma}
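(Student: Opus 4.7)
The plan is to deduce this statement directly from the preceding \Cref{lemma-isogenies-coprime} by a simple ``tensor with $\psi$'' trick. There is essentially no new algebraic content to extract: once the identity of $E$ lies in the sum $\Hom(E_1,E)\varphi_1 + \Hom(E_2,E)\varphi_2$, composing on the left propagates this to any target curve $E'$.

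First I would dispose of the trivial inclusion
$$\Hom(E_1,E')\varphi_1 + \Hom(E_2,E')\varphi_2 \subseteq \Hom(E,E'),$$
which follows from the fact that composing an element of $\Hom(E_1,E')$ with $\varphi_1 : E \to E_1$ produces an element of $\Hom(E,E')$, and likewise for the second summand, and $\Hom(E,E')$ is closed under addition.

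For the reverse inclusion, I would apply \Cref{lemma-isogenies-coprime} to obtain $\alpha_1 \in \Hom(E_1,E)$ and $\alpha_2 \in \Hom(E_2,E)$ such that
$$\mathrm{id}_E = \alpha_1 \varphi_1 + \alpha_2 \varphi_2 \in \End(E).$$
Then for an arbitrary $\psi \in \Hom(E,E')$, precomposing this identity with $\psi$ on the left yields
$$\psi = \psi \circ \mathrm{id}_E = (\psi \alpha_1) \varphi_1 + (\psi \alpha_2) \varphi_2,$$
and since $\psi \alpha_i \in \Hom(E_i,E')$ for $i \in \{1,2\}$, this exhibits $\psi$ as an element of $\Hom(E_1,E')\varphi_1 + \Hom(E_2,E')\varphi_2$, which is what we wanted.

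The main obstacle, such as it is, lies entirely in the previous lemma: \Cref{lemma-isogenies-coprime} required a nontrivial quaternionic argument comparing reduced norms of ideals associated with the kernels of $\varphi_1$ and $\varphi_2$. Once that is in hand, the extension to an arbitrary target $E'$ is a one-line consequence, with no need to revisit the Deuring correspondence or to analyse $\Hom(E,E')$ as a module. In particular, separability of the $\varphi_i$ and the coprimality condition $\ker\varphi_1 \cap \ker\varphi_2 = 0$ are only used through their role in \Cref{lemma-isogenies-coprime}.
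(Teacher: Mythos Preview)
Your proof is correct and follows essentially the same approach as the paper: both use \Cref{lemma-isogenies-coprime} to write $1 \in \End(E)$ as an element of $\Hom(E_1,E)\varphi_1 + \Hom(E_2,E)\varphi_2$, and then compose on the left with elements of $\Hom(E,E')$. The only difference is cosmetic: you work element-wise with explicit witnesses $\alpha_1,\alpha_2$, while the paper phrases the same step at the level of modules via $\Hom(E,E') = \Hom(E,E')\End(E)$.
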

\begin{proof}
Clearly $\Hom(E_1,E')\varphi_1 + \Hom(E_2,E')\varphi_2  \subseteq \Hom(E,E')$, so let us prove the second inclusion.
By \Cref{lemma-isogenies-coprime},
\begin{align*}
\Hom(E,E') &= \Hom(E,E')\End(E) \\
&= \Hom(E,E')(\Hom(E_1,E)\varphi_1 + \Hom(E_2,E)\varphi_2)\\
&\subseteq \Hom(E_1,E')\varphi_1 + \Hom(E_2,E')\varphi_2,
\end{align*}
which proves the result.
\qed\end{proof}

\begin{proposition}\label{prop:two-endrings-plus-separable-isogeny}
Let $\varphi : E \to E'$ be a separable isogeny. Then,
$$\mathrm{span}_\mathbb Z(\End(E')\varphi\End(E)) = m\Hom(E,E'),$$
where $m\in\mathbb Z$ is the largest integer dividing $\varphi$.
\end{proposition}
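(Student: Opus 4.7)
My plan. The strategy is to verify the equality prime-by-prime after reducing to the primitive case. First I would write $\varphi = m\psi$ where $\psi \in \Hom(E,E')$ is not divisible by any integer greater than $1$; separability of $\varphi$ forces $p \nmid m$, and $\psi$ itself is separable. Setting $M := \mathrm{span}_\Z(\End(E')\,\psi\,\End(E))$, the claim reduces to proving $M = \Hom(E,E')$. Both sides being full-rank $\Z$-lattices, it suffices to check $M\otimes\Z_\ell = \Hom(E,E')_\ell$ at every prime $\ell$.

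At each prime $\ell$, a standard fact about invertible bimodules over maximal orders is that $\Hom(E,E')_\ell$ is free of rank one as a right $\End(E)_\ell$-module. Fixing a generator $g$ and writing $\psi = g\alpha$ with $\alpha \in \End(E)_\ell$, compatibility of the bimodule structures forces $\End(E')_\ell\cdot g = g\cdot\End(E)_\ell$, so a direct manipulation yields
\[M\otimes\Z_\ell \;=\; g\cdot\bigl(\End(E)_\ell\,\alpha\,\End(E)_\ell\bigr).\]
The problem is thereby reduced to identifying the two-sided ideal of $\End(E)_\ell$ generated by $\alpha$. For $\ell\neq p$, $\End(E)_\ell \cong M_2(\Z_\ell)$ whose two-sided ideals are the $\ell^k M_2(\Z_\ell)$, hence $\End(E)_\ell\,\alpha\,\End(E)_\ell = \ell^{v_\ell(\alpha)}\End(E)_\ell$, where $v_\ell(\alpha)$ denotes the largest $k$ with $\alpha\in\ell^k\End(E)_\ell$, equivalently the largest $k$ with $\psi\in\ell^k\Hom(E,E')$. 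Primitivity of $\psi$ makes this exponent zero, giving $M\otimes\Z_\ell = \Hom(E,E')_\ell$.

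The case $\ell = p$ needs more care, as $B_{p,\infty}\otimes\Q_p$ is a division algebra. I would sidestep a direct two-sided-ideal computation by noting $M\otimes\Z_p \supseteq \psi\End(E)_p$ and proving the stronger statement $\psi\End(E)_p = \Hom(E,E')_p$ via a discriminant count. Both $\End(E)$ and $\Hom(E,E')$ carry the degree form with discriminant $p^2$ (the former as a maximal order of volume $p/4$, the latter via the preliminaries identifying it with the normalized left ideal $I_\psi$, also of volume $p/4$). The degree form on $\psi\End(E)$ equals $\deg\psi$ times that on $\End(E)$, so $\disc(\psi\End(E)) = (\deg\psi)^4 p^2$ and $[\Hom(E,E'):\psi\End(E)] = (\deg\psi)^2$. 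Since $\psi$ is separable and $E$ is supersingular, $\deg\psi = |\ker\psi|$ is coprime to $p$, so the index has trivial $p$-part and the desired local equality holds. Combining the local conclusions gives $M = \Hom(E,E')$, and scaling by $m$ concludes. The principal obstacle is exactly this prime $p$: the matrix-ring analysis that cleanly handles the other primes does not transfer to the division algebra, and the discriminant shortcut critically uses supersingularity to ensure $p\nmid\deg\psi$.
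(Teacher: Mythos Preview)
Your argument is correct, but it follows a genuinely different path from the paper's proof. The paper does not localize. Instead, after writing $\varphi = m\psi$ with $\ker\psi$ cyclic of order $n$, it uses that $\End(E)/n\End(E)$ acts on $E[n]\cong(\Z/n\Z)^2$ as $M_2(\Z/n\Z)$ to produce a single endomorphism $\alpha\in\End(E)$ with $\ker\psi\cap\ker(\psi\alpha)=0$. Two short lemmas (an index count showing $I_{\psi}+I_{\psi\alpha}=\End(E)$, then tensoring with $\Hom(E,E')$) give $\End(E')\psi+\End(E')\psi\alpha=\Hom(E,E')$, and the claim follows. So the paper's proof is constructive and global: it exhibits two elements of $\End(E')\psi\End(E)$ that already span $\Hom(E,E')$, which is exactly what the subsequent algorithm exploits.

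Your approach trades this explicit construction for structure theory: freeness of $\Hom(E,E')_\ell$ over the local maximal order, the classification of two-sided ideals of $M_2(\Z_\ell)$, and a discriminant comparison to handle the ramified prime. This is clean and conceptual, and the step $\End(E')_\ell\,g = g\,\End(E)_\ell$ (which you state as ``compatibility of the bimodule structures'') does hold, since $\Hom(E,E')_\ell$ is simultaneously free of rank one on both sides and any two generators differ by a unit. The price you pay is that your proof is non-constructive: it does not hand you the specific $\alpha$ that the paper uses, nor the two-generator description of $\Hom(E,E')$ that feeds into Algorithm~\ref{alg_redHom2Iso}. Conversely, your discriminant shortcut at $p$ is a nice observation that the paper does not need because its global argument never singles out the ramified prime.
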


\begin{proof}
Clearly $\mathrm{span}_\mathbb Z(\End(E')\varphi\End(E)) \subseteq m\Hom(E,E')$, so let us prove the other inclusion.
Write $\varphi = m\psi$ with $\ker \psi$ cyclic. 
Let $n = \deg(\psi)$. The kernel $\ker \psi \cong \mathbb Z/n\mathbb Z$ is a cyclic subgroup of $E[n]\cong (\mathbb Z/n\mathbb Z)^2$. The action of $\End(E)/n\End(E)$ on $E[n]$ is isomorphic to the action of $M_2(\mathbb Z/n\mathbb Z)$ on $(\mathbb Z/n\mathbb Z)^2$, so there exists an endomorphism $\alpha \in \End(E)$ (of degree coprime with $n$) such that $\ker (\psi) \cap \alpha^{-1}(\ker \psi) = 0$.
In other words, $\ker (\psi) \cap \ker(\psi\alpha) = 0$. Applying Lemma~\ref{lemma:two-isogenies-generate-all-isogenies}, we deduce
$$\End(E')\psi + \End(E')\psi\alpha = \Hom(E,E').$$
We deduce
$$m\Hom(E,E') = \End(E')\varphi + \End(E')\varphi\alpha \subseteq \mathrm{span}_\mathbb Z(\End(E')\varphi\End(E)),$$
which proves the proposition.
\qed\end{proof}

Recall that any isogeny can be factored as $\phi\varphi$ where $\varphi$ is separable, and $\phi$ is purely inseparable ($\phi$ might be an isomorphism).
Then, the following proposition generalized \Cref{prop:two-endrings-plus-separable-isogeny} to arbitrary isogenies.

\begin{proposition}\label{prop:two-endrings-plus-inseparable-isogeny}
Let $\varphi : E \to E''$ be a separable isogeny, and $\phi : E'' \to E'$ a purely inseparable isogeny. Then,
$$L = \mathrm{span}_\mathbb Z(\End(E')\phi\varphi\End(E)) = m\phi\Hom(E,E''),$$
where $m\in\mathbb Z$ is the largest integer dividing $\varphi$.
\end{proposition}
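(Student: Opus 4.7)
The proposition generalises \Cref{prop:two-endrings-plus-separable-isogeny} from separable to arbitrary isogenies, and the natural strategy is to push the purely inseparable factor $\phi$ to the left, reducing to the case already treated. The core technical input is the identity
\[\End(E')\,\phi = \phi\,\End(E'')\]
inside $\Hom(E'',E')$. This is a standard consequence of the structure of purely inseparable isogenies: we may factor $\phi = \iota \circ \pi$, where $\pi \colon E'' \to (E'')^{(p^n)}$ is the $p^n$-Frobenius and $\iota$ is an isomorphism; Frobenius satisfies $\pi\alpha = \alpha^{(p^n)}\pi$ for all $\alpha \in \End(E'')$, where $\alpha \mapsto \alpha^{(p^n)}$ is a ring isomorphism $\End(E'') \to \End((E'')^{(p^n)})$, and conjugation by $\iota$ then identifies $\End((E'')^{(p^n)})$ with $\End(E')$.

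With this identity in hand, I would rewrite
\[L = \mathrm{span}_\mathbb{Z}\bigl(\End(E')\,\phi\,\varphi\,\End(E)\bigr) = \mathrm{span}_\mathbb{Z}\bigl(\phi\,\End(E'')\,\varphi\,\End(E)\bigr) = \phi\cdot\mathrm{span}_\mathbb{Z}\bigl(\End(E'')\,\varphi\,\End(E)\bigr),\]
using that left-multiplication by the fixed isogeny $\phi$ commutes with the $\mathbb{Z}$-span. Then \Cref{prop:two-endrings-plus-separable-isogeny} applied to the separable isogeny $\varphi \colon E \to E''$ immediately gives
\[\mathrm{span}_\mathbb{Z}\bigl(\End(E'')\,\varphi\,\End(E)\bigr) = m\,\Hom(E,E''),\]
where $m$ is the largest integer dividing $\varphi$, and the desired equality $L = m\,\phi\,\Hom(E,E'')$ follows.

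The main (and essentially only) obstacle is establishing the commutation $\End(E')\phi = \phi\End(E'')$ cleanly. Everything else is formal manipulation of $\mathbb{Z}$-spans together with the previous proposition. Since the Frobenius twist argument is standard, I expect the entire proof to be quite short — it is essentially just the observation that a purely inseparable factor can be pulled out of the span without changing its structure, reducing the problem to the separable case already handled.
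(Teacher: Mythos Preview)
Your proposal is correct and matches the paper's proof exactly: the paper simply asserts the identity $\End(E')\phi = \phi\End(E'')$ for purely inseparable $\phi$ and then says the result follows immediately from \Cref{prop:two-endrings-plus-separable-isogeny}. Your Frobenius-twist justification of that identity is more detailed than what the paper writes, but the overall strategy is identical.
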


\begin{proof}
Since $\phi : E'' \to E'$ is purely inseparable, we have $\End(E')\phi = \phi\End(E'')$.
The result then immediately follows from \Cref{prop:two-endrings-plus-separable-isogeny}.
\qed\end{proof}

\begin{proposition}\label{prop:two-endrings-plus-isogeny}
Let $E$, $E'$ and $E''$ be supersingular elliptic curves, and $\phi : E'' \to E'$ a purely inseparable isogeny. 
Given a basis of $\phi\Hom(E,E'')$, one can compute a basis of $\Hom(E,E')$ in polynomial time.
\end{proposition}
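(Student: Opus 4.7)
\emph{Plan.} The plan is to leverage the identity $\hat\phi\circ\phi = [\deg\phi]$ together with the isogeny division algorithm of Proposition~\ref{prop:isogeny-division}. Set $p^k := \deg\phi$.

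First, I would recover a basis of $\Hom(E, E'')$. For each input basis element $b_i = \phi\psi_i$ of $\phi\Hom(E, E'')$, the composition $\hat\phi\circ b_i$ equals $p^k\psi_i$. Composing $\hat\phi$ and $b_i$ in efficient representation and applying Proposition~\ref{prop:isogeny-division} with $n = p^k$ produces $\psi_i$ explicitly, hence a basis $\{\psi_1,\psi_2,\psi_3,\psi_4\}$ of $\Hom(E, E'')$ in efficient representation.

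Second, I would reconstruct $\Hom(E, E')$ by lattice saturation. The key sandwich
\[p^k\Hom(E, E')\ \subseteq\ \phi\Hom(E, E'')\ \subseteq\ \Hom(E, E')\]
(the left inclusion holds because $p^k\alpha = \phi\hat\phi\alpha$ for any $\alpha \in \Hom(E, E')$) shows that $\Hom(E, E')$ differs from $\phi\Hom(E, E'')$ only by a bounded $p$-power index. To expand the given basis into a basis of $\Hom(E, E')$, I would iteratively test whether $\mathbb Z$-linear combinations of the current basis elements are divisible by $p$ as isogenies in $\Hom(E, E')$, using Proposition~\ref{prop:isogeny-division} as the divisibility oracle, and substitute the divided element into the basis upon each success.

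The main obstacle is making this saturation run in polynomial time rather than via an exponential enumeration over $\mathbb F_p^4$ representatives. The resolution is to work modulo $p$: compute the Gram matrix of the current basis from degrees and pairwise degree-sums, reduce it modulo $p$, and use linear algebra over $\mathbb F_p$ to identify the $O(1)$ candidate $\mathbb F_p$-directions in which a division by $p$ could succeed; then call Proposition~\ref{prop:isogeny-division} only on representatives of those directions. Since $\pi^2 = [\pm p]$ on supersingular elliptic curves defined over $\F_{p^2}$, the isogeny $\phi$ factors as $\phi_0 \circ [\pm p^{\lfloor k/2 \rfloor}]$ with $\phi_0$ of degree at most $p$, so only $O(1)$ rounds of saturation are required. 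Hence the overall procedure returns a basis of $\Hom(E, E')$ in polynomial time.
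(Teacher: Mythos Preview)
Your overall plan coincides with the paper's: reduce to $\deg\phi\in\{1,p\}$ by dividing the basis by $p^{\lfloor k/2\rfloor}$, then finish via a Gram-matrix computation modulo $p$. Step~1 (recovering a basis of $\Hom(E,E'')$ via $\hat\phi$) is never used afterwards and can simply be dropped.

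There is, however, a real gap in the saturation step as you describe it. After reducing to $\deg\phi=p$, every element of $L=\phi\Hom(E,E'')$ has degree divisible by $p$, so the Gram matrix of $L$ with respect to the degree form is \emph{identically zero modulo $p$}. It therefore singles out no directions at all: every vector in $L/pL$ lies in its kernel, and testing all $\Theta(p^3)$ projective directions with Proposition~\ref{prop:isogeny-division} is not polynomial in $\log p$. The paper repairs this by working with the \emph{rescaled} integral form $q(\varphi)=\deg(\varphi)/p$ on $L$. One then shows that $\varphi\in p\Hom(E,E')$ is equivalent to $q(\varphi)\equiv 0\pmod p$, and since this locus is already known to be an $\F_p$-linear subspace of $L/pL$ (namely $p\Hom(E,E')/pL$, of dimension~$2$), it must coincide with the radical of $q$ over $\F_p$, i.e.\ the kernel of the Gram matrix of $q$. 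A single linear-algebra computation then yields a basis of $p\Hom(E,E')$, which one divides by $p$; no per-direction oracle testing is needed, and no iteration beyond this one step is required.
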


\begin{proof}
Let $(b_i)_{i=1}^4$ be the provided basis of the lattice $L = \phi\Hom(E,E'')$.
Let $p^n = \deg(\phi)$.
If $n = 2m$ is even, then $\phi = p^m\alpha$ where $\alpha : E'' \to E'$ is an isomorphism. Then $(b_i/p^m)_{i=1}^4$ is a basis of $\alpha\Hom(E,E'') = \Hom(E,E')$.
If $n = 2m + 1$ is odd, one can similarly divide by $p^m$, and without loss of generality we now consider the case where $\phi$ is the $p$-Frobenius.
Consider the quadratic form 
$$q : L \longrightarrow \mathbb Z : \varphi \longmapsto \deg(\varphi)/p.$$
We have
$$X := p\Hom(E,E') = L \cap (p\Hom(E,E')) = \{\varphi\in L \mid q(\varphi) \equiv 0 \bmod p\}.$$
The equation $q(\varphi) \equiv 0 \bmod p$ thus defines an $\mathbb F_p$-linear subspace of $L/pL$. Let us prove that it is equal to (hence can be computed as) the kernel of the Gram matrix over $\mathbb F_p$.
Indeed, the kernel of the Gram matrix over $\mathbb F_p$ is 
\[K = \{\varphi\in L \mid \langle\varphi,\psi\rangle_q \equiv 0 \bmod p\text{ for all }\psi \in L\},\]
with 
\[\langle\varphi,\psi\rangle_q = \frac{1}{2}(q(\varphi + \psi) - q(\varphi) - q(\psi)) = \frac{1}{2p}(\hat \varphi \psi + \hat \psi \varphi),\]
the bilinear form associated to $q$. Since $q(\varphi) = \langle\varphi,\varphi\rangle_q$, we have $K \subseteq X$. 
It only remains to prove that $X \subseteq K$. Let $\varphi = [p]\varphi' \in X$. For any $\psi = \phi\psi' \in L$ we have
\[\langle\varphi,\psi\rangle_q = \frac{1}{2p}([p]\hat \varphi' \phi\psi' + \hat \phi\hat\psi' [p]\varphi') = \frac{1}{2}(\hat \varphi' \phi\psi' + \hat \phi\hat\psi' \varphi').\]
Since both $\phi$ and $\hat \phi$ are inseparable, the (scalar) endomorphism $\hat \varphi' \phi\psi' + \hat \phi\hat\psi' \varphi'$ is inseparable, so it is $0$ modulo $p$. It holds for all $\psi$, so $\varphi \in X$, hence $X = K$.
\qed\end{proof}

\begin{algorithm}
\caption{Reducing \HomModule\ to \Isogeny} \label{alg_redHom2Iso}
\begin{flushleft}
\hspace*{\algorithmicindent} \textbf{ Input: } Two supersingular elliptic curves $E_1$ and $E_2$ and an access to an oracle of \Isogeny.\\
	\hspace*{\algorithmicindent} \textbf{ Output: } Four isogenies $\varphi_i : E_1 \rightarrow E_2$, $i \in \{ 1,...,4 \}$ generating $\Hom(E_1,E_2)$ as a $\Z$-module.
\end{flushleft}
\begin{algorithmic}[1]
	\State $( \alpha_1, \alpha_2, \alpha_3, \alpha_4 ) \gets$ a basis of $\End(E_1)$; \Comment{\cite[Theorem 8.6]{EC:PagWes24}}
	\State $( \beta_1, \beta_2, \beta_3, \beta_4 ) \gets$ a basis of $\End(E_2)$; \Comment{\cite[Theorem 8.6]{EC:PagWes24}}
\State Compute an isogeny $\varphi: E_1 \rightarrow E_2$; \Comment{Using the \Isogeny{} oracle} \label{step:isogeny-from-oracle}
	\State $v \gets v_p(\deg \varphi)$; \Comment{$p$-adic valuation of $\deg \varphi$}
\State $S \gets \{\beta_j \circ \varphi \circ \alpha_i\} \subset \Hom(E_1,E_2)$;
	\State $(\gamma_1, \gamma_2, \gamma_3, \gamma_4) \gets$ a basis of the lattice generated by $S$; \Comment{\cite{buchmann_computing_1989}}
\State $m \gets \left(16 \det(\langle \gamma_i, \gamma_j \rangle)/p^{4v +2}\right)^{1/8}$; \label{step:compute-m}
\State $B_0 \gets (\gamma_1/m, \gamma_2/m, \gamma_3/m, \gamma_4/m)$; \Comment{\Cref{prop:isogeny-division}} \label{step:compute-B}
\State Extract a basis of $\Hom(E,E')$ from $B$; \Comment{\Cref{prop:two-endrings-plus-isogeny}}

\State \Return $B$.

\end{algorithmic}
\end{algorithm}

\begin{proposition}[$\HomModule \text{ reduces to } \Isogeny$]
\label{prop:HomModule-to-Isogeny}
Algorithm \ref{alg_redHom2Iso} is correct and runs in time polynomial in $\log p$ and in the length of the oracle outputs.
\end{proposition}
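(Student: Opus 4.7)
The strategy is to verify that the lattice constructed in step 6 equals $m\phi\Hom(E_1,E'')$ for the appropriate purely inseparable factor $\phi$ and integer $m$, to recover $m$ from a discriminant computation, and finally to strip off the inseparable part.

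First I would factor the oracle's isogeny as $\varphi = \phi\circ\psi$, where $\psi:E_1\to E''$ is separable and $\phi:E''\to E_2$ is purely inseparable of degree $p^v$. Since $E_1$ is supersingular, $E_1[p]=0$, so every separable isogeny out of $E_1$ has degree coprime to $p$; hence $v=v_p(\deg\varphi)$ as read off in step~4 really is the exponent of the inseparable part. Let $m\in\Z$ be the largest integer dividing $\psi$. Then \Cref{prop:two-endrings-plus-inseparable-isogeny} applied to $\phi$ and $\psi$ gives
\[\mathrm{span}_\Z(\End(E_2)\,\varphi\,\End(E_1)) = m\phi\Hom(E_1,E''),\]
so $(\gamma_1,\gamma_2,\gamma_3,\gamma_4)$ computed in step~6 is a $\Z$-basis of the lattice $L=m\phi\Hom(E_1,E'')$.

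Next I would justify the formula for $m$ in step~7 through a volume computation. By the Deuring correspondence, $\Hom(E_1,E'')$ equipped with $\deg$ is isometric to a left $\End(E_1)$-ideal equipped with its normalized quadratic form, which has volume $p/4$ in $B_{p,\infty}$; hence the Gram matrix $G_0$ of $\Hom(E_1,E'')$ satisfies $\det(G_0)=p^2/16$. Left multiplication by $m\phi$ scales degrees by $\deg(m\phi)=m^2p^v$, hence scales the bilinear form by the same factor, so
\[\det\bigl(\langle\gamma_i,\gamma_j\rangle\bigr) = (m^2p^v)^4\cdot\frac{p^2}{16} = \frac{m^8\,p^{4v+2}}{16}.\]
Solving for $m$ recovers exactly the expression in step~7. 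The value $m$ is therefore correctly identified, and \Cref{prop:isogeny-division} applied to each $\gamma_i$ with divisor $m$ produces efficient representations of $\gamma_i/m$, yielding a basis of $\phi\Hom(E_1,E'')$. Finally, \Cref{prop:two-endrings-plus-isogeny} converts this into a basis of $\Hom(E_1,E_2)$, as required.

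For complexity, steps~1--2 invoke the $\EndRing$-to-$\OneEnd$-to-$\Isogeny$ chain of \cite{EC:PagWes24} with $\polylog(p)$ oracle queries; step~3 is a single oracle call; the Gram-matrix entries needed for \cite{buchmann_computing_1989} are computed from degrees of sums of isogenies in efficient representation, and the remaining lattice, division, and inseparable-extraction steps are polynomial by Propositions~\ref{prop:isogeny-division} and~\ref{prop:two-endrings-plus-isogeny}. I expect the only delicate point to be the bookkeeping in step~7: one must correctly track the quartic scaling of $\det(G)$ under left multiplication by $m\phi$ and use the precise value of the volume of $\Hom(E_1,E'')$ provided by Deuring. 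Everything else is either a direct invocation of an earlier result or a standard efficiency argument.
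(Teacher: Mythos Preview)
Your proposal is correct and follows essentially the same route as the paper: factor $\varphi=\phi\circ\psi$ into its purely inseparable and separable parts, invoke \Cref{prop:two-endrings-plus-inseparable-isogeny} to identify the span of $S$ as $m\phi\Hom(E_1,E'')$, recover $m$ via the determinant/volume identity $\det(\langle\gamma_i,\gamma_j\rangle)=m^8p^{4v+2}/16$, divide out $m$, and finish with \Cref{prop:two-endrings-plus-isogeny}. Your added justification that $v=v_p(\deg\varphi)$ really is the inseparable exponent (because $E_1[p]=0$ forces the separable degree to be coprime to $p$) and your explicit scaling argument for the Gram determinant are welcome details that the paper leaves implicit.
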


\begin{proof}
The running time of each step is ensured by the corresponding reference.
In particular, each is polynomial in $\log p$ and in the length of the oracle's outputs.

To prove the correctness of the algorithm, we begin by writing the isogeny $\varphi$, returned by the \Isogeny\ oracle at Step~\ref{step:isogeny-from-oracle}, as $\varphi = \phi \circ \psi$, where $\psi: E_1 \rightarrow E'$ is a separable isogeny and $\phi: E' \rightarrow E_2$ is a purely inseparable isogeny of degree $p^v \geq 1$, with $v := v_p(\deg \varphi)$.
Note that when $v = 0$, the purely inseparable part $\phi$ is an isomorphism.
Let $m$ be the largest integer that divides $\psi$. 
Then, by Proposition~\ref{prop:two-endrings-plus-inseparable-isogeny}, the set $S$ generates the lattice $m\phi \Hom(E_1,E')$.
For a basis $(\gamma_1,\dots,\gamma_4)$ of the lattice generated by $S$, we have
\begin{align*}
	\det(\langle \gamma_i, \gamma_j \rangle) &= \mathrm{Vol}(m\phi\Hom(E_1,E'))^2\\
	&=  (m^4 \deg(\phi)^2\mathrm{Vol}(\Hom(E_1,E')))^2\\ 
	&= m^8 p^{4v + 2}/16,
\end{align*}
thus the computation at Step~\ref{step:compute-m} gives the correct $m$.
In particular, the basis $B = (\gamma_1/m,\dots,\gamma_4/m)$ generates $\phi \Hom(E_1,E')$.
From it, one can compute a basis of $\Hom(E_1,E_2)$ using Proposition~\ref{prop:two-endrings-plus-isogeny}.
\qed\end{proof}

\end{section}

\begin{section}{Finding endomorphisms from quaternions}\label{sec:oneend to maxorder}
In the section we develop an unconditional reduction from the $\OneEnd$ problem to the $\MaxOrder$ problem.
The main difficulty is that without GRH, there is no general way to compute a ``special''
elliptic curve $E_0$ for which both $\End(E_0)$ and its embedding in the quaternions are already known.
Such a curve provides an ``endomorphism/quaternion'' dictionary, and
previous literature on $\MaxOrder$ made critical use of that fact.
Without such an $E_0$, we need to develop a completely different strategy.
To reduce $\OneEnd$ (say on some input $E$) to $\MaxOrder$, we solve the $\MaxOrder$ problem on $E$, giving an order $\mathcal O$, but also on a few of its ``neighbours''. We thereby construct a ``local'' correspondence: a canonical bijection between $\ell$-isogenies from $E$ and ideals of norm $\ell$ in the order $\mathcal O$. This is done in \Cref{alg_solveOneEndOrFingIsoIdBijection}. We then prove that this information can be converted into isomorphisms $\End(E[\ell]) \simeq \mathcal O/\ell\mathcal O$ which all descend from the same implicit isomorphism $\End(E) \simeq \mathcal O$, via \Cref{alg_solveOneEndOrFingIso}. Finally, from this local data, we reconstruct a full ``endomorphism/quaternion'' dictionary $\End(E) \simeq \mathcal O$ in \Cref{alg:oneEndtoMaxOrder}.

At several steps of this process, one might fail to construct the dictionary (for instance when the isomorphism $\End(E) \simeq \mathcal O$ is not unique). In such a scenario, a non-scalar endomorphism of $E$ is revealed, and we have solved $\OneEnd$ anyway. If no such failure occurs, we successfully obtain a dictionary, which in turn reveals (all!) non-scalar endomorphisms of $E$.

\begin{algorithm}[h]
\caption{Computing a bijection between $\ell$-isogenies and ideals, given a $\MaxOrder$ oracle} \label{alg_solveOneEndOrFingIsoIdBijection}
\begin{flushleft}
	\hspace*{\algorithmicindent} \textbf{ Input: } A supersingular elliptic curve $E/\F_{p^2}$, a prime $\ell \neq p$, a list $(G_i)_{i=0}^{\ell}$ of all subgroups of order $\ell$ in $E$, an algebra $B = (\frac{-a,-b}{\mathbb{Q}}) \simeq B_{p,\infty}$, a maximal order $\mathcal O \simeq \End(E)$ in $B$, and access to an oracle for \MaxOrder.\\
\hspace*{\algorithmicindent} \textbf{ Output: } Either an endomorphism $\alpha \in \End(E)\setminus \Z$, or the list $(I_i)_{i=0}^{\ell}$ of left $\mathcal O$-ideals such that $\mathcal O_R(I_i) \simeq \End(E/G_j)$ if and only if $i=j$.
\end{flushleft}
\begin{algorithmic}[1]
\State Compute $\varphi_i : E \to E/G_i$ using V\'elu's formulas for $i = 0, \dots, \ell$;
\If {$E/G_i \simeq E/G_j$ for some $i \not = j$} \label{alg_bij_step_if}
	\State $\gamma \gets$ an isomorphism between $E/G_i$ and $E/G_j$;
	\State \Return $\hat \varphi_j \circ \gamma \circ \varphi_i \in \End(E)\setminus \Z$; \label{alg_bij_step_alpha}
\EndIf\If {$E/G_i \simeq (E/G_j)^{(p)}$ for some $i \not = j$} \label{alg_bij_step_if_frob}
	\State $\gamma \gets$ an isomorphism between $E/G_i$ and $(E/G_j)^{(p)}$;
	\State $\phi_p \gets$ the Frobenius isogeny $\phi_p : E/G_j \to (E/G_j)^{(p)}$;
	\State \Return $\hat \varphi_j \circ \phi_p \circ \gamma \circ \varphi_i \in \End(E)\setminus \Z$; \label{alg_bij_step_alpha_frob}
\EndIf
\For {$i = 0,\dots,\ell$} \label{alg_bij_step_for}
	\State $(B_i,\tilde {\mathcal O}_i) \gets$ an algebra $B_i \simeq B_{p,\infty}$ and a maximal order $\tilde {\mathcal O}_i \subset B_i$;
	\Statex \hspace*{\algorithmicindent} such that $\tilde {\mathcal O}_i \simeq \End(E/G_i)$; \Comment{Using the oracle for \MaxOrder\ on $E/G_i$}
	\State $\mathcal O_i \gets$ an order in $B$ isomorphic to $\End(E/G_i)$; \Comment{Proposition~\ref{prop:quaternion-isomorphism} on $(B,\mathcal O)$ \hspace*{7.9cm} and $(B_i,\tilde {\mathcal O}_i)$}
	\State $J_i \gets I(\mathcal O, \mathcal O_i)$ the connecting ideal;\Comment{\cite[Algorithm~3.5]{kirschmer_algorithmic_2010}}
	\State $\langle \alpha_1, \dots, \alpha_4\rangle \gets$ a Minkowski-reduced basis of $J_i$;\Comment{\cite{nguyen_low-dimensional_2009}}
	\If {$\Nrd(\alpha_2) \leq \ell \Nrd(J_i)$}
		\State {$\alpha \gets$ a non-scalar endomorphism of $E$ of degree at most $\ell^2$; \label{alg_bij_step_exhaustive}}
		\Statex \hspace*{\algorithmicindent} \Comment{For instance, by exhaustive enumeration of isogenies of degree at most $\ell^2$ \hspace*{\algorithmicindent} \hspace*{\algorithmicindent} from $E$}
		\State \Return $\alpha$;
	\EndIf
	\State $I_i \gets J_i\overline \alpha_1/\Nrd(J_i)$;\Comment{The unique left $\mathcal O$-ideal of norm $\ell$ equivalent to $J_i$}
\EndFor
\State \Return $(I_i)_{i=0}^{\ell}$. \label{alg_bij_step_final}
\end{algorithmic}
\end{algorithm}

\begin{lemma}\label{lemma:solveOneEndOrFingIsoIdBijection}
Algorithm~\ref{alg_solveOneEndOrFingIsoIdBijection} is correct and runs in time polynomial in the length of the input, in $\ell$, and in the length of the \MaxOrder\ oracle outputs.
\end{lemma}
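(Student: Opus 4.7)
The plan is to verify correctness by case analysis over the possible exit points of the algorithm, combined with a runtime check. Runtime-wise, V\'elu's formulas, the isomorphism tests between elliptic curves, the at most $\ell+1$ calls to the $\MaxOrder$ oracle, the conversion of \Cref{prop:quaternion-isomorphism}, the connecting-ideal routine \cite[Algorithm~3.5]{kirschmer_algorithmic_2010}, and Minkowski reduction of a rank-$4$ lattice \cite{nguyen_low-dimensional_2009} are each polynomial in the input length and the oracle output length; the exhaustive search at Step~\ref{alg_bij_step_exhaustive} enumerates $O(\ell^2)$ subgroups of $E$ and runs in time polynomial in $\ell$ and $\log p$.

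If the algorithm returns at Step~\ref{alg_bij_step_alpha}, the endomorphism $\alpha = \hat\varphi_j\circ\gamma\circ\varphi_i$ has degree $\ell^2$, so if it were scalar then $\alpha = [\pm\ell]$ and $\hat\varphi_j\circ\gamma\circ\varphi_i = \pm\hat\varphi_i\circ\varphi_i$; cancelling the surjective $\varphi_i$ on the right and dualizing gives $\varphi_j = \pm\gamma\circ\varphi_i$, whence $G_j = \ker\varphi_j = G_i$, contradicting $i\neq j$. At Step~\ref{alg_bij_step_alpha_frob}, the returned endomorphism has degree $p\ell^2$, which is not a square (as $p$ is prime and $\ell\neq p$), so it cannot be a scalar at all.

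For the main loop, the Deuring correspondence provides an isometry between $(J_i,q_{J_i})$ and $(\Hom(E,E_i),\deg)$, where $E_i = E/G_i$. If the test $\Nrd(\alpha_2)\leq\ell\Nrd(J_i)$ succeeds, then $J_i$ contains two $\Z$-linearly independent elements of normalized norm at most $\ell$, which correspond to two $\Z$-linearly independent isogenies $\psi_1,\psi_2: E\to E_i$ of degree at most $\ell$. The composition $\hat\psi_1\circ\psi_2 \in \End(E)$ then has degree at most $\ell^2$ and is non-scalar: indeed, $\hat\psi_1\circ\psi_2 = [n]$ would give $(\deg\psi_1)\psi_2 = n\psi_1$, contradicting the linear independence of $\psi_1$ and $\psi_2$. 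Hence the exhaustive search at Step~\ref{alg_bij_step_exhaustive} is guaranteed to find a non-scalar endomorphism of $E$ of degree at most $\ell^2$.

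If the test fails for every $i$ and we reach Step~\ref{alg_bij_step_final}, then $I_i = J_i\overline{\alpha_1}/\Nrd(J_i)$ is a left $\mathcal O$-ideal equivalent to $J_i$ via right-multiplication by a scalar multiple of $\overline{\alpha_1}\in B^\times$, so $\mathcal O_R(I_i)$ is conjugate to $\mathcal O_i$ in $B$ and in particular isomorphic to $\End(E_i)$; this yields the ``if'' direction of the bijection. For the converse, $\mathcal O_R(I_i)\simeq\End(E/G_j)$ implies $\End(E_i)\simeq\End(E_j)$ as abstract orders in $B_{p,\infty}$, and by Skolem--Noether these embeddings are conjugate in $B_{p,\infty}$, so by the Deuring correspondence $E_i$ and $E_j$ are $\overline{\F_p}$-isomorphic; over $\F_{p^2}$ this forces $E_i\simeq E_j$ or $E_i\simeq E_j^{(p)}$, both excluded for $i\neq j$ by Steps~\ref{alg_bij_step_if} and \ref{alg_bij_step_if_frob}. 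The main obstacle in the proof is precisely this last deduction: one must check that the two upfront collision checks exhaust all the ways in which $\End(E_i)\simeq\End(E_j)$ can occur over $\F_{p^2}$, so as to enforce the ``only if'' direction of the claimed bijection.
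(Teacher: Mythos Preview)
Your argument is correct and tracks the paper's proof closely: the same case split on the three exit points, the same non-scalarity arguments (degree $\ell^2$ forcing equal kernels, degree $p\ell^2$ not a square), and the same use of the Deuring isometry $(J_i,q_{J_i})\cong(\Hom(E,E/G_i),\deg)$ for the loop analysis. Your treatment of the case $\Nrd(\alpha_2)\leq\ell\Nrd(J_i)$ works on the isogeny side ($\hat\psi_1\circ\psi_2$), whereas the paper stays on the quaternion side ($\alpha_1\overline{\alpha_2}/\Nrd(J_i)\in\mathcal O$); these are the same argument through the isometry. Your ``only if'' direction invokes Skolem--Noether plus Deuring, while the paper cites \cite[Lemma~42.4.1]{voight_quaternion_2021}, which packages exactly that.

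There is one point you skip that the paper handles explicitly: you never verify that $\Nrd(I_i)=\ell$. The paper argues this by observing that $\Hom(E,E/G_i)$ contains the degree-$\ell$ isogeny $\varphi_i$, hence $J_i$ contains some $\beta$ with $q_{J_i}(\beta)=\ell$; since $q_{J_i}(\alpha_2)>\ell$ in the else-branch, $\beta$ must be an integer multiple of $\alpha_1$, forcing $q_{J_i}(\alpha_1)=\ell$ and thus $\Nrd(I_i)=\ell$. Strictly speaking the algorithm's output header only asks for the bijection property you prove, but the in-code comment (``the unique left $\mathcal O$-ideal of norm $\ell$'') and, crucially, the downstream use in Algorithm~\ref{alg_solveOneEndOrFingIso} rely on $\Nrd(I_i)=\ell$ so that the reductions $\tilde I_i$ modulo $\ell$ are the proper nontrivial left ideals of $\mathcal O/\ell\mathcal O$. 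You should add this short step.
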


\begin{proof}
The claim that the running time is polynomial follows from the references provided in the comments of Algorithm~\ref{alg_solveOneEndOrFingIsoIdBijection}.

Let us prove that the algorithm is correct. 

First, the endomorphism $\alpha = \hat \varphi_j \circ \gamma \circ \varphi_i$ returned at Line~\ref{alg_bij_step_alpha} is not scalar. Indeed, suppose by contradiction that $\alpha \in \Z$. It is of degree $\ell^2$, so $\alpha = [\ell]$. We thus have $\hat \varphi_j \circ \varphi_j = [\ell] = \hat \varphi_j \circ \gamma \circ \varphi_i$, hence $ \varphi_j = \gamma \circ \varphi_i$, hence $\ker(\varphi_j) = \ker(\varphi_i)$, hence $G_i = G_j$, contradicting that $i \neq j$.

Second, the endomorphism returned at Line~\ref{alg_bij_step_alpha_frob} is not scalar either. Indeed, it has degree $\ell^2p$, which is not a square, so it cannot be scalar.

Note that after Line~\ref{alg_bij_step_alpha_frob}, the rings $\End(E/G_i)$ are pairwise non-isomorphic. Indeed, by \cite[Lemma 42.4.1]{voight_quaternion_2021}, if $\End(E/G_i)\simeq \End(E/G_j)$, then $E/G_i$ is isomorphic to either $E/G_j$ or to its Galois conjugate $(E/G_j)^{(p)}$, in which case the algorithm has terminated before Line~\ref{alg_bij_step_alpha_frob}.
In particular, the codomains of all $\ell$-isogenies from $E$ have pairwise distinct endomorphism rings, hence left ideals of norm $\ell$ in $\mathcal O$ are uniquely identified by the isomorphism class of their right-order.

At each iteration of the for-loop, we consider two cases.
\begin{itemize}
	\item
	\underline{If $\Nrd(\alpha_2) \leq \ell \Nrd(J_i)$}, by definition of Minkowski bases, we have that $$\Nrd(\alpha_1) \leq \ell \Nrd(J_i).$$
As $J_i \bar J _i = \Nrd(J_i)\mathcal{O}$, the element $\alpha_1 \bar{\alpha}_2 / \Nrd(J_i)$ is in $\mathcal{O}$.
Furthermore, $\alpha_1 \bar{\alpha}_2 / \Nrd(J_i)$ is not a scalar; otherwise, $\alpha_1$ and $\alpha_2$ would be linearly dependent, since $\Nrd(\alpha_2) \alpha_1 = \left(\frac{\alpha_1 \bar{\alpha}_2}{\Nrd(J_i)}\right)\Nrd(J_i) \alpha_2$.
Therefore $\mathcal{O}$ (hence also $\End(E)$) contains a non-scalar element of norm 
\[\Nrd\left(\frac{\alpha_1 \bar{\alpha}_2 }{ \Nrd(J_i)}\right) = \frac{\Nrd\left(\alpha_1  \right)\Nrd\left(\alpha_2 \right)}{\Nrd(J_i)^2} \leq \ell^2.\]
Then, a non-trivial endomorphism of degree at most $\ell^2$ can be found in time polynomial in $\log p$ and $\ell$ by exhaustive search.
	\item
	\underline{Otherwise}, $\Nrd(\alpha_2) > \ell \Nrd(J_i)$. Let us prove that in that case, the ideal $I_i = J_i \bar \alpha_1/\Nrd(J_i)$ is the unique left $\mathcal{O}$-ideal of norm $\ell$ with $\mathcal{O}_R(I_i) \simeq \End(E/G_i)$.
	Recall that by the Deuring correspondence, the lattice $\Hom(E,E/G_i)$ (for the quadratic form $\deg$) is isomorphic to $J_i$ (with quadratic form $q_{J_i} : \alpha \mapsto \Nrd(\alpha)/\Nrd(J_i)$). Therefore, there exists an element $\beta \in J_i$ such that $q_{J_i}(\beta) = \ell$. Since $q_{J_i}(\beta) < q_{J_i}(\alpha_2)$, the element $\beta$ must be a multiple of $\alpha_1$: there exists $m \in \Z$ such that $\beta = m\alpha_1$. Since 
	\[\ell = q_{J_i}(\beta) =q_{J_i} (m\alpha_1) = m^2q_{J_i}(\alpha_1)\] and $\ell$ is prime, we must have that $m = 1$ and $q_{J_i}(\alpha_1) = \ell$. This implies that $\Nrd(I_i) = \ell$.
	
The unicity of $I_i$ follows from the previously established fact that left ideals of norm $\ell$ in $\mathcal O$ are uniquely identified by the isomorphism class of their right-order.
\end{itemize}
The unicity of each $I_i$ proves that if Line~\ref{alg_bij_step_final} is reached, we indeed have $\mathcal O_R(I_i) \simeq \End(E/G_j)$ if and only if $i=j$.
\qed\end{proof}

\begin{lemma}\label{lemma:automorphism-determined-by-ideals}
Let $\rho : M_2(\F_\ell) \to M_2(\F_\ell)$ be a ring automorphism. If $\ker(m) = \ker(\rho(m))$ for all $m \in M_2(\F_\ell)$, then $\rho$ is the identity.
\end{lemma}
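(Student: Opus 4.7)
The plan is to use the Skolem--Noether theorem to reduce the problem to a statement about a single element of $\mathrm{GL}_2(\F_\ell)$, then exploit the abundance of kernels to pin that element down to a scalar.

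First I would observe that $M_2(\F_\ell)$ is a central simple $\F_\ell$-algebra with center $\F_\ell$. Since $\rho$ is a ring automorphism, it maps the center to itself, inducing an automorphism of $\F_\ell$; but $\F_\ell$ is the prime field and admits no non-trivial automorphism. Hence $\rho$ is in fact an $\F_\ell$-algebra automorphism, and by Skolem--Noether there exists $g \in \mathrm{GL}_2(\F_\ell)$ with $\rho(m) = g m g^{-1}$ for every $m \in M_2(\F_\ell)$.

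Next, viewing matrices as endomorphisms of $V = \F_\ell^2$, one has $\ker(g m g^{-1}) = g\cdot \ker(m)$. The hypothesis $\ker(m) = \ker(\rho(m))$ therefore translates into $g\cdot \ker(m) = \ker(m)$ for all $m \in M_2(\F_\ell)$. As $m$ ranges over rank-one matrices, $\ker(m)$ ranges over all one-dimensional subspaces of $V$. So $g$ stabilises every line in $V$.

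The key step is to deduce that $g$ is a scalar matrix: for each non-zero $v$, $gv = \lambda_v v$ for some $\lambda_v \in \F_\ell^\times$, and expanding $g(v+w) = gv + gw$ for any two linearly independent $v, w$ forces $\lambda_v = \lambda_w = \lambda_{v+w}$. Hence $g = \lambda I$, whereupon $\rho(m) = \lambda m \lambda^{-1} = m$ for all $m$, so $\rho$ is the identity. No step is particularly delicate here; the only thing to be careful about is the initial reduction from \emph{ring} automorphism to \emph{$\F_\ell$-algebra} automorphism, which is what legitimises the use of Skolem--Noether.
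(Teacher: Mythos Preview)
Your proof is correct and follows essentially the same approach as the paper: invoke Skolem--Noether to write $\rho$ as conjugation by some $g \in \GL_2(\F_\ell)$, show that $g$ stabilises every line of $\F_\ell^2$, and conclude that $g$ is scalar. Your argument for the middle step (via the identity $\ker(gmg^{-1}) = g\cdot\ker(m)$) is in fact slightly more direct than the paper's, which reaches the same conclusion by a contradiction argument using a specially chosen rank-one matrix.
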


\begin{proof}
All automorphisms of $M_2(\F_\ell)$ are inner, so there exists $p \in \GL_2(\F_\ell)$ such that $\rho(m) = p^{-1}mp$ for all $m \in M_2(\F_\ell)$.

First, suppose there exists a line $L \subset \F_\ell^2$ passing through the origin such that $p(L) \neq L$.
Then, there exists $m \in M_2(\F_\ell)$ such that $m(L) = p(L)$ and $m(p(L)) = \{0\}$. We obtain
$$\rho(m)(L) = (p^{-1}mp)(L) = p^{-1}(m(p(L)) = p^{-1}(\{0\}) = \{0\}.$$
By construction, $m$ cannot be invertible or the zero matrix; consequently, both $\ker(\rho(m))$ and $\ker(m)$ have dimension $1$. 
Therefore
$L = \ker(\rho(m)) = \ker(m) = p(L),$
a contradiction. We deduce that for any line $L \subset \F_\ell^2$, we have $p(L) = L$.
Since $p$ fixes all lines in $\F_\ell^2$, all vectors of $\F_\ell^2$ are eigenvectors of $p$, so $p$ is a scalar matrix. In particular, $\rho(m) = p^{-1}mp = m$.
\qed\end{proof}

\begin{corollary}\label{coro:isomorphism-determined-by-ideals}
Consider rings $R \simeq R' \simeq M_2(\F_\ell)$. 
Let $\iota_1,\iota_2 : R \to R'$ be two ring isomorphisms.
If $\iota_1(I) = \iota_2(I)$ for all left-ideals $I$ in $R$, then $\iota_1 = \iota_2$.
\end{corollary}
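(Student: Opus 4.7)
The plan is to reduce the corollary directly to Lemma~\ref{lemma:automorphism-determined-by-ideals} by composing the two isomorphisms. Set $\rho = \iota_2^{-1} \circ \iota_1 : R \to R$, which is a ring automorphism. The hypothesis that $\iota_1(I) = \iota_2(I)$ for every left ideal $I$ of $R$ translates into $\rho(I) = I$ for every left ideal $I$. Fix any identification $R \simeq M_2(\F_\ell)$, so that $\rho$ becomes an automorphism of $M_2(\F_\ell)$ fixing every left ideal setwise.

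To apply the lemma, I need to upgrade this to the statement that $\ker(\rho(m)) = \ker(m)$ for every $m$. The key observation is that for any $m \in M_2(\F_\ell)$, the principal left ideal $M_2(\F_\ell)\, m$ coincides with $\{a \in M_2(\F_\ell) : \ker(a) \supseteq \ker(m)\}$; in particular, two principal left ideals $M_2(\F_\ell)\, m$ and $M_2(\F_\ell)\, m'$ are equal if and only if $\ker(m) = \ker(m')$. Applying $\rho$ to the principal left ideal generated by $m$ gives $\rho(M_2(\F_\ell)\, m) = M_2(\F_\ell)\, \rho(m)$ because $\rho$ is a ring automorphism, while by hypothesis this equals $M_2(\F_\ell)\, m$. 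Thus $\ker(\rho(m)) = \ker(m)$ for every $m$.

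By Lemma~\ref{lemma:automorphism-determined-by-ideals}, $\rho$ is the identity, which gives $\iota_1 = \iota_2$. The only non-routine point is the bijection between left ideals of $M_2(\F_\ell)$ and subspaces of $\F_\ell^2$ via the kernel, but this is standard and follows from the semisimplicity of $M_2(\F_\ell)$, so there is no genuine obstacle.
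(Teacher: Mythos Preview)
Your proof is correct and follows essentially the same approach as the paper: both reduce to Lemma~\ref{lemma:automorphism-determined-by-ideals} via the automorphism $\iota_2^{-1}\circ\iota_1$ (the paper conjugates this by auxiliary isomorphisms $f,g$ to land in $M_2(\F_\ell)$, whereas you simply identify $R$ with $M_2(\F_\ell)$ directly), and both use the description of a principal left ideal $M_2(\F_\ell)m$ as the set of matrices whose kernel contains $\ker(m)$ to deduce $\ker(\rho(m))=\ker(m)$. Your version is slightly more streamlined but otherwise identical in substance.
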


\begin{proof}
Fix two isomorphisms $g : R' \to M_2(\F_\ell)$ and $f : M_2(\F_\ell) \to R$, and define 
\[\rho_i = g \circ \iota_i \circ f : M_2(\F_\ell) \to M_2(\F_\ell).\]
Let $\rho = \rho_2^{-1}\circ \rho_1$. 
Let us prove that $\rho$ satisfies the condition of Lemma~\ref{lemma:automorphism-determined-by-ideals}. Let $m \in M_2(\F_\ell)$. 
Let $J = M_2(\F_\ell)m = \{\tilde m \in M_2(\F_\ell) \mid \ker(m) \subseteq \ker(\tilde m)\}$ be the left-ideal generated by $m$. Then, its image $f(J)$ is a left-ideal in $R$, hence \(\iota_1 ( f (J)) = \iota_2 ( f (J))\), and
\[\rho(J) = f^{-1} \circ \iota_2^{-1} \circ \iota_1 \circ f (J) = f^{-1} \circ \iota_2^{-1} \circ \iota_2 \circ f (J) = J.\]
In particular, $\rho(m) \in \rho(J) = J$, so $\ker(m) \subseteq \ker(\rho(m))$. Since $\rho$ is an isomorphism, the matrices $m$ and $\rho(m)$ have the same rank, hence $\ker(m) = \ker(\rho(m))$.

We can thus apply Lemma~\ref{lemma:automorphism-determined-by-ideals}, and deduce that $\rho$ is the identity.
In particular, we obtain $\rho_1 = \rho_2$, therefore $\iota_1 = \iota_2$.
\qed\end{proof}

\begin{algorithm}
\caption{Computing an isomorphism between quaternions and endomorphisms modulo $\ell$, given a $\MaxOrder$ oracle} \label{alg_solveOneEndOrFingIso}
\begin{flushleft}
\hspace*{\algorithmicindent} \textbf{ Input: } A supersingular elliptic curve $E/\F_{p^2}$, a prime $\ell$, an algebra $B \simeq B_{p,\infty}$, a maximal order $\mathcal O \simeq \End(E)$ in $B$, and access to an oracle for \MaxOrder.\\
\hspace*{\algorithmicindent} \textbf{ Output: } Either an endomorphism $\alpha \in \End(E)\setminus \Z$, or an isomorphism $\lambda : \mathcal{O}/\ell\mathcal{O} \to \End(E[\ell])$.
\end{flushleft}
\begin{algorithmic}[1]
	\State $(G_i)_{i=0}^\ell \gets$ a list of all subgroups of order $\ell$ of the elliptic curve $E$;
	\State Using the oracle access, run Algorithm~\ref{alg_solveOneEndOrFingIsoIdBijection} on the list $(G_i)_{i=0}^\ell$ to obtain either
	\begin{itemize}
		\item a non trivial endomorphism $\alpha \in \End(E) \setminus \Z$,
		\item or a list $(I_i)_{i=0}^\ell$ such that $I_i$ is the unique left $\mathcal{O}$-ideal of norm $\ell$ with $\mathcal{O}_R(I_i) \simeq \End(E/G_i)$;
	\end{itemize}

	\If{ $\alpha \in \End(E) \setminus \Z $ was found } \Return $\alpha$; \EndIf

	\State $g \gets$ an isomorphism from $\End(E[\ell])$ to $M_2(\F_\ell)$;
	\State $f \gets$ an isomorphism $\mathcal{O}/\ell\mathcal{O}$ to $M_2(\F_\ell)$;
	\State $J_i \gets \{\alpha \in \End(E[\ell])| G_i \subset \ker \alpha \}$, for $i \in \{0, \dots, \ell\}$;
	\State $h \gets$ an automorphism from $M_2(\F_\ell)$ to $M_2(\F_\ell)$ such that $h(f(\tilde{I}_i)) = g(J_i)$ where $\tilde I_i$ is the reduction of $I_i$ modulo $\ell$;
	\State $\lambda \gets g^{-1} \circ h \circ f: \mathcal{O}/\ell\mathcal{O} \to \End(E[\ell])$;
	\State \Return $\lambda$.
\end{algorithmic}
\end{algorithm}

\begin{lemma}\label{lemma:endo-image-on-torsion}
Algorithm~\ref{alg_solveOneEndOrFingIso} is correct and runs in time polynomial in the length of the input, in  $\ell$, and in the length of the output of the oracle for \MaxOrder.
Moreover, when it returns an isomorphism $\lambda: \mathcal{O}/\ell\mathcal{O} \stackrel{\sim}{\rightarrow} \End(E[\ell])$, this isomorphism is the reduction modulo $\ell$ of any (and all) isomorphism $\iota: \mathcal{O} \stackrel{\sim}{\rightarrow} \End(E)$.
\end{lemma}

\begin{proof}
Let $\iota: \mathcal{O} \stackrel{\sim}{\to} \End(E)$ be an isomorphism.
Let us prove that the isomorphism $\lambda$ computed by Algorithm~\ref{alg_solveOneEndOrFingIso} is its reduction modulo $\ell$, thereby also proving the correctness of this algorithm.

Since $I_i$ is the unique left $\mathcal{O}$-ideal of norm $\ell$ with $\mathcal{O}_R(I_i) \simeq \End(E/G_i)$, we have that 
$$\iota(I_i) = \{ \alpha \in \End(E) | G_i \subseteq \ker(\alpha) \}.$$
Then, reduced modulo $\ell$, the equality becomes
$\iota_\ell(\tilde{I_i}) =  J_i,$
where $\tilde{I_i}$ is the reduction of $I_i$ modulo $\ell$.
On the other hand, by construction, we have $\lambda(\tilde{I_i}) = J_i$.
Thus, by Corollary~\ref{coro:isomorphism-determined-by-ideals}, the isomorphism $\iota_\ell$ is equal to $\lambda$.

Now, we demonstraste the complexity of the algorithm by giving the complexity of its different steps.
Obtaining the list of subgroups of order $\ell$ of the elliptic curve $E$ can be done by computing a basis of the $\ell$-torsion of $E$.
This takes a polynomial time in $\ell$ and in $\log p$.

One can define the isomorphisms $g,f$ and $h$ by mapping the basis of the domain to a basis of the codomain such that the map verifies the respective required properties.
Since there are $O(\ell^4)$ ordered bases of $M_2(\F_\ell)$, these constructions can be carried out using an exhaustive search.

Finally, by Lemma~\ref{lemma:solveOneEndOrFingIsoIdBijection}, running Algorithm~\ref{alg_solveOneEndOrFingIsoIdBijection} takes a polynomial time in the length of the input, in $\ell$ and in the length of the output of the oracle for \MaxOrder.
All the previously discussed complexities are encompassed within this running time.

\qed\end{proof}

\begin{algorithm}
\caption{Computing a non-scalar endomorphism, given a $\MaxOrder$ oracle} \label{alg:oneEndtoMaxOrder}
\begin{flushleft}
\hspace*{\algorithmicindent} \textbf{ Input: } A supersingular elliptic curve $E/\F_{p^2}$ and an access to an oracle for \MaxOrder.\\
\hspace*{\algorithmicindent} \textbf{ Output: } An endomorphism $\alpha \in \End(E)\setminus\Z$.
\end{flushleft}
\begin{algorithmic}[1]
\State Using the \MaxOrder\ oracle, obtain a maximal order $\mathcal{O} \simeq \End(E)$ in a quaternion algebra $B$;
\State $(\beta_i)_{i=1}^4 \gets$ a Minkowski-reduced basis of $\mathcal{O}$; \Comment{\cite{nguyen_low-dimensional_2009}}
\State $\theta \gets \beta_2$; \Comment{$\theta$ is a shortest non-scalar vector in $\mathcal{O}$}
\State $\ell \gets 1, N \gets 1$;
\While{$N < \Nrd(\theta)$} \label{alg:OneEndtoMaxOrder_step_while}
	\State $\ell \gets$ the next prime after $\ell$ which is coprime to $\Nrd(\theta)$;
	\State $N \gets \ell N$;
	\State $(P_\ell,Q_\ell) \gets $ a basis of the $\ell$-torsion $E[\ell]$; \Comment{\cite[Lemma 6.9]{bank_cycles_2019}}
	
	\State Using the oracle access, run Algorithm~\ref{alg_solveOneEndOrFingIso} on $E$, $\ell$, $B$, and $\mathcal{O}$ to obtain either
	\begin{itemize}
		\item a non trivial endomorphism $\alpha \in \End(E) \setminus \Z$,
		\item or an isomorphism $\iota_\ell: \mathcal{O}/\ell\mathcal{O} \stackrel{\sim}{\rightarrow} \End(E[\ell])$;
	\end{itemize}
	\If{ $\alpha \in \End(E) \setminus \Z $ was found } \Return $\alpha$; \EndIf
	
	\State $(P'_\ell,Q'_\ell) \gets (\iota_\ell(\theta)(P_\ell),\iota_\ell(\theta)(Q_\ell))$;
\EndWhile
\State $\alpha \gets \algoname{IsogenyInterpolation}(\Nrd(\theta),(P_\ell,Q_\ell)_\ell,(P'_\ell,Q'_\ell)_\ell)$; \Comment{\Cref{prop:hd-rep}}
\State \Return $\alpha$.
\end{algorithmic}
\end{algorithm}

\begin{proposition}[$\OneEnd$ reduces to \MaxOrder]\label{prop:OneEndtoMaxOrder}
\Cref{alg:oneEndtoMaxOrder} is correct and runs in probabilistic polynomial time in the length of the instance and in the length of the oracle's output.
\end{proposition}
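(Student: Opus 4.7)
The plan is to use the \MaxOrder\ oracle on $E$ to obtain a maximal order $\mathcal O$ isomorphic to $\End(E)$ via some implicit isomorphism $\iota: \mathcal O \to \End(E)$, identify a short non-scalar element $\alpha \in \mathcal O$ by Minkowski reduction, and then reconstruct $\theta = \iota(\alpha) \in \End(E) \setminus \Z$ by determining its action on enough torsion through \Cref{alg_solveOneEndOrFingIso} and feeding this data into \algoname{IsogenyInterpolation} (\Cref{prop:hd-rep}).

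For correctness, I first argue that $\alpha$ is non-scalar. In any Minkowski-reduced basis $(\beta_i)_{i=1}^4$ of $\mathcal O$, the element $1 \in \mathcal O$ is a shortest non-zero vector (reduced norm $1$), so $\beta_1 = \pm 1$. The vector $\beta_2$ is $\Q$-linearly independent from $\beta_1$ by definition of a basis, hence $\alpha = \beta_2 \in \mathcal O \setminus \Z$ and $\iota(\alpha) \in \End(E) \setminus \Z$. By \Cref{lemma:endo-image-on-torsion}, in each iteration of the while loop either \Cref{alg_solveOneEndOrFingIso} reveals a non-scalar endomorphism of $E$ directly (in which case \Cref{alg:oneEndtoMaxOrder} simply returns it and terminates), or it returns $\lambda_\ell : \mathcal O/\ell\mathcal O \simeq \End(E[\ell])$ equal to the reduction of $\iota$ modulo $\ell$. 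In the latter case $(P'_\ell, Q'_\ell) = (\iota(\alpha)(P_\ell), \iota(\alpha)(Q_\ell))$. Upon exiting the loop, $N$ is coprime to $p \cdot \Nrd(\alpha) = p \cdot \deg(\iota(\alpha))$ by construction, and $N > \deg(\iota(\alpha))$, so \Cref{prop:hd-rep} reconstructs an efficient representation of $\iota(\alpha)$ in polynomial time from the collected data.

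For complexity, the critical estimate is a polynomial bound on $\Nrd(\alpha)$. Since $\mathcal O \subset B$ is a rank-$4$ positive-definite lattice of volume $p/4$, Minkowski's second theorem bounds the product of the successive minima by $O(p)$; using $\lambda_1 = 1$ and $\lambda_2 \leq \lambda_3 \leq \lambda_4$, one obtains $\lambda_2 = O(p^{1/3})$ and therefore $\Nrd(\alpha) = \lambda_2^2 = O(p^{2/3})$. Consequently the while loop terminates after $O(\log p)$ iterations using consecutive primes of size $O(\log p)$, all chosen coprime to $p \cdot \Nrd(\alpha)$ (which has at most $O(\log p)$ prime factors, leaving ample candidates by standard prime-counting estimates). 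Each iteration is polynomial by \Cref{lemma:endo-image-on-torsion}, and the final call to \algoname{IsogenyInterpolation} is polynomial in the input length and the largest prime factor of $N$, itself $O(\log p)$. The total number of queries to the \MaxOrder\ oracle is $\polylog(p)$.

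The main obstacle is precisely the bound on $\Nrd(\alpha)$, which is what guarantees that only polylogarithmically many primes $\ell$ of polylogarithmic size are needed; without such a bound, \algoname{IsogenyInterpolation} might fail to run in polynomial time. Once this lattice-theoretic bound is established, the remainder follows directly from \Cref{lemma:endo-image-on-torsion} and \Cref{prop:hd-rep}, together with elementary prime-counting to handle the coprimality constraints on $N$.
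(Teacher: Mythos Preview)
Your proof follows the same approach as the paper: bound $\Nrd(\alpha)$ via Minkowski's second theorem, conclude the while loop runs $O(\log p)$ times on primes of size $O(\log p)$, and invoke \Cref{lemma:endo-image-on-torsion} and \Cref{prop:hd-rep} for correctness. Your bound $\Nrd(\alpha)=O(p^{2/3})$, obtained from $\lambda_1=1$ and $\lambda_2^3\leq\lambda_2\lambda_3\lambda_4=O(\mathrm{Vol}(\mathcal O))=O(p)$, is in fact sharper than the paper's $\Nrd(\alpha)\leq 2p^2$; either suffices. One small caveat: the inference ``$1$ is a shortest vector, hence $\beta_1=\pm 1$'' is not quite valid. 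When $\mathcal O^\times\supsetneq\{\pm 1\}$ (equivalently $j(E)\in\{0,1728\}$) a Minkowski reduction may select a non-scalar unit as $\beta_1$, in which case $\beta_2$ could be $\pm 1$. The trivial fix is to take $\alpha$ to be whichever of $\beta_1,\beta_2$ is non-scalar; the paper glosses over the same edge case.
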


\begin{proof}
By the references cited in the comments, each step is at most polynomial in $\log p$, in the length of the \MaxOrder\ oracle's output and in $\ell$ (whenever a prime $\ell$ is involved in the computation).
Therefore, to prove the claimed complexity, it remains only to establish bounds on the number of iterations of the loop at line \ref{alg:OneEndtoMaxOrder_step_while} and on the considered primes $\ell$.

By \cite{van_der_waerden_reduktionstheorie_1956}, a Minkowski-reduced basis of a lattice in dimension $4$ reaches all the successive minima.
Hence, $\Nrd(\theta)$ is the second minimum of $\mathcal{O}$, i.e. the first minimum of $\mathcal{O} \setminus \Z$.
Additionally, by Minkowski's second theorem, the product of the successive minima of $\mathcal{O}$ is smaller than $\gamma_4^2 \disc(\mathcal{O})$, where $\gamma_4$ is the Hermite constant in dimension $4$.
Thus we have that $\Nrd(\theta) \leq 2p^2$.
Therefore, by the prime number theorem, the while loop at line \ref{alg:OneEndtoMaxOrder_step_while} has $O(\log p)$ iterations and the largest $\ell$ considered is $O(\log p)$, proving the claimed complexity.

The correctness of \Cref{alg:oneEndtoMaxOrder} comes from the fact that, by \Cref{lemma:endo-image-on-torsion}, for any isomorphism $\iota: \mathcal{O} \stackrel{\sim}{\rightarrow} \End(E)$, the isomorphisms $\iota_\ell$ are its reductions modulo the corresponding $\ell$.
Hence, their evaluation provides the evaluation of $\iota(\theta)$ on the $N$-torsion subgroup $E[N]$ with $N > \Nrd(\theta)$.
\qed\end{proof}

\end{section}
\begin{section}{Reductions to the Endomorphism Ring problem}\label{sec:endring}

We now turn to proving that \Isogeny\ and \MaxEnd\ reduce in polynomial time to \EndRing, thereby completing the equivalence of all problems presented in Figure \ref{fig_generalcase}.
We shall proceed by proving the following sequence of reductions:

\begin{figure}[H]
\center
\begin{tikzpicture}
\node[probbox,minimum width=6.5em] (Isogeny) at (1.5,0) {\Isogeny};
\node[probbox,minimum width=6.5em] (MOER) at (6.5,0) {\MaxEnd};
\node[probbox,minimum width=6.5em] (EndRing) at (11.5,0) {\EndRing};

\draw [->,>=latex] (Isogeny) to node[midway,above] {\small{Proposition \ref{prop:Isogeny-to-MOER}}} (MOER);
\draw [->,>=latex] (MOER) to node[midway,above] {\small{Proposition \ref{prop:MOER-to-EndRing}}} (EndRing);
\end{tikzpicture}
\end{figure}

The main difficulty in reducing the \Isogeny\ problem between two curves to the \MaxOrder\ problem lies in translating a connecting ideal between two maximal orders, which are isomorphic to the endomorphism ring of the curves, into an isogeny between the curves.
Before the break of SIDH \cite{EC:CasDec23,EC:MMPPW23,EC:Robert23}, this process required first finding a more suitable ideal, using KLPT-type algorithms \cite{kohel_quaternion_2014}, which can be proven under GRH \cite{FOCS:Wesolowski21}, and then computing the corresponding isogeny using an elliptic curve with known endomorphism ring as a dictionnary between endomorphisms and quaternions.
Thanks to \Cref{prop:ideal-to-isogeny}, it is now possible to directly compute the isogeny corresponding to the connecting ideal.
However, one still needs to know an elliptic curve with an explicit basis of its endomorphism ring.
This is why, instead of reducing \Isogeny\ to \MaxOrder, we reduce it to the \MaxEnd\ problem, ensuring access to such a curve.

\begin{proposition}[$ \Isogeny \text{ reduces to } \MaxEnd$]
\label{prop:Isogeny-to-MOER}
Given access to a \MaxEnd\ oracle, one can solve the \Isogeny\ problem in time polynomial in the length of its input and in the length of the oracle's output.
\end{proposition}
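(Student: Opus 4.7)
The plan is to apply the $\MaxEnd$ oracle to both $E_1$ and $E_2$ to obtain, for $i\in\{1,2\}$, a basis of $\End(E_i)$ together with a quaternion algebra $B_i \simeq B_{p,\infty}$, a maximal order $\mathcal{O}_i \subset B_i$, and an explicit isomorphism $\varepsilon_i:\End(E_i)\to\mathcal{O}_i$. Since we want to work with both orders inside a common model of the quaternion algebra, the first step is to invoke Proposition~\ref{prop:quaternion-isomorphism} on $(B_1,\mathcal{O}_1)$ and $(B_2,\mathcal{O}_2)$ to compute an isomorphism $\Phi:B_2\to B_1$ in polynomial time, and to set $\mathcal{O}_2':=\Phi(\mathcal{O}_2)$. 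Both $\mathcal{O}_1$ and $\mathcal{O}_2'$ are now maximal orders in the same algebra $B_1$.

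Next, I would compute the connecting ideal $I=I(\mathcal{O}_1,\mathcal{O}_2')$ using a standard algorithm such as \cite[Algorithm~3.5]{kirschmer_algorithmic_2010}. This is a left $\mathcal{O}_1$-ideal with right order $\mathcal{O}_2' \simeq \End(E_2)$. Using the known isomorphism $\varepsilon_1$, I then invoke $\algoname{IdealToIsogeny}$ (Proposition~\ref{prop:ideal-to-isogeny}) on the input $(E_1,\varepsilon_1,I)$ to obtain, in polynomial time, an efficient representation of an isogeny $\varphi_I:E_1\to E_1/E_1[I]$. By the Deuring correspondence, the codomain has endomorphism ring in the same conjugacy class as $\End(E_2)$, so its $j$-invariant lies in the Galois orbit $\{j(E_2),j(E_2)^p\}$.

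The last step is to land on $E_2$ itself, rather than on a curve that is only Galois-conjugate to it. If $j(E_1/E_1[I])=j(E_2)$, I compute an explicit isomorphism $\iota:E_1/E_1[I]\to E_2$ by comparing short Weierstrass models, and return $\iota\circ\varphi_I$. Otherwise $j(E_1/E_1[I])=j(E_2)^p$, so $E_1/E_1[I]$ is $\overline{\F}_p$-isomorphic to $E_2^{(p)}$; in that case I compose $\varphi_I$ with such an isomorphism and then with the $p$-Frobenius $\phi_p:E_2^{(p)}\to E_2^{(p^2)}=E_2$, using that $E_2$ is defined over $\F_{p^2}$.

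Every step runs in time polynomial in $\log p$ and in the length of the $\MaxEnd$ oracle's outputs, so the overall reduction is polynomial time. There is no serious obstacle: the two key ingredients are Proposition~\ref{prop:quaternion-isomorphism}, which lets us reconcile the two possibly different models of $B_{p,\infty}$ returned by the oracle without any factorisation step, and Proposition~\ref{prop:ideal-to-isogeny}, which unconditionally realises connecting ideals as isogenies. The only genuine subtlety is the Galois-conjugate ambiguity of the codomain, which is resolved cleanly by a single Frobenius composition.
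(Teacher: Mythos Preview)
Your proposal is correct and follows essentially the same approach as the paper: call the \MaxEnd\ oracle on both curves, use Proposition~\ref{prop:quaternion-isomorphism} to place both orders in a common algebra, form the connecting ideal via \cite[Algorithm~3.5]{kirschmer_algorithmic_2010}, and realise it as an isogeny with Proposition~\ref{prop:ideal-to-isogeny}. In fact you are more careful than the paper on one point: the paper simply asserts that $\varphi_I:E_1\to E_2$, whereas you correctly observe that a priori the codomain is only determined up to Galois conjugacy (since $\mathcal{O}_R(I)\simeq\End(E_2)$ pins down the $j$-invariant only up to $p$-th power Frobenius) and you resolve it by post-composing with $\phi_p$ if necessary.
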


\begin{proof}
Let $E_1$ and $E_2$ be two supersingular elliptic curves defined over $\mathbb{F}_{p^2}$.
For $i \in \{1,2\}$, a \MaxEnd\ oracle provides a maximal order $\mathcal{O}_i$ in a quaternion algebra $B_i \simeq B_{p,\infty}$ together with an isomorphism $\varepsilon_i: \mathcal{O}_i \stackrel{\simeq}{\longrightarrow} \End(E_i)$.
By proposition \ref{prop:quaternion-isomorphism}, one can compute in polynomial time an isomorphism $\varepsilon: B_2 \stackrel{\simeq}{\longrightarrow} B_1$.
Then $\mathcal{O}'_2 := \varepsilon(\mathcal{O}_2)$ is a maximal order in $B_1$ isomorphic to $\End(E_2)$.
Using \cite[Algorithm 3.5]{kirschmer_algorithmic_2010}, one can compute efficiently the connecting ideal $I = I(\mathcal{O}_1,\mathcal{O}'_2)$.
Finally, by Proposition~\ref{prop:ideal-to-isogeny}, one can compute the isogeny $\varphi_I: E_1 \to E_2$ in polynomial time.
\qed\end{proof}

We reduce \MaxEnd\ to \EndRing\ by adapting the strategy of \cite[Algorithm 6]{EC:EHLMP18}. The freedom to choose a model for $B_{p,\infty}$ in the definition of \MaxEnd\ allows us to eliminate all heuristics in the proof of \cite[Algorithm 6]{EC:EHLMP18}.
We recall that, using \Cref{prop:quaternion-isomorphism}, one can always translate a \MaxEnd\ solution into any target quaternion algebra where a maximal order is already known.

\begin{proposition}[$\MaxEnd \text{ reduces to } \EndRing$]
\label{prop:MOER-to-EndRing}
Given a supersingular elliptic curve $E$ defined over $\F_{p^2}$ together with a basis of its endomorphism ring $\End(E)$, one can solve the \MaxEnd\ instance corresponding to the curve $E$ in time polynomial in $\log p$ and in the length of the elements in the provided basis of $\End(E)$. 
\end{proposition}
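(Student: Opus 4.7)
The plan is to construct the desired model $(\frac{-a,-b}{\mathbb{Q}})$ of $B_{p,\infty}$ directly from the given basis $(\alpha_1,\dots,\alpha_4)$ of $\End(E)$, by locating a $\mathbb{Q}$-basis of $\End(E)\otimes\mathbb{Q}$ of the ``quaternionic'' form $\{1,\alpha,\beta,\alpha\beta\}$, with $\alpha$ and $\beta$ trace-zero and anticommuting. The model $(\frac{-a,-b}{\mathbb{Q}})$ is then forced upon us by setting $a=\Nrd(\alpha)$ and $b=\Nrd(\beta)$, with no further algebraic manipulation required.

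First, I would compute the Gram matrix of $(\alpha_1,\dots,\alpha_4)$ with respect to the reduced norm, using the identity $\Nrd(\alpha)=\deg(\alpha)$ and the polarisation
\[\langle \alpha,\beta\rangle=\tfrac{1}{2}\bigl(\deg(\alpha+\beta)-\deg(\alpha)-\deg(\beta)\bigr),\]
which are polynomially computable on endomorphisms in efficient representation. The identity endomorphism $[1]\in\End(E)$ is available for free, and its $\mathbb{Z}$-coordinates in $(\alpha_i)$ are recovered by solving a linear system using $\Trd(\alpha_i)=2\langle \alpha_i,[1]\rangle$. From this, one obtains an explicit basis of the rank-$3$ sublattice $\End(E)^0\subset \End(E)$ of trace-zero endomorphisms, equipped with the positive definite form $\Nrd$.

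Second, I would apply LLL to $(\End(E)^0,\Nrd)$ to extract a short non-zero $\alpha\in\End(E)^0$; since it is trace-zero and non-zero, $\alpha^2=-a$ with $a=\Nrd(\alpha)>0$. Restricting to the rank-$2$ sublattice of $\End(E)^0$ orthogonal to $\alpha$ for $\langle\cdot,\cdot\rangle$, a further LLL call produces a non-zero $\beta$ with $b=\Nrd(\beta)>0$. Because two trace-zero elements anticommute if and only if they are orthogonal (as $\alpha\beta+\beta\alpha=\Trd(\alpha\beta)=-2\langle\alpha,\beta\rangle$), we get $\alpha\beta=-\beta\alpha$, hence a $\mathbb{Q}$-algebra embedding $(\frac{-a,-b}{\mathbb{Q}})\hookrightarrow \End(E)\otimes\mathbb{Q}$ sending $i\mapsto\alpha$ and $j\mapsto\beta$. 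Dimensions match, so this is an isomorphism and $(\frac{-a,-b}{\mathbb{Q}})\simeq B_{p,\infty}$ automatically. Finally, each $\alpha_i$ is expressed in the basis $\{1,\alpha,\beta,\alpha\beta\}$ by one linear-system solve, producing the required quaternions $\beta_i\in(\frac{-a,-b}{\mathbb{Q}})$.

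The main obstacle one might worry about is precisely the one highlighted in the introduction: LLL offers no control on the factorisations of the scalars $a$ and $b$ that appear. In the earlier reductions of~\cite{EC:EHLMP18,FOCS:Wesolowski21}, the output was forced into a prescribed model of the form $(\frac{-p,-q}{\mathbb{Q}})$, which required locating a trace-zero element of norm exactly $p$ (hence KLPT-style search, hence GRH) and factoring quadratic-form norms. Here, the freedom built into the \MaxEnd\ output to pick \emph{any} model $(\frac{-a,-b}{\mathbb{Q}})$ removes this obstruction entirely; if a caller later needs a standard model, Proposition~\ref{prop:quaternion-isomorphism} converts the output without factoring. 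Polynomial running time then follows from standard LLL bounds together with Minkowski's theorem applied to $\End(E)^0$, which guarantees that the short vectors $\alpha,\beta$ have norm $O(p)$ and hence polynomial bit-size coordinates relative to the input.
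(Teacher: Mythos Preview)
Your proposal is correct and follows essentially the same approach as the paper: compute the Gram matrix of the given basis with respect to $\Nrd=\deg$, pass to an orthogonal $\Q$-basis of the form $(1,\alpha,\beta,\alpha\beta)$ with $\alpha,\beta$ trace-zero and anticommuting, and read off the model $(\frac{\alpha^2,\beta^2}{\Q})$ together with the change-of-basis giving the $\beta_i$. The paper outsources the orthogonalisation step to \cite[Lemmas~4 and~5]{EC:EHLMP18} (plain Gram--Schmidt over $\Q$), whereas you spell it out via LLL on $\End(E)^0$ and its orthogonal complement; the use of LLL is not needed for correctness, but your explicit size bounds are a nice addition.
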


\begin{proof}
Let $(\gamma_i)_{i=0}^4$ be a basis of the endomorphism ring of a supersingular elliptic curve $E$ defined over $\mathbb{F}_{p^2}$.
By \cite[Lemma 4 and Lemma 5]{EC:EHLMP18}, one can compute, in time polynomial in $\log p$ and in $\log \max_{i=1}^4(\deg(\gamma_i))$, a rational invertible linear transformation $F$ sending $(\gamma)_{i=1}^4$ to some orthogonal basis $(1,\alpha,\beta,\alpha \beta)$ of the endomorphism algebra $\End(E)\otimes\Q$.
In particular, $\alpha$ and $\beta$ satisfy $\alpha^2  < 0$, $\beta^2 < 0$ and $\alpha \beta = -\beta \alpha$.
Hence, it is isomorphic to the quaternion algebra $B = (\frac{\alpha^2,\beta^2}{\mathbb{Q}})$, with basis $(1,i,j,ij)$ such that $i^2 = \alpha^2, j^2 = \beta^2$ and $ij = -ji$.
Let $\varepsilon : \End(E) \otimes \Q \stackrel{\sim}{\to} B$ be the explicit isomorphism sending $(1,\alpha,\beta,\alpha\beta)$ to $(1,i,j,ij)$.
By applying $F^{-1}$ to $(1,i,j,ij)$ we get a maximal order $\mathcal{O} = \varepsilon((\gamma_i)_{i=1}^4)$ in $B$ isomorphic to $\End(E)$.
Finally, since $B$ is isomorphic to $\End(E)\otimes\Q$, which is itself isomorphic to $B_{p,\infty}$, the solution we found satisfies all the conditions of the \MaxEnd\ problem. 
\qed\end{proof}
\end{section}

\begin{section}{Worst-case to Average-Case reductions}
\label{sec:worst-case-to-average-case-reduction}

The goal of this section is to prove \Cref{theo:worst-case-to-average-case}: there are worst-case to average-case reductions between all of the listed fundamental problems of isogeny-based cryptography.
Recall that all of these problems take as input one or two elliptic curves, and the average case of these problems corresponds to random instances where the curves follow the stationary distribution (\Cref{def:stationary}).

To prove \Cref{theo:worst-case-to-average-case}, we follow the network of reductions summarized in \Cref{fig_wc2ac}. Once established, this network of reductions implies that $\OneEnd$ (in the worst-case) reduces to the average-case of any other problem. We then conclude from \Cref{theo:everything-is-equivalent}, which establishes that all problems reduce to \OneEnd.

Note that since the stationary distribution is computationally indistinguishable from the uniform distribution (\Cref{rem:indistinguishable}), the reductions also apply to the ``uniform'' version of the average-case problems.

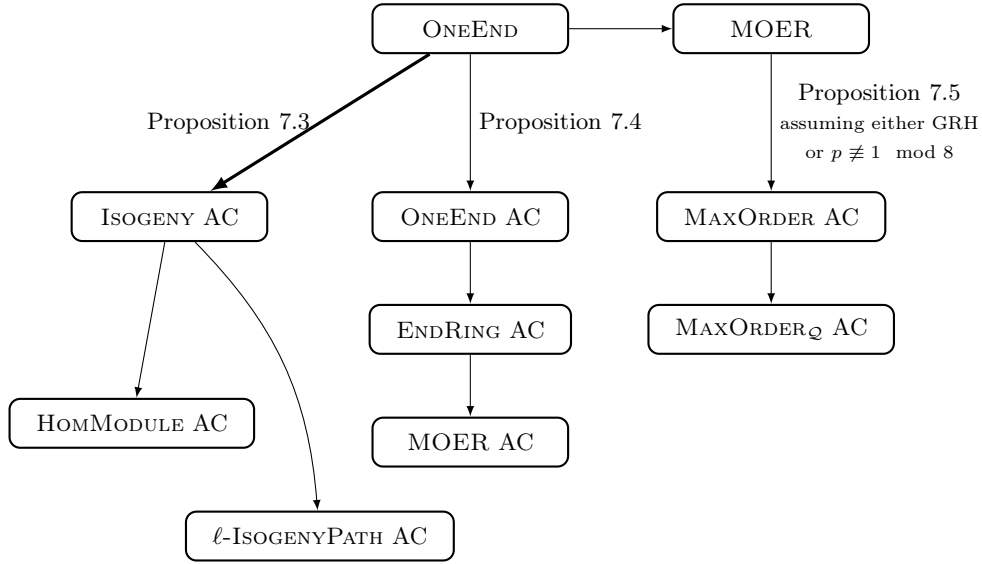
\begin{figure}[h]

\begin{tikzpicture}[box/.style={rectangle,draw},prb/.style={rectangle,draw,rounded corners}]

\node[probbox] (OneEnd) at (-1, 1.5) {\OneEnd};
\node[probbox] (OneEndAC) at (-0.5,-1) {\OneEnd\ AC};
\node[probbox] (lIsogenyAC) at (-2.0,-4.55) {\lIsogeny\ AC};
\node[probbox] (IsogenyAC) at (-4,-1) {\Isogeny\ AC};
\node[probbox] (EndRingAC) at (-0.5,-2.25) {\EndRing\ AC};
\node[probbox] (HomModuleAC) at (-4.5,-3.25) {\HomModule\ AC};
\node[probbox] (MaxOrderAC) at (3,-1) {\MaxOrder\ AC};
\node[probbox] (MaxOrderQAC) at (3,-2.25) {$\MaxOrder_\mathcal{Q}$ AC};
\node[probbox] (MOER) at (3,1.5) {\MaxEnd};
\node[probbox] (MOERAC) at (-0.5,-3.5) {\MaxEnd\ AC};

\draw[->,>=latex, very thick] (OneEnd) -- (IsogenyAC) node[midway,left] {Proposition~\ref{prop:worst-case-OneEnd-to-average-case-Isogeny}};
\draw[->,>=latex] (IsogenyAC) to[bend left = 20] (lIsogenyAC);
\draw[->,>=latex] (OneEnd) -- (OneEndAC) node[midway,right] {Proposition~\ref{prop:oneendWCtoOneEndAC}};
\draw[->,>=latex] (OneEndAC) -- (EndRingAC) node[midway,sloped,above] {};
\draw[->,>=latex] (IsogenyAC) -- (HomModuleAC) node[midway,sloped,above] {};
\draw[->,>=latex] (EndRingAC) -- (MOERAC) node[midway,sloped,above] {};

\draw[->,>=latex] (MaxOrderAC) -- (MaxOrderQAC);

\draw[->,>=latex] (OneEnd) -- (MOER) node[midway,sloped,above] {};
\draw[->,>=latex] (MOER) -- (MaxOrderAC) node[midway,right,align=center] {Proposition~\ref{prop:MOERWCtoMaxOrderAC} \\ \scriptsize{assuming either GRH} \\\scriptsize{or $p \not\equiv 1 \mod 8$} };

\end{tikzpicture}
\caption{Summary of worst-case to average-case reductions. 
Each arrow represents a probabilistic polynomial time reduction. Let ``AC'' label means ``average-case''.
All reductions are one-to-one except for the thick arrow, which requires on average fewer than $3$ oracle calls.
Each arrow is proved in the associated reference.
Arrows without reference are trivial reductions.}
\label{fig_wc2ac}
\end{figure}

\subsubsection{The straightforward reductions.}
In Figure \ref{fig_wc2ac}, every non-labeled arrow denotes a ``trivial'' reduction.
To be more precise, these reductions simply forward an instance for a given problem to an instance for a (at least as hard) variant of this problem.
In particular, the input of the average-case problems involved in those reductions always follows the same distribution.
For example, the input distribution of the average-case \Isogeny\ problem is a pair of two elliptic curves following the stationary distribution which is also the input distribution of the average-case \HomModule.
In addition, the solution we get for the harder problem directly includes a solution to the weaker one.
For instance, solving the \MaxEnd\ problem also yields a solution to the corresponding \EndRing\ instance. 
Therefore, all these reductions are trivial, and we only need to prove the reductions from worst-case \OneEnd\ to the average-case problems to complete the figure.
We shall address each proof of these non-trivial reductions in a dedicated subsection.

\begin{subsection}{\OneEnd\ reduces to average-case \Isogeny}

As there exist solutions of $\Isogeny$ of arbitrarily large degree, we ease the analysis of the reduction by making a bound explicit, as in \cite{EC:PagWes24}.

\begin{definition}[$\Isogeny_\lambda$]
Let $\lambda: \Z_{> 0} \rightarrow \Z_{>0}$ be a function.
The $\Isogeny_\lambda$ problem is a variant of the \Isogeny\ problem where the solution $\varphi$ needs to verify that $\log(\deg(\varphi)) \leq \lambda(\log p)$.
\end{definition}

In \cite{EC:PagWes24}, the authors have proven that one can solve the \OneEnd\ problem in expected polynomial time in $\log(p)$ and $\lambda(\log p)$ by calling on average at most $3$ times an $\Isogeny_\lambda$ oracle.
In this section, we adapt this reduction \cite[Algorithm 4]{EC:PagWes24} to ensure that it produces a \emph{semi average-case} \Isogeny\ instance,
in the sense that \emph{at least one} of the elliptic curves involved follows a distribution indistinguishable from the stationary distribution.
We then prove that the \emph{semi} average-case $\Isogeny_\lambda$ reduces to the average-case $\Isogeny_\lambda$, and deduce the claimed expected polynomial time reduction from \OneEnd\  to $\Isogeny_\lambda$.

\begin{proposition}
\label{prop:OneEndWC-to-almost-IsogenyAC}
Let $c_1,c_2 > 0$, and consider the following variant of \cite[Algorithm 4]{EC:PagWes24} where
\begin{itemize}
	\item the parameter $\varepsilon$ is smaller than $1/p$,
	\item the length of the non-backtracking random walks in the $3$-isogeny graph is $n$, where $n$ satisfies $n \geq c_1\log(p) - c_2\log(\varepsilon)$.
\end{itemize}
There exist absolute computable constants $c_1$ and $c_2$ such that this algorithm computes an endomorphism in expected polynomial time in $\log p$, $\lambda(\log p)$ and $n$ with at most $3$ calls to an $\Isogeny_\lambda$ oracle.
In addition, these calls are done on \emph{semi} average-case instances.
\end{proposition}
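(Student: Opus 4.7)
The plan is to adapt Algorithm~6 of~\cite{EC:PagWes24} almost verbatim, making only the two parameter changes stated in the proposition, and then verify that both the correctness/complexity analysis from~\cite{EC:PagWes24} still goes through and that the instances sent to the $\Isogeny_\lambda$ oracle are indeed \emph{semi} average-case.

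First I would recall the relevant structure of the original algorithm. On input $E$, it samples non-backtracking random walks in the $3$-isogeny graph starting at $E$, producing an auxiliary curve $E'$ (or a pair of curves), then invokes the $\Isogeny_\lambda$ oracle on a pair of supersingular curves at least one of which is the endpoint of such a walk. The oracle's output is composed with the walk(s) to obtain a cycle through $E$, which is a non-scalar endomorphism with probability bounded away from $0$ (independent of $E$). The failure probability analysis of~\cite{EC:PagWes24} yields an expected number of oracle calls at most~$3$.

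The key observation is that the probability-of-success argument in~\cite{EC:PagWes24} only requires a lower bound on the walk length; making the walks longer does not break correctness and the expected number of oracle calls is still at most $3$. So it suffices to choose $n$ large enough that the endpoint of each random $3^n$-walk is $\varepsilon$-close, in total variation, to the stationary distribution $\mu$ on $\SSp$. By Corollary~\ref{cor_randomwalk}, this holds as soon as $n \ge \tau(p,\varepsilon) = O(\log p - \log \varepsilon)$, which gives absolute, explicitly computable constants $c_1$ and $c_2$ with $n \ge c_1 \log p - c_2 \log \varepsilon$. Since we also impose $\varepsilon < 1/p$, the endpoint distribution is within total variation distance $O(1/p)$ of $\mu$, hence statistically (and therefore computationally) indistinguishable from the stationary distribution by Remark~\ref{rem:indistinguishable}. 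Each oracle call thus receives a pair $(E_1,E_2)$ in which at least one of the two curves (namely an endpoint of a long random walk) is distributed indistinguishably from $\mu$, i.e., a \emph{semi} average-case instance.

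For the complexity, I would note that sampling one $3$-isogeny step takes time polynomial in $\log p$, so each walk contributes an additive cost polynomial in $\log p$ and $n$; the rest of the computation (composition of isogenies, extraction of the endomorphism from the cycle) is polynomial in $\log p$ and $\lambda(\log p)$ by the same argument as in~\cite{EC:PagWes24}. Combined with the unchanged expected-oracle-call bound of $3$, this gives the claimed expected polynomial running time in $\log p$, $\lambda(\log p)$ and $n$. The main subtlety to check carefully will be that the original probability analysis of~\cite{EC:PagWes24} really is monotone in the walk length (so that lengthening the walks does not decrease the success probability below a constant); once this is verified, the rest of the argument is routine, and the key new content of the proof is the application of Corollary~\ref{cor_randomwalk} to justify the \emph{semi} average-case property.
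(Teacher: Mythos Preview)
Your overall plan (lengthen the walks, invoke a rapid-mixing bound, keep the success-probability analysis of \cite{EC:PagWes24} intact) is the right one, but there are two concrete gaps in the argument as written.

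First, the curve handed to the $\Isogeny_\lambda$ oracle in \cite[Algorithm~6]{EC:PagWes24} is \emph{not} the endpoint $E'$ of the non-backtracking $3$-walk $\varphi:E\to E'$. A point $S\in E[2]$ is fixed, and the oracle is called on $(E'',E)$ where $E''$ is the codomain of $\nu\circ\varphi$, with $\nu:E'\to E''$ the $2$-isogeny of kernel $\langle\varphi(S)\rangle$. The extra $2$-isogeny $\nu$ depends on the walk $\varphi$, so knowing that $E'$ is close to stationary does not immediately tell you anything about $E''$. The paper handles this with a swap argument: since $\deg\nu=2$ and $\deg\varphi=3^n$ are coprime, $\nu\circ\varphi$ has the same codomain distribution as $\varphi'\circ\nu'$, where $\nu':E\to E/\langle S\rangle$ is a \emph{fixed} $2$-isogeny and $\varphi'$ is a non-backtracking $3$-walk of length $n$ from the fixed curve $E/\langle S\rangle$. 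Only after this swap can one apply a mixing bound to conclude that $E''$ is close to stationary. Your proposal skips this step entirely.

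Second, you invoke Corollary~\ref{cor_randomwalk}, but that corollary is stated for the random $N$-walks of Section~\ref{sec:random-walk}, which are \emph{backtracking} walks (each step is a uniformly random order-$\ell$ subgroup). Algorithm~6 of \cite{EC:PagWes24} uses \emph{non-backtracking} $3$-walks, for which one needs a separate mixing result; the paper uses \cite[Theorem~11]{EC:BCCDFL23} for this purpose. The distinction matters both for correctness of the citation and because the constants $c_1,c_2$ must simultaneously satisfy the lower bound $\lceil 2\log_3 p - 4\log_3\varepsilon\rceil$ required by the analysis of \cite[Theorem~8.6]{EC:PagWes24} and the non-backtracking mixing bound.
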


\begin{proof}
For $p > 6$, the proof of \cite[Theorem 8.6]{EC:PagWes24} still applies to this variant of \cite[Algorithm 4]{EC:PagWes24} $\varepsilon < \frac{1}{6}$, as we can set $c_1$ and $c_2$ such that $n$ is larger than $\lceil 2 \log_3(p) - 4\log_3(\varepsilon) \rceil$, as needed. In particular, the algorithm computes an endomorphism in expected polynomial time in $\log p$, $\lambda(\log p)$ and $n$ with at most $3$ calls to an $\Isogeny_\lambda$ oracle.

We now prove the statement about the distribution of the instances given to the oracle.
Let $E$ be the $\OneEnd$ instance (a supersingular elliptic curve over $\mathbb{F}_{p^2}$).
Let us denote by $\mathcal{O}_\Isogeny$ the $\Isogeny_\lambda$ oracle we have access to.
Following \cite[Algorithm 4]{EC:PagWes24}, a point $S \in E[2]$ is fixed, and each call to the oracle $\mathcal{O}_\Isogeny$ is done on an instance $(E'',E)$ where $E''$ is the codomain of the composition of isogenies $\nu \circ \varphi$ where $\varphi: E \rightarrow E'$ is a non-backtracking random walk in the $3$-isogeny graph of length $n$ and $\nu:E' \rightarrow E''$ is an isogeny of kernel $\langle \varphi(S) \rangle$. As $\nu$ and $\varphi$ have coprime degree, the distribution of $E''$ is the same as the codomain of $\varphi' \circ \nu'$ where $\nu':E \rightarrow E/\langle S \rangle$, and $\varphi' : E/\langle S \rangle \to E''$ is a non-backtracking random walk in the $3$-isogeny graph of length $n$.
By \cite[Theorem 11]{EC:BCCDFL23}, we can set $c_1$ and $c_2$ to ensure that for any $n \geq c_1\log(p) - c_2\log(\varepsilon)$, the distribution of $E''$ is an statistical distance at most $\varepsilon$ from the stationary distribution.
In particular, each call to the oracle $\mathcal{O}_\Isogeny$ is done on \emph{semi} average-case instances.

\qed\end{proof}

\begin{proposition}[$\OneEnd \text{ reduces to average-case } \Isogeny$]
\label{prop:worst-case-OneEnd-to-average-case-Isogeny}
Solving an instance of the worst-case $\OneEnd$ problem can be reduced in expected polynomial time in $\log p$ and $\lambda(\log p)$ to solving $3$ average instances of the $\Isogeny_\lambda$ problem.
\end{proposition}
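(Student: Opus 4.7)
The plan is to upgrade the algorithm of Proposition~\ref{prop:OneEndWC-to-almost-IsogenyAC} by converting each of its semi average-case oracle calls into a genuine average-case call, at the cost of an extra random walk. Let $E$ be the worst-case $\OneEnd$ instance and fix $\varepsilon = 1/p$. Proposition~\ref{prop:OneEndWC-to-almost-IsogenyAC} already provides an algorithm which solves $\OneEnd$ on $E$ in expected polynomial time in $\log p$ and $\lambda(\log p)$, using at most $3$ expected calls to an $\Isogeny_\lambda$ oracle on instances $(E'', E)$ where the marginal distribution of $E''$ is at total variation distance at most $\varepsilon$ from the stationary distribution $\mu$.

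To simulate each semi average-case call $(E'', E)$ by a full average-case one, I would, using \Cref{cor_randomwalk}, pick $N = 2^{\tau(p,\varepsilon)}$ and sample an \emph{independent} random $N$-walk $\psi : E \to E^*$. The pair $(E'', E^*)$ is then fed to the full average-case $\Isogeny_\lambda$ oracle, which returns $\varphi^* : E'' \to E^*$, and the composition $\hat\psi \circ \varphi^* : E'' \to E$ is returned as the response to the semi average-case call in place of the Proposition~\ref{prop:OneEndWC-to-almost-IsogenyAC} reduction.

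The core technical point is that the joint distribution of $(E'', E^*)$ is at total variation distance at most $2\varepsilon$ from the product $\mu \otimes \mu$. Conditionally on the fixed worst-case curve $E$, the random walks producing $E''$ and $E^*$ use disjoint randomness, so the conditional joint factorises. Each of its marginals is at distance at most $\varepsilon$ from $\mu$ (by Proposition~\ref{prop:OneEndWC-to-almost-IsogenyAC} for $E''$ and by \Cref{cor_randomwalk} for $E^*$), so by tensorisation of the total variation distance the conditional joint is $2\varepsilon$-close to $\mu \otimes \mu$. This bound is preserved after marginalisation. Hence the average-case oracle behaves on $(E'', E^*)$ as it would on a truly stationary input, up to a statistically negligible loss.

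The main obstacle I expect is bookkeeping for the degree blow-up introduced by $\psi$. Since $N$ is polynomial in $p$, we have $\log\deg\psi = O(\log p)$, and the returned composition $\hat\psi \circ \varphi^*$ has $\log$-degree only $O(\log p)$ above $\lambda(\log p)$; it is still an efficient representation of polynomial length. Consequently the ambient call in Proposition~\ref{prop:OneEndWC-to-almost-IsogenyAC} continues to run in expected time polynomial in $\log p$ and $\lambda(\log p)$, and the overall reduction uses at most $3$ expected calls to the average-case $\Isogeny_\lambda$ oracle.
\qed
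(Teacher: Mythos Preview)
Your proposal is correct and follows essentially the same approach as the paper: randomise the non-stationary coordinate of each semi average-case instance by an independent short random walk, feed the resulting pair to the average-case oracle, and compose with the dual of the walk to recover an answer for the original query. The paper packages the degree blow-up by formally introducing $\lambda_c(n)=\lambda(n)+2cn+1$ and invoking Proposition~\ref{prop:OneEndWC-to-almost-IsogenyAC} with $\lambda_c$ in place of $\lambda$, whereas you handle it informally, but the content is the same.
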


\begin{proof}
First, we show that a \emph{semi} average instance of $\Isogeny_{\lambda_{c}}$ reduces to an average instance of $\Isogeny_{\lambda}$, where $\lambda_{c}(n) := \lambda(n) + 2cn + 1$, with $c$ the $O$-constant of \Cref{cor_randomwalk}.
Note that $c$ depends only on $p$ and on some positive parameter $\varepsilon$, which we fix to be $\varepsilon := 1/p$.
Let $E_0$ and $E_1$ be two supersingular elliptic curves defined over $\mathbb{F}_{p^2}$ such that $E_1$ is sampled from the stationary distribution.
By \Cref{cor_randomwalk}, one can compute a random walk  $\eta: E_0 \rightarrow E_2$ in the $2$-isogeny graph of length $\lceil \tau(p,\varepsilon) \rceil$ such that the distribution followed by $E_2$ is at total variation distance at most $\varepsilon$ to the stationary distribution.
Then, as $\varepsilon = 1/p$, the two distributions are computationally indistiguishable.

Moreover, since $\tau(p,\varepsilon) = O(\log(p) - \log(\varepsilon))$ and $\varepsilon = 1/p$, we have that $\lceil \tau(p,\varepsilon)\rceil \leq 2c\log(p) + 1$.
Then, the random $2^{\lceil \tau(p,\varepsilon) \rceil}$-walk isogeny $\eta$ verifies that $\log(\deg(\eta)) \leq 2c \log (p) +1$.

A call to an $\Isogeny_\lambda$ oracle on the pair $(E_2,E_1)$, which is indistinguishable from an average-case instance, returns an isogeny $\varphi: E_2 \rightarrow E_1$ such that $\log (\deg(\varphi)) \leq \lambda(\log (p))$.
Then, the isogeny $\psi := \varphi \circ \eta: E_0 \rightarrow E_1$ verifies that $\log(\deg (\psi)) \leq 2c\log(p) + \lambda(\log (p)) + 1 = \lambda_c(\log(p))$.
Thus the isogeny $\psi$ is a solution to the initial $\Isogeny_{\lambda_{c}}$ problem corresponding to $(E_0,E_1)$.
This concludes the proof that a \emph{semi} average instance of $\Isogeny_{\lambda_{c}}$ reduces to an average instance of $\Isogeny_{\lambda}$.

We can now conclude the proof: by Proposition \ref{prop:OneEndWC-to-almost-IsogenyAC} and because $\lambda_{c}(\log (p))$ is polynomial in $\lambda(\log p)$ and $\log (p)$, the \OneEnd\ worst-case problem reduces in expected polynomial time in $\log(p)$ and $\lambda(\log(p))$ to $3$ \emph{semi} average instances of $\Isogeny_{\lambda_{c}}$, which themselves reduce to $3$ average instances of $\Isogeny_\lambda$ in polynomial time as proven above.
\qed\end{proof}

\end{subsection}

\begin{subsection}{\OneEnd\ reduces to average-case \OneEnd}

The reduction presented in this subsection is analogous to the most folkoric methods for self-reducing the \Isogeny\ problem from the worst-case to the average-case, leveraging the rapid mixing properties of isogeny graphs.

\begin{proposition}[$\OneEnd \text{ reduces to average-case } \OneEnd$]\label{prop:oneendWCtoOneEndAC}
Solving an instance of the worst-case \OneEnd\ problem can be reduced to solving an average-case instance of the \OneEnd\ problem in time polynomial in $\log p$ and in the length of the averace-case solution.
\end{proposition}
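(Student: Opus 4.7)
I will use the classical walk-based self-reduction. Given a worst-case instance $E \in \SSp$, I would set $\varepsilon = 1/p$ and $N = 2^{\lceil \tau(p,\varepsilon) \rceil}$, and sample a random $N$-walk $\varphi : E \to E'$ (using V\'elu's formulas along the prime-factor decomposition of $N$). By \Cref{cor_randomwalk}, the curve $E'$ follows a distribution at total variation distance at most $1/p$ from the stationary distribution on $\SSp$, which by \Cref{rem:indistinguishable} is itself $O(1/p)$-close to the uniform distribution. I would then query the average-case \OneEnd\ oracle on $E'$ to obtain a non-scalar endomorphism $\alpha' \in \End(E') \setminus \Z$, and return
\[\alpha \;:=\; \hat\varphi \circ \alpha' \circ \varphi \;\in\; \End(E).\]

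\textbf{Correctness.} I expect the only subtle point is verifying that the returned $\alpha$ is itself not a scalar. Suppose for contradiction that $\alpha = [n]$ for some $n \in \Z$. Composing on the left by $\varphi$ and on the right by $\hat\varphi$ in the identity $\hat\varphi \circ \alpha' \circ \varphi = [n]$, and using $\varphi \circ \hat\varphi = [\deg\varphi]$ in $\End(E')$, one obtains
\[(\deg\varphi)^2\,\alpha' \;=\; \varphi\circ[n]\circ\hat\varphi \;=\; [n\deg\varphi] \quad \text{in } \End(E').\]
Dividing in the $\Q$-algebra $\End(E')\otimes\Q$ yields $\alpha' = [n/\deg\varphi] \in \Q$, contradicting $\alpha' \notin \Z$. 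Hence $\alpha$ is always a valid witness for the worst-case instance $E$.

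\textbf{Complexity and distribution mismatch.} Since $\tau(p,\varepsilon) = O(\log p)$, the walk $\varphi$ has length polynomial in $\log p$, so sampling it and storing its efficient representation takes polynomial time, and composing the oracle output $\alpha'$ with $\varphi$ and $\hat\varphi$ yields an efficient representation of $\alpha$ of polynomial size in $\log p$ and in the length of $\alpha'$. The only mild obstacle is that the re-randomized instance $E'$ does not follow the stationary distribution exactly: but since the statistical gap is at most $1/p$, the oracle's success probability on $E'$ differs from its success probability on a truly stationary instance by at most $1/p$, so a constant expected number of independent retries (with fresh walks) suffices to reach any desired success probability, preserving the overall polynomial running time.
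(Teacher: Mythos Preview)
Your proposal is correct and follows essentially the same approach as the paper: re-randomize the input curve via a random $2$-isogeny walk of length $\lceil \tau(p,1/p)\rceil$, query the average-case oracle on the target, and conjugate the returned endomorphism back through the walk. Your correctness argument (showing $\hat\varphi\circ\alpha'\circ\varphi$ is non-scalar) and your treatment of the $1/p$ statistical gap are slightly more detailed than the paper's, but the content is the same.
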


\begin{proof}
Let $E$ be a supersingular elliptic curve defined over $\mathbb{F}_{p^2}$.
Let $\eta: E \rightarrow E'$ be a random-walk in the $2$-isogeny graph of length $n = \lceil \tau(p,1/p) \rceil$, so that the distribution followed by $E'$ is asymptotically indistinguishable from the stationary distribution by \Cref{cor_randomwalk}.
Then, from a solution $\theta: E' \to E'$ to the average-case \OneEnd\ instance corresponding to the curve $E'$, one obtains a non trivial endomorphism $\hat \eta \circ \theta \circ \eta : E \to E$  which is a solution to the worst-case instance of \OneEnd\ given by $E$.
Indeed, this endomorphism is non trivial; otherwise there exists $n \in \mathbb{Z}$ such that $\hat \eta \circ \theta \circ \eta = n$ so $ [\deg \eta] \circ \theta = n$, thus $\theta$ is a scalar endomorphism which is a contradiction.
\qed\end{proof}

\end{subsection}

\begin{subsection}{\MaxEnd\ reduces to average-case \MaxOrder}

The main challenge in proving unconditional reductions to the \MaxOrder\ problem lies in leveraging the information obtained from the quaternion world to aid in isogeny computation, without having an access to a dictionary between endomorphisms and quaternions.
Indeed, we recall that when $p \equiv 1 \mod 8$, there is currently no known polynomial time algorithm free from GRH that can compute a supersingular elliptic curve defined over $\bar \F_p$ together with an embedding of its endomorphism ring into some quaternion algebra isomorphic to $B_{p,\infty}$.
In Section~\ref{sec:oneend to maxorder}, we address this difficulty by ``locally'' computing this embedding for sufficiently many primes, allowing us to apply the recent \algoname{IsogenyInterpolation} algorithm.
Unfortunately, this method requires solving the \MaxOrder\ problem for elliptic curves which are close to each other in the same isogeny graph.
Thus, it cannot be turned into a reduction to the average-case \MaxOrder\ problem.
For this reason, the reduction presented below requires the construction of a curve $E_0$ for which a solution of \MaxEnd\ is known. This requires either $p \not\equiv 1 \mod 8$, or to assume GRH.

\begin{proposition}[$\MaxEnd \text{ reduces to average-case } \MaxOrder$]\label{prop:MOERWCtoMaxOrderAC}
An instance of the worst-case \MaxEnd\ can be reduced to an average instance of the \MaxOrder\ problem in polynomial time in the length of the input.
If $p \equiv 1 \mod 8$, this result assumes GRH.
\end{proposition}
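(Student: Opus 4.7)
The strategy is: first use a short isogeny walk $\eta : E \to E'$ to bring the worst-case input $E$ to a curve $E'$ statistically indistinguishable from the stationary distribution; next call the average-case \MaxOrder\ oracle on $E'$; and finally exploit a pre-computed \emph{reference curve} $E_0$ with known \MaxEnd\ data $\varepsilon_0 : \mathcal{O}_0 \xrightarrow{\sim} \End(E_0)$ as a ``dictionary'' to convert the quaternionic answer on $E'$ into actual endomorphisms, which are then pushed back through $\eta$ onto $E$. The only step that depends on the characteristic $p$ is the construction of $E_0$: by \Cref{lem:quaternionpq} one needs a small prime $q$ such that $B_{p,\infty} \simeq (\tfrac{-p,-q}{\mathbb Q})$, which is available unconditionally when $p \not\equiv 1 \bmod 8$, so that a standard supersingular curve with $j = 0$ or $j = 1728$ yields $(E_0,\varepsilon_0)$ explicitly; for $p \equiv 1 \bmod 8$ such a small $q$ is known to exist in polynomial time only under GRH.

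With $E_0$ in hand, I would sample a random $N$-walk $\eta : E \to E'$ with $N = 2^{\lceil \tau(p,1/p)\rceil}$ so that, by \Cref{cor_randomwalk}, the distribution of $E'$ is within $O(1/p)$ of the stationary distribution and thus indistinguishable from an average-case \MaxOrder\ instance. A single query of the oracle on $E'$ yields a maximal order $\mathcal{O}' \subset B'$. Applying \Cref{prop:quaternion-isomorphism} to $(\mathcal O_0, \mathcal O')$ produces an explicit isomorphism $B' \to B_{p,\infty}$, pulling $\mathcal O'$ back to a maximal order $\mathcal O'' \subset B_{p,\infty}$. I would then compute the connecting ideal $J = I(\mathcal O_0, \mathcal O'')$ and feed $(E_0,\varepsilon_0,J)$ into \algoname{IdealToIsogeny} (\Cref{prop:ideal-to-isogeny}) to obtain an explicit $\rho : E_0 \to E_J$. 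Since $\End(E_J) \simeq \mathcal O'' \simeq \End(E')$, by \cite[Lemma 42.4.1]{voight_quaternion_2021} the curve $E_J$ is isomorphic to $E'$ or to its Galois conjugate $E'^{(p)}$; comparing $j$-invariants and post-composing with the $p$-Frobenius and an isomorphism if needed yields an explicit $\tilde\rho : E_0 \to E'$.

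What remains is to promote $\tilde\rho$ and $\varepsilon_0$ into a \MaxEnd\ solution on $E'$, and then propagate that through $\eta$ to obtain \MaxEnd\ on $E$. Both transfers are instances of the same subproblem: given an explicit isogeny $\mu : C_1 \to C_2$ and a \MaxEnd\ datum on $C_1$, produce one on $C_2$. For this, the left ideal $I_\mu$ of $\End(C_1)$ corresponding to $\mu$ is computable from $(\mu, \varepsilon_{C_1})$, and for each basis quaternion $\beta$ of its right order $\mathcal O_R(I_\mu) \simeq \End(C_2)$, the product $I_\mu \cdot \beta$ is again a left ideal of the known order on $C_1$. Calling \algoname{IdealToIsogeny} on $I_\mu\cdot\beta$ produces an isogeny $\varphi_\beta : C_1 \to \tilde C_2$ with $\tilde C_2 \simeq C_2$; after identifying the codomain with $C_2$ by a short enumeration of the finitely many isomorphisms consistent with the intended factorisation $\varphi_\beta = \psi_\beta \circ \mu$, the endomorphism $\psi_\beta \in \End(C_2)$ is recovered as $\psi_\beta = \varphi_\beta \circ \hat\mu / \deg\mu$ via \algoname{IsogenyDivision} (\Cref{prop:isogeny-division}). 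The tuple $(\psi_\beta)_\beta$ provides both the basis of $\End(C_2)$ and its embedding into $B_{p,\infty}$ demanded by \MaxEnd.

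The chief obstacle is structural: without GRH, no polynomial-time algorithm is known for producing a supersingular curve together with an explicit quaternionic embedding of its endomorphism ring when $p \equiv 1 \bmod 8$. All other steps are either already established in the paper (rapid mixing in \Cref{cor_randomwalk}, alignment of quaternion models in \Cref{prop:quaternion-isomorphism}, ideal-to-isogeny in \Cref{prop:ideal-to-isogeny}, and isogeny division in \Cref{prop:isogeny-division}) or reduce to a bounded enumeration over Galois conjugates and automorphisms. Every object remains of polynomial size, only a single oracle call is made on an average-case instance, and the total running time is polynomial in $\log p$.
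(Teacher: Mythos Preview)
Your overall strategy matches the paper's: produce a reference curve $E_0$ with known \MaxEnd\ data (unconditional unless $p \equiv 1 \bmod 8$, where GRH is invoked), random-walk $E \to E'$ to reach the stationary distribution, make a single average-case \MaxOrder\ call on $E'$, compute a connecting ideal and convert it via \algoname{IdealToIsogeny} into an isogeny $E_0 \to E'$ (handling the Galois-conjugate ambiguity), then push the \MaxEnd\ datum from $E_0$ to $E'$ and finally from $E'$ back to $E$ along $\hat\eta$. The paper delegates these two push-forward steps to black-box citations (\cite[Algorithm~8]{EC:DLRW24} and \cite[Lemma~7.1]{FOCS:Wesolowski21}), whereas you attempt to spell them out explicitly.

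Your explicit push-forward has one imprecision worth flagging. \algoname{IdealToIsogeny} applied to $I_\mu\beta$ returns $\varphi_\beta : C_1 \to \tilde C_2$ only up to post-composition by an automorphism of the target, and \emph{every} identification $\tilde C_2 \simeq C_2$ yields a valid factorisation through $\mu$ (because $\ker\mu \subseteq \ker\varphi_\beta$). So your ``consistent with the intended factorisation'' criterion does not single out the correct $\psi_\beta$; without coherence across the four basis $\beta$'s, the resulting $\mathbb Z$-linear map $\beta \mapsto \psi_\beta$ need not be multiplicative, hence not a ring isomorphism as \MaxEnd\ demands. This is easily repaired: either test the $\#\Aut(C_2)^4 = O(1)$ possible sign/automorphism combinations against the multiplication table of $\mathcal O_R(I_\mu)$, or bypass the detour through $I_\mu\beta$ and compute each $\psi_\beta$ directly as $\frac{1}{d\,\deg\mu}\,\mu \circ \varepsilon_{C_1}(d\beta) \circ \hat\mu$ with $d\beta \in \mathcal O_{C_1}$, which is essentially what the cited \cite[Algorithm~8]{EC:DLRW24} does.
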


\begin{proof}
By \cite[Proposition 3]{EC:EHLMP18}, one can compute in polynomial-time a curve $E_0$ together with a quaternionic order $\mathcal{O}_0$ and an isomorphism $\varepsilon_0: \mathcal{O}_0 \stackrel{\sim}{\to} \End(E_0)$ (i.e., a solution to $\MaxEnd$). This result assumes GRH in the case where $p \equiv 1 \mod 8$.
We denote by $B$ the quaternion algebra containing $\mathcal{O}_0$.

Let $E$ be a supersingular elliptic curve defined over $\mathbb{F}_{p^2}$.
Let us solve the \MaxEnd\ problem for the elliptic curve $E$ calling once a \MaxOrder\ oracle on an average elliptic curve $E'$.

Let $N= \prod_{i=1}^n\ell_i$, where $\ell_i$ is the $i$-th smallest prime number, and let $\eta: E \to E'$ be a random $N$-walk.
We define $N$ and $\eta$ this way to then use \cite[Lemma 7.1]{FOCS:Wesolowski21} efficiently on $\hat \eta$.
By \Cref{cor_randomwalk}, by choosing $n$ such that $\log(N) \geq \tau(p,1/p)$, one can ensure that $E'$ follows a distribution statistically indistinguishable from the stationary distribution.
In particular, as $\tau(p,1/p) = O(\log p)$, by the prime number theorem, it is sufficient to consider primes up to some $\ell_n = O(\log(p))$.
Thus, the computation of $\eta$ takes a time polynomial in $\log p$.

Let us now solve the worst-case \MaxEnd\ instance corresponding to $E$ from a solution $\mathcal{O}'$ to the average instance given by $E'$.
Thanks to Proposition~\ref{prop:quaternion-isomorphism}, one can assume that $\mathcal{O}'$ is a maximal order in the quaternion algebra $B$.

First, we compute a connecting ideal $I$ between $\mathcal{O}'$ and $\mathcal{O}_0$, \cite[Algorithm 3.5]{kirschmer_algorithmic_2010}, and the corresponding isogeny using Proposition~\ref{prop:ideal-to-isogeny}.
By running \cite[Algorithm 8]{EC:DLRW24}, see also \cite[Algorithm 4]{EC:EHLMP18}, where the final division is done using Proposition~\ref{prop:isogeny-division}, one obtains an isomorphism between $\End(E')$ and a maximal order $\mathcal{O}'$ in polynomial time.
In particular, the endomorphisms in the basis are efficiently represented.
Then by using \cite[Lemma 7.1]{FOCS:Wesolowski21} on the isogeny $\hat \eta$, one can compute, in polynomial time in $\log p$, the corresponding left $\mathcal{O}'$ ideal $I_{\hat \eta}$ such that $\mathcal{O}_R(I) = \mathcal{O}$.
Hence, using \cite[Algorithm 8]{EC:DLRW24} again, we obtain an explicit isomorphism between $\End(E)$ and a maximal order in $B$.
\qed\end{proof}

\end{subsection}

\begin{subsection}{Proof of Theorem \ref{theo:worst-case-to-average-case}}

We can now turn to the proof of the main theorem of this section.

\begin{proof}[of Theorem \ref{theo:worst-case-to-average-case}]
Let $P$ and $Q$ be two problems chosen from the problems \lIsogeny, \Isogeny, \EndRing, \OneEnd, \MaxEnd, \MaxOrder, $\MaxOrder_\mathcal{Q}$, and \HomModule.

By Theorem~\ref{theo:everything-is-equivalent}, if $P$ is not \lIsogeny, we have a probabilistic polynomial time reduction from $P$ in the worst-case to \OneEnd\ in the worst-case.
Otherwise, assuming the generalised Riemann hypothesis, there is a probabilistic polynomial time reduction from \lIsogeny\ in the worst-case to \OneEnd\ in the worst-case by \cite{FOCS:Wesolowski21} and \cite{EC:PagWes24}.
Then using the results summarized in \Cref{fig_wc2ac}, there is a probabilistic polynomial time reduction from \OneEnd\ in the worst-case to $Q$ in the average-case.
\qed\end{proof}

\end{subsection}

\end{section}

\bibliographystyle{alpha}
\bibliography{refs,abbrev0,crypto}

\end{document}